\let\old@makecaption=\@makecaption
\let\@makecaption=\old@makecaption
 \newcommand{\ad}{\text{ad}}
 \newcommand{\hc}{\text{h.c.}}
 \newtheorem{theorem}{Theorem}[section]
\begin{document}

\title{Quantum state engineering by steering in the presence of errors}
\author{E. Medina-Guerra}
\affiliation{Department of Condensed Matter Physics, Weizmann Institute of Science, Rehovot 7610001, Israel}
\author{Parveen Kumar}
\affiliation{Department of Condensed Matter Physics, Weizmann Institute of Science, Rehovot 7610001, Israel}
\affiliation{Department of Physics, Indian Institute of Technology Jammu, Jammu 181221, India}
\author{I. V. Gornyi}
\affiliation{\mbox{Institute for Quantum Materials and Technologies, Karlsruhe Institute of Technology, 76021 Karlsruhe, Germany}}
\affiliation{\mbox{Institut für Theorie der Kondensierten Materie, Karlsruhe Institute of Technology, 76128 Karlsruhe, Germany}}

\author{Yuval Gefen}
\affiliation{Department of Condensed Matter Physics, Weizmann Institute of Science, Rehovot 7610001, Israel}
\date{\today}

\begin{abstract}
Quantum state engineering plays a vital role in various applications in the field of quantum information. Different strategies, including drive-and-dissipation, adiabatic cooling, and measurement-based steering, have been proposed for state generation and manipulation, each with \sout{its} upsides and downsides. Here, we address a class of measurement-based state engineering protocols where a sequence of generalized measurements is employed to steer a quantum system toward a desired (pure or mixed) target state. Previously studied measurement-based protocols relied on idealized procedures and avoided exploration of the effects of various errors stemming from imperfections of experimental realizations and external noise. We employ the quantum trajectory formalism to provide a detailed analysis of the robustness of these steering protocols against multiple classes of errors. We study a set of realistic errors that can be classified as dynamic or static, depending on whether they do or do not remain unchanged while running the protocol.
More specifically, we investigate the impact of the erroneous choice of detector-system coupling, erroneous re-initialization of the detector state following a measurement step, fluctuating steering directions, and environmentally induced errors in the detector-system interaction. We show that the protocol remains fully robust against the erroneous choice of detector-system coupling parameters and presents reasonable robustness against other types of errors. Our analysis employs various quantifiers such as fidelity, trace distance, and linear entropy to characterize the protocol's robustness and provide analytical results for these quantifiers against various errors. We introduce averaging hierarchies of stochastic equations describing individual quantum trajectories associated with detector readouts. Subsequently, we demonstrate the commutation between the classical expectation value and the time-ordering operator of the exponential of a Hamiltonian with multiplicative white noise, as well as the commutation of the expectation value and the partial trace with respect to detector outcomes. Our ideas are implemented and demonstrated for a specific class of steering platforms, addressing a single qubit.

\end{abstract}

\maketitle

\section{Introduction}
The field of quantum information processing and quantum simulations has progressed rapidly in the past few years, from theoretical studies to experimental setups where toy systems perform simple but practical tasks. Preparing a quantum system in a specific state is essential in most such tasks. Broad schemes for quantum state engineering include the application of feedback following a projective measurement, a thermalization process where the quantum system is cooled down to its ground state by coupling it to a cold reservoir, and the so-called quantum annealing. These approaches have challenges in the following sense: The first involves a feedback step that increases the circuit complexity. The second scheme requires a reservoir placed at nearly zero temperature to reduce thermal excitations, which is challenging, especially for larger systems. Quantum annealing calls for modifying a trivial Hamiltonian (for which the ground state is trivial and easy to prepare) adiabatically to a non-trivial Hamiltonian whose ground state could be used as a resource for quantum information protocols. In this context, we note that the performance of this approach is determined by the smallest gap encountered during the evolution. Indeed, some of the avoiding crossing gaps in its many-body spectrum were shown to be exponentially small \cite{yuv1PhysRevA.80.062326,yuv2altshuler_anderson_2010,yuv3PhysRevLett.101.147204,yuv4knysh_zero-temperature_2016,yuv5PhysRevLett.101.170503}.

Recently introduced measurement-based quantum-state engineering protocols \cite{roy2020measurement,kumar_engineering_2020-1,kumar2022optimized,herasimenko,hoke2023quantum,volya2023state}  overcome the challenges mentioned above (see also Refs. \cite{pechen2006quantum,roa2006measurement,roa_quantum_2007,ashhab2010control} for early related ideas). These steering \footnote{Throughout this paper, we use the word \emph{steering} as the name of a process that leads the quantum system starting from an arbitrary initial state to the predesignated target state. This should not be confused with \emph{steering} from the theory of quantum measurements and quantum information, which defines a special kind of non-local correlations (see, e.g., \cite{wiseman_steering_2007,uola2020quantum})
} protocols employ a sequence of generalized measurements to ``steer'' a quantum system toward the predesignated target state. A generalized measurement comprises two steps
\cite{von2018mathematical, wiseman2009quantum,wheeler2014quantum}: (i) coupling the quantum system to an ancillary quantum system (also referred to as a \emph{detector}) by means of an interaction Hamiltonian, resulting in the unitary evolution of the joint state of the system and the detector, and (ii) a projective measurement of the detector, which disentangles the joint state and induces a measurement backaction to the system state. Usually, measurement-induced backaction is considered an undesired effect, as the primary purpose of quantum measurement is to extract information about the quantum state. Following a contrarian paradigm, the measurement-induced state engineering protocols utilize this measurement backaction in a controlled manner to guide the system toward the desired target state. Note that the convergence to the target state in the measurement-induced steering protocols is achieved by the measurement only, as compared to the drive-and-dissipation protocols
\cite{kraus2008preparation,roncaglia2010pfaffian,diehl2011topology,pechen2011engineering,murch2012cavity,leghtas2013stabilizing,shankar2013autonomously,lin2013dissipative,liu2016comparing,goldman2016topological,lu2017universal,huang2018universal,horn2018quantum}, where the relaxation is due to an uncontrolled dissipative environment.

Apart from the built-in challenges of circuit complexity and controlled dissipative environments, there are external noises arising from the imperfect isolation of a quantum system from its surroundings: quantum information processors are susceptible to such noisy environments. As long as these sources of imperfection remain uncontrolled, as is the current state of technological development, it is vital to understand the dynamics of quantum systems in the presence of external noises. This brought to life the concept of the \emph{Noisy Intermediate-Scale Quantum} (NISQ) era \cite{preskill2018quantum,bharti2022noisy}, which refers to quantum technologies based on devices composed of hundreds of noisy qubits. Moreover, within the NISQ paradigm, inevitable noise can be employed to achieve particular goals in quantum device functionality. Thus, analyzing the effects of noise and associated errors in quantum state engineering is one of the most pressing issues of the NISQ era.

Looking beyond the perspective of the NISQ era, one may design an appropriate quantum error correction scheme to protect a quantum state against environmentally induced errors. The quantum error correction strategy is to encode quantum information in a larger Hilbert space redundantly. This will ensure that the logical qubits experience a significantly lower error rate than what the physical qubits do
\cite{lidar2013quantum,nielsen2002quantum}. Various attempts in different experimental setups to implement error correction procedures have been discussed in the literature, e.g., liquid \cite{cory1998experimental,knill2001benchmarking,boulant2005experimental} and solid-state NMR \cite{moussa2011demonstration} trapped ions \cite{chiaverini2004realization,schindler2011experimental}, photon modes \cite{pittman2005demonstration}, superconducting qubits \cite{google2023,reed2012realization,kelly2015state}, and NV centers in diamond \cite{waldherr2014quantum,taminiau2014universal}. In general, it is instructive to understand the error mechanism as much as possible for practical applications before designing any error correction scheme.

    Our work investigates the impact of the different classes of errors that may affect a measurement-based state steering protocol by various experimental imperfections. We distinguish between two types of error, static and dynamic, depending on whether they change during the run time of the protocol. For most of this paper, we base our analysis on the quantum trajectory formalism \cite{barchielli1991measurements,wiseman2009quantum,patel2017weak} instead of describing the influence of the errors on the steered states via completely positive trace-preserving (CPTP) maps \cite{wiseman2009quantum}. The motivation behind implementing stochastic quantum trajectories lies in the amount of information about the system, which is lost once averages are performed.  This information is important, for example, for observables that are non-linear in the system density matrix. In particular, measurement-induced entanglement entropy transitions are captured by keeping the individual quantum trajectories and are invisible upon an ensemble average \cite{zerba,turkeshi2021measurement}. Furthermore, the trajectory-resolved evolution is the cornerstone for active-decision steering protocols \cite{herasimenko}, where the measurement outcomes are used to navigate the system toward the target state. Thus, although we eventually average over quantum trajectories, our formalism paves the way for the above-mentioned applications. 
    
    With this at hand, we derive novel stochastic differential equations that govern the system dynamics in the presence of errors. We employ several quantifiers: fidelity, trace distance, and linear entropy, to characterize the robustness of the state engineering protocols against errors.  In order to demonstrate the applicability of this approach, we address here the simplest example: steering a single qubit to a predesignated target state. We provide analytical results for the robustness quantifiers and show excellent agreement between the analytical expressions and the numerical results obtained from simulating stochastic quantum trajectories.

Specifically, we analyze the robustness of a steering protocol against errors due to erroneously chosen detector-system coupling, wrongly prepared detectors, fluctuating steering directions, environmentally induced errors in the detector-system interaction Hamiltonian, and fluctuating measurement directions. Our analysis of environmentally-induced fluctuations in the detector-system interaction Hamiltonian can mimic errors due to a fluctuating background field in an experimental setup and generalizes the Langevin stochastic Schrödinger equation [cf. Eq.~\eqref{eq:esto1}]. Consequently, we demonstrate the non-obvious fact that the average detector outcomes and stochastic white noise commute. This argument is based on the commonly used yet not formally proven fact of the commutativity of the time-ordering operator with the average over stochastic noise (or average over noise realizations). These observations (proven in Appendices \ref{sec:commutation_expectation} and \ref{sec:lastproof}) facilitate the derivation of three novel stochastic master equations, alluding to different hierarchies of averaging over stochastic processes of different origin [cf. Eqs.~\eqref{eq:sde1}, \eqref{eq:ssde5} and \eqref{eq:sde11}].

We thus derive stochastic differential equations governing the system's dynamics in specific scenarios.
From this perspective, this part of our analysis generalizes the model of repeated interactions of a system with a set of detectors \cite{attal_repeated_2006,attal2010stochastic} by including error and noise-induced stochasticity.
We show further that the steering protocol remains fully robust against errors due to erroneously chosen detector-system coupling parameters and erroneously chosen measurement directions. Compared to these errors, robustness against other errors is more moderate.

The paper is organized as follows. In Sec.~\ref{sec:ideal_protocol_and_setup}, we revisit the theme of state engineering  protocol introduced in Ref.~\cite{roy2020measurement}, ignoring any errors (``ideal steering''). We provide a detailed description of the measurement model and derive the stochastic master equation that describes the individual quantum trajectories followed by the steered system. Next, starting with the paradigm of blind measurement \footnote{The term \emph{blind measurement} refers to a measurement where the final state of the detector (the readout) is traced out, discarding the acquirement of information through the measurement process but accounting for measurement backaction.} procedure, we derive the Lindblad master equation describing the evolution of the steered state in the continuum time limit. We introduce the errors and their quantifiers to be studied throughout this paper in Sec.~\ref{sec:definition_of_the_errors}. The analysis of static errors is presented in Sec.~\ref{sec:static_errors}, where we discuss two types of such errors: one is caused by an incorrect choice of the interaction strength between the detector-system Hamiltonian (Sec.~\ref{ssec:erroneous_coupling_parameter}), and the other is caused by an error in the preparation of detector states  (Sec.~\ref{ssec:errors_in_the_detector}). A discussion of dynamic errors is presented in Sec.~\ref{sec:dynamic_errors}. We deal with four types of such errors. The first concerns fluctuating steering directions in the sense that the direction in which the protocol steers the system is stochastically altered following each protocol step (Sec.~\ref{ssec:errors_steering_direction}); the second refers to temporally fluctuating detector-system coupling strength (Sec.~\ref{ssec:errors_coupling_constant}); the third concerns errors in the steering Hamiltonian itself (Sec.~\ref{ssec:steering_hamiltonian}); and the last dynamic error involves errors in the direction the detector is projected (Sec.~\ref{ssec:errros_in_the_measurement_direction}). We provide conclusions and prospects in Sec.~\ref{sec:conclusions}.

 \begin{figure*}[t!]
     \centering
    \includegraphics[width=\linewidth]{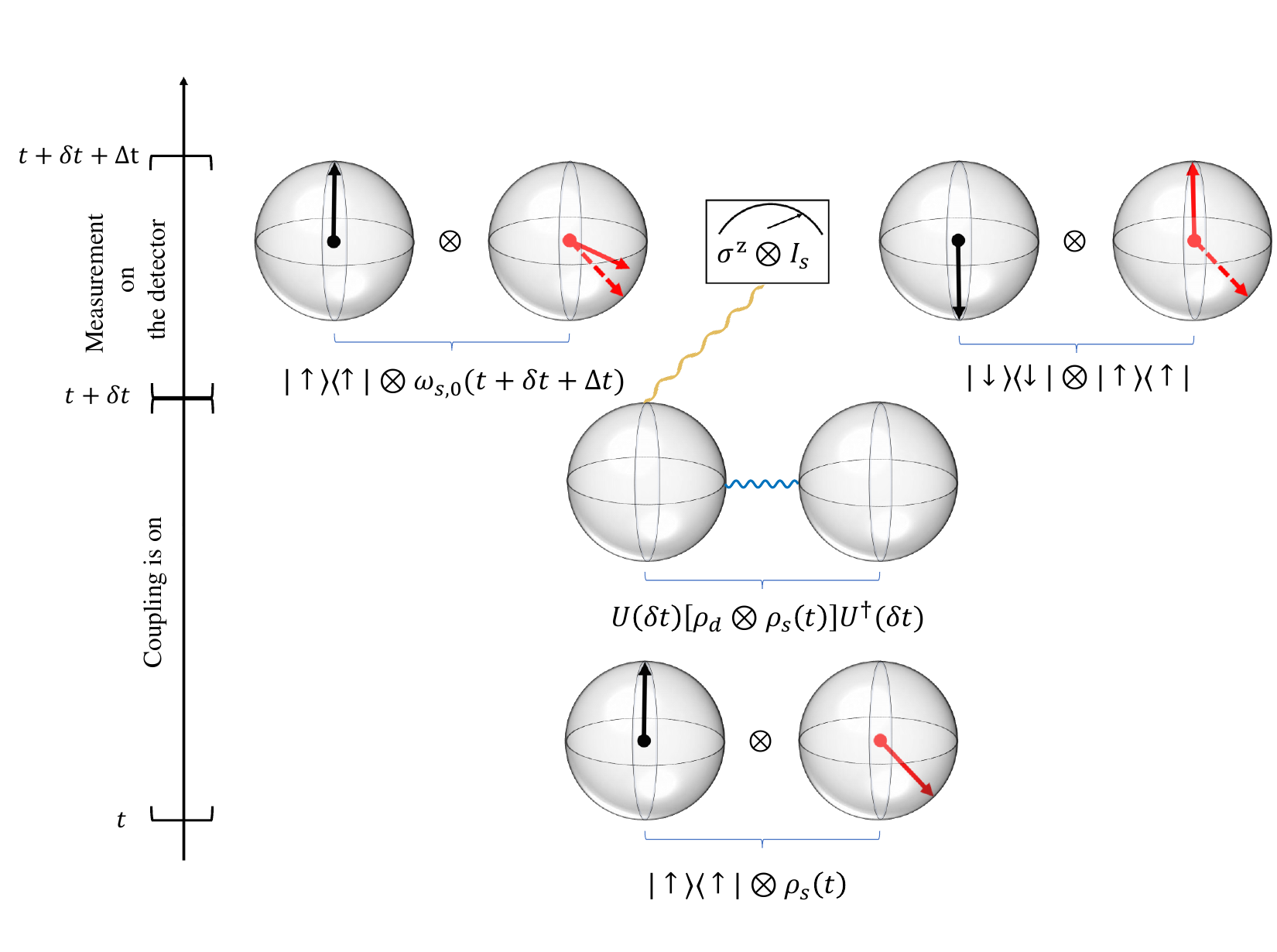}
        \caption{
        Steering step of a single qubit toward the target state $\rho_\oplus = \ket{\uparrow}\bra{\uparrow}$.
   In step (i), at time $t$, the total system is formed by two non-interacting qubits: the detector (black Bloch vector), initialized in the state $\rho_d = \ket{\uparrow}\bra{\uparrow}$, and the steered system represented by the state $\rho_s(t)$ (red Bloch vector).  In step (ii), from time $t$ to $t+\delta t$, the two subsystems interact via the Hamiltonian $H_\text{ds} = J(\sigma^+\otimes\sigma^-+\hc)$ [cf. Eq.~\eqref{eq:01}] and become correlated, which  is represented by the blue wiggly line. The joint state is then $\rho_{ds}(t+ \delta t) =U(\delta t)\rho_d\otimes \rho_s(t)U^\dagger(\delta t)$ with $U(\delta t) = \exp(-i H_\text{ds}\delta t)$. Step (iii) occurs at time $t + \delta t + \Delta t$ where a local projective measurement is performed over the detector to get one of the two observables of $\sigma^z\otimes I_s$. The wiggly yellow line represents this measurement. If the detector outcome gives $\ket{\uparrow}$ (a no-click), the vector of the steered system gets a nudge that continuously evolved from the previous state (dashed red Bloch vector), giving $\omega_{s,0}(t+\delta t + \Delta t)$. Instead, if the detector state gives $\ket{\downarrow}$ (a click), we find that the system jumped toward the target state (the north pole of the Bloch sphere). It is assumed that the time it takes to perform the projective measurement $\Delta t$ vanishes. These steps are repeated several times, and an unbiased average over multiple protocol runs (a blind measurement) is performed. }
    \label{fig:003}
\end{figure*}

\section{Ideal protocol}\label{sec:ideal_protocol_and_setup}
In this section, we briefly describe the quantum steering protocol introduced in Ref.~\cite{roy2020measurement}, which we call the ``ideal protocol'' (i.e., the protocol without any errors), applied to a single spin-1/2 system (or qubit). We present the protocol within the formalism of quantum trajectories, where the steered system follows a stochastic quantum evolution depending on the detector readouts. This inherently discrete-time evolution becomes a continuous stochastic process after adopting the  \emph{weak-measurement} (WM) limit, which requires an appropriate rescaling of the detector-system coupling constant \cite{roy2020measurement}. When the detector readouts are discarded (i.e., traced out), a procedure we denote as \emph{blind measurement}, the system follows dissipative dynamics, and its evolution is governed by the Lindblad equation (LE) in the WM limit. 

\subsection{Steps of measurement-based steering}
The ideal protocol involves implementing the following iterative steps to steer the quantum state of a system $\rho_s(t)$ toward the predesignated \emph{target} state $\rho_\oplus$ (see Fig. \ref{fig:003}):

\begin{enumerate}[(i)]
    \item At given fixed time $t'$, a quantum system is described by the state $\rho_s(t')$ and a quantum detector is prepared in state $\rho_d=\ket{\Phi_d}\bra{\Phi_d}$. In what follows, we will focus on the case of the simplest detector---a qubit---with a two-dimensional Hilbert space.  
    The detector-system state is represented by  $\rho_{ds}(t')=\rho_d\otimes\rho_s(t')$, as the subsystems are assumed not to interact. 
    
    \item In order to perform the measurement, the system is coupled to the detector using the Hamiltonian
    \begin{equation}\label{eq:01}
        H_\text{ds} \coloneqq J\ket*{\Phi_d^\perp}\bra{\Phi_d}\otimes A+ \hc, 
    \end{equation}
    where ``$\hc$'' stands for the Hermitian conjugate, $J$ is the  coupling constant, the detector state $\ket*{\Phi_d^\perp}$ is orthogonal to $\ket{\Phi_d}$, and $A$ is an operator satisfying $A\ket{\Psi_\oplus} = 0$ and $AA^\dagger \ket{\Psi_\oplus} = \ket{\Psi_\oplus}$.  

    \item Subsequently, at time $t' + \delta t + \Delta t$, the detector is measured projectively using the observable 
    \begin{equation}
    S_d \coloneqq \ket{\Phi_d}\bra{\Phi_d}-\ket*{\Phi_d^\perp}\bra*{\Phi_d^\perp}.
    \label{eq:Sd}
    \end{equation}
    This local projective measurement disentangles the joint detector-system state and creates a measurement backaction on the system state (see below). 
    
    Let $\ket{0} \coloneqq \ket{\Phi_d}$ and $\ket{1} \coloneqq \ket*{\Phi_d^\perp}$. After measuring the detector---a process that takes a $\Delta t$ time---the resultant detector-system state is given by 
    \begin{equation*}
    \qquad \rho_{\text{ds},\alpha}(t'+\delta t+ \Delta t)
        =\ket{\alpha}\bra{\alpha} \otimes \omega_{s,\alpha}(t' +\delta t + \Delta t),
    \end{equation*}
    where
    \begin{equation}\label{eq:04}
        \omega_{s,\alpha}(t' + \delta t + \Delta t) = \frac{M_\alpha(\delta t)\rho_s(t') M_\alpha^\dagger(\delta t)}{P(\alpha)}
    \end{equation}
    is the updated steered state and $\alpha \in \{0,1\}$. In the preceding equation, 
\begin{equation}\label{eq:04.1}
    M_\alpha(\delta t) \coloneqq \bra{\alpha}\exp(-i H_\text{ds} \delta t)\ket{\Phi_d}
\end{equation}
 is the generalized measurement operator (or Kraus operator)  \cite{wiseman2009quantum,breuer2002theory,jacobs2014quantum} representing the obtained measurement outcome,  occurring with probability $P(\alpha)$.  Note that the probability conservation condition $\sum_\alpha P(\alpha)=1$ imposes a constraint on the Kraus operators such that $\sum_\alpha M_\alpha^\dagger M_\alpha= I_s$.  Should $\alpha = 1$, we define  that the detector measurement showed a ``click'' result. On the other hand,  $\alpha = 0$ corresponds to a  ``no-click'' readout. 
 
 Henceforth, following the axioms of quantum mechanics \cite{cohen1986quantum}, we shall assume that the time it takes to perform the measurement $\Delta t$ vanishes.  
    \item  The detector is reset to its initial state $\rho_d$, and steps (i)-(iii) are repeated for the subsequent measurement.
\end{enumerate}

When formulating the ideal protocol, we have assumed that the energy scale $J_d$ of the detector Hamiltonian is way smaller than the measurement strength, i.e., $J_d/(J^2\delta t) \ll 1$, so we can waive the influence of the detector's Hamiltonian. The system Hamiltonian is trivially added to any stochastic master equation and thus to the corresponding Lindblad equation, but the detector Hamiltonian enters in a non-trivial way when one performs averages.  For the sake of simplicity, we also ignore the self-evolution of the steered system assuming $J_s\delta t\ll 1$ as we focus on the derivation and structure of stochastic and dissipative terms in the master equations. $J_s$ is the energy scale of the system Hamiltonian.

Steps (i)-(iv) can be conveniently gathered in the following discrete-time stochastic master equation (SME): 
\begin{equation}\label{eq:04.2}
    \delta \omega_s(t) = \sum_{\alpha = 0,1}\frac{M_\alpha(\delta t)\omega_s(t) M_\alpha^\dagger (\delta t)}{\expval*{M_\alpha^\dagger (\delta t) M_\alpha (\delta t)}_{\delta t}}\delta N_\alpha(t)-\rho_s(t),
\end{equation}
where $t\geq t'$, $\delta \omega_s(t) \coloneqq \omega_s(t+\delta t)- \omega_s(t)$, $\omega_s(t') = \rho_s(t)$, $\delta N_\alpha(t)$, for fixed $\alpha$, is an indicator function appearing with probability $$P(\alpha) = \expval*{M_\alpha^\dagger (\delta t) M_\alpha (\delta t)}_{\delta t}.$$ 

As an example, the steps (i)-(iii) are illustrated schematically in Fig.~\ref{fig:003}, where the system's target state is the north pole of the Bloch sphere $\ket{\Psi_\oplus} = \ket{\uparrow}$ (with $\sigma^z\ket{\uparrow} = \ket{\uparrow}$), the detector is prepared in the state $\ket{\uparrow}$, and the interaction Hamiltonian is 
\begin{equation}
H_\text{ds} =J( \sigma^+\otimes \sigma^- + \hc), 
\label{eq:Hds-proj}
\end{equation}
where $\sigma^+ = \ket{\uparrow}\bra{\downarrow}$, $\sigma^- = \ket{\downarrow}\bra{\uparrow}$, and $\sigma^z\ket{\downarrow} = -\ket{\downarrow}$.
A click measurement 
immediately projects the system to the pure state in this situation. In the example illustrated in Fig.~\ref{fig:003}, the resulting state is the desired target state--the north pole of the Bloch sphere. When a no-click measurement is obtained, the system is shown to have evolved continuously and irreversibly toward the target state. We elaborate further about these two types of evolution in Sec. \ref{ssec:wml}.

For the blind measurement, we discard the individual outcomes by taking the partial trace over the detector degrees of freedom, and therefore, the system state evolution can be written as
\begin{subequations}
\label{eq7}
    \begin{align}
        \rho_s(t'+\delta t) &\coloneqq \sum_{\alpha= 0,1}P(\alpha) \omega_{s,\alpha}(t'+\delta t) \\
        &=\sum_{\alpha= 0,1}M_\alpha(\delta t)\rho_s(t') M_\alpha^\dagger(\delta t).\label{eq:007}
    \end{align}
    \end{subequations}
Taking the partial trace over the detector's Hilbert space is a non-selective measurement; that is, the results of the detector are not read. Clearly, taking the average over the two possible outcomes in Eq.~\eqref{eq:04.2}, given the prior state $\rho_s(t')$, coincides with Eq.~\eqref{eq7}.

Under certain conditions (see Sec.~\ref{ssec:wml}), the updated state under a blind measurement,
Eq.~\eqref{eq7}, given the prior state $\rho_s(t')$, is $$\rho_s(t'+\delta t) = \rho_s(t')+ \mathcal{L}\rho_s(t')\delta t + \mathcal{O}(\delta t'^2),$$ where $\mathcal{L}$ is the superoperator generating dissipative Markovian dynamics, and Eq.~\eqref{eq:04.2} acquires the form of the well-known stochastic master equation of jump type [cf. Eq. \eqref{eq:07}].

\subsection{Weak-measurement limit}\label{ssec:wml}

Depending on the form of the detector-system Hamiltonian, Eq.~\eqref{eq:01}, there might be scenarios in which performing local projective measurements on the detectors leads to local measurement operators [see Eq.~\eqref{eq:04.1}] acting either as projectors or other quantum operations such as bit-flips and reflections. For example, performing a blind measurement [cf Eq.~\eqref{eq7}] fails to steer a prior state toward a state closer to the target state when the product 
$J \delta t$ in Eq.~\eqref{eq:04.1} is a multiple of $\pi$. To see this, let us express the measurement operators in Eq.~\eqref{eq:04.1} with respect to the orthonormal basis (ONB)
$\mathcal{B}_\oplus \coloneqq \{\ket{ \Psi_\oplus}, \ket*{\Psi_\oplus^\perp}\} = \{ \ket{\uparrow}, \ket{\downarrow}\}$:
\begin{align}
    M_0(\delta t) &= \begin{pmatrix}1 & 0 \\ 
    0 & \cos(J\delta t)\end{pmatrix}, \label{eq:yu1}    \\
    M_1(\delta t) &= \begin{pmatrix}0 & \sin(J\delta t) \\ 
    0 & 0\end{pmatrix}. \label{eq:yu1-1}
\end{align}
If $J\delta t / \pi \in \mathbb{Z}$, we would be either implementing a quantum operation on the steered state equivalent to a $\sigma^z$-gate \cite{nielsen2002quantum}, i.e., $M_0(\delta t) = \sigma^z$ and $M_1(\delta t) = 0$, or an identity, i.e., $M_0(\delta t) = I_s$ and $ M_1(\delta t) = 0$,  so no steering occurs. 

On the other hand, if $J\delta t /(\pi/2) \in \mathbb{Z}/2\mathbb{Z}$, one of the measurement operators becomes a projector $M_0(\delta t) = \ket{\uparrow}\bra{\uparrow}$,  and the other a raising operator $M_1(\delta t ) = \sigma^+ = \ket{\uparrow}\bra{\downarrow}$. We refer to the latter condition on $J\delta t$ as a \emph{strong-measurement limit}. Regardless of the click result, the target state $\rho_\oplus = \ket{\uparrow}\bra{\uparrow}$ is always reached in a single measurement step.

Although a qubit may reach the target state within the strong-measurement limit, fine-tuning of the product $J\delta t$ is required, which is a significant disadvantage. When steering a many-body system toward an entangled state using the ideal protocol described here, numerous detectors acting on different parts of the system are needed, and the Hamiltonians associated with those local steering operations may be non-commuting. In the fine-tuned protocol where each measurement is a strong one, a given measurement may, therefore, undermine the steering efficiency of a subsequent step; that is, different effective projective measurements on different regions of the system may undermine the propagation of entanglement that might be needed if the target state has a given degree of entanglement.

In order to overcome possible problems of this sort, we present the scenario in which both the interaction strength $J$ and frequency of measurements $1/\delta t$ of a qubit interacting with a given chain of detectors grow such that the product $J^2 \delta t$ remains constant as $\delta t$ tends to zero. This measurement regime is known as the \emph{weak-measurement} (WM) limit \cite{roy2020measurement}, and, as we shall see in brief, it avoids the fine-tuning and probabilistic issues encountered before, ensuring successful steering in a finite time.

Let measurement strength (or coupling strength) be
\begin{equation}\label{eq:02}
    J = \sqrt{\frac{\gamma}{\delta t}},
\end{equation}
where $\gamma > 0$ will have a meaning of the measurement or channel strength (see below). Now, the WM limit amounts to 
setting the value of measurement time-step small enough such that $J\delta t = \sqrt{\gamma \delta t} \ll 1$. In addition, the following condition is assumed:
\begin{equation}
\lim_{\delta t \to 0}J^2 \delta t = \gamma = \text{const}.
\label{eq:02-1}
\end{equation}
After performing a series expansion in $\delta t$ and setting it as a true differential, the discrete-time SME in Eq.~\eqref{eq:04.2} becomes the continuous-time SME of the jump type
\cite{barchielli1991measurements,wiseman2009quantum,breuer2002theory,attal2010stochastic, gross_qubit_2018,brun_simple_2002}
    \begin{multline}\label{eq:07}
        \dd \omega_s(t) = -\frac{\gamma}{2}\left\{A^\dagger A - \Tr_s[A^\dagger A \omega_s(t)], \omega_s(t) \right\}\dd t \\
        + \left( \frac{A\omega_s(t)A^\dagger}{\Tr_s[A^\dagger A\omega_s(t)]} - \omega_s(t) \right)\dd N(t),
    \end{multline}
    where $\{\bullet, \bullet\}$ denotes the anticommutator, $\dd \omega_s(t) = \omega_s(t+\dd t) - \omega_s(t)$, and $\dd N(t)$ is an increment of an inhomogeneous counting process with the expectation value (mean) given by
    \footnote{Taking the mean of an inhomogeneous Poissonian increment in the context of stochastic master equations is rather delicate, as the average must be first taken in the future and then in the past. This is thoroughly discussed in Ref.~\cite{barchielli1991measurements}.}
    \begin{equation}\label{eq:08}
        \mathbb{E}[\dd N(t)] = \gamma \Tr_s[A^\dagger A\omega_s(t)]\dd t.
    \end{equation}

When there is a click, the increment $\dd N(t)$ is equal to unity, and from Eq.~\eqref{eq:07}, we see a jump toward the target state---set, for example,  $A = \sigma^+$. It follows from Eq.~\eqref{eq:08} that a click readout rarely occurs as its mean value is proportional to $\dd t$. When there is no click, $\dd N(t) = 0$ and Eq.~\eqref{eq:07} describes a ``nudge'' the prior state receives from the backaction of the local measurement. In other words, the system's state evolved continuously irreversibly from $\omega_s(t)$ to $\omega_s(t+\dd t)$. 

The rescaling of $J$ according to 
Eqs.~\eqref{eq:02} and \eqref{eq:02-1} is a \emph{sufficient} condition to implement the WM limit and to obtain a non-trivial evolution in the averaged dynamics \cite{attal2010stochastic}. 

In addition, we note that the WM limit \emph{does not} necessarily imply that the detector-system coupling is in any sense weak (in particular, in the setting where the full Hamiltonian consists of a single term $H_\text{ds}$, there is no other energy scale for $J$ to compare with). In fact, this coupling is related to the \emph{singular coupling limit} one encounters when deriving the LE of a system strongly coupled with a delta-correlated reservoir \cite{breuer2002theory,rivas2012open}. Thus, from the SME \eqref{eq:07}, the ``weak'' in the WM limit stands for the small perturbation that affects the \emph{system} after a projective measurement is performed on the detector and a \textit{no-click} result is observed. When a click is observed, the state receives a ``kick,'' and it jumps to the target state, which is by no means weak
(however, such jumps are rare in the WM limit). 

Turning to the blind measurement step  (i.e., when no selection over the detector readouts is performed), an average over all the trajectories (or readouts) must be taken in the SME, resulting in the WM limit, in that the system dynamics is governed by the LE  \cite{breuer2002theory,wiseman2009quantum}
\begin{subequations}
\label{eq09}
    \begin{align}
        \partial_t\rho_s(t) &= \gamma \mathcal{D}(A)\rho_s(t) \label{eq:09} \\
        &= \gamma A\rho_s(t)A^\dagger - \frac{\gamma}{2}\{A^\dagger A,\rho_s(t) \},
    \end{align}
    \end{subequations}
where 
\begin{equation}\label{eq:09.1}
    \rho_s(t) \coloneqq \mathbb{E}[\omega_s(t)]
\end{equation}
denotes the average over realizations (runs of the measurement protocol). We will refer to the superoperator $\mathcal{D}(A)$ interchangeably as \emph{dissipator} or \emph{simple generator} of dissipative dynamics \cite{baumgartner2008analysis,baumgartner2008analysisII}. By construction of the steering protocol, this dissipator annihilates the target state, i.e., $\mathcal{D}(A)\rho_\oplus = 0$, which, in turn, implies that this state is the \emph{stationary state} solution of the above LE,
\begin{equation}\label{eq:010}
    \rho_\infty = \lim_{t \to \infty }\rho_s(t) = \lim_{t \to \infty }\exp(\mathcal{L}t)\rho_s(0) = \rho_\oplus,
\end{equation}
where $\rho_s(0)$ is the initial state of the system
Note that $\rho_s(t) = \exp(\mathcal{L}t)\rho_s(0)$ is the formal solution of Eq.~\eqref{eq09} with $\mathcal{L} = \gamma \mathcal{D}(A)$ acting as the Lindbladian superoperator.

\section{ Classification and quantification of errors}\label{sec:definition_of_the_errors}

In this section, we introduce some errors that can adversely affect the ideal protocol and steer the system toward an erroneous target state rather than the desired one. For simplicity and for the sake of clarity, we assume throughout this paper that a single error occurs at any given time. As we shall see, each error may induce both dissipative and unitary channels in the LE governing the fully averaged dynamics, yet certain errors only induce one type of channel. Thus, the consideration of several errors acting simultaneously only requires the addition of the corresponding channels (see  Appendix~\ref{sec:multiple_errors_at_the_same_time}). We will implement two distance measures that compare the steered state with an ideal target state. These measures are fidelity and trace distance. Using these measures will enable us to gauge the impact of steering errors on the protocol. We will also employ linear entropy to determine the degree of ``mixedness'' of the steered state when errors occur. The robustness analysis aims to understand how steering errors can affect the reliability and accuracy of the protocol.

\subsection{Quantifying the errors}\label{ssec:quantifying_the_errors}

Let $\mathcal{S}(\mathcal{H}_s)$ be the set of density matrices defined on the 
system Hilbert space $\mathcal{H}_s$, and let $\rho$, $\omega \in \mathcal{S}(\mathcal{H}_s)$. To study how errors alter the protocol, we use three quantifiers that are fidelity 
\begin{equation}\label{eq:2.4}
    F(\rho,\omega) = \left( \Tr \sqrt{\sqrt{\rho}\,\omega \sqrt{\rho}}\right)^2,
    \end{equation}
trace distance
\begin{align}
     D_1(\rho,\omega) &= \frac{1}{2}\Tr[\sqrt{(\rho-\omega)^2}], \label{eq:2.5}
\end{align}
and the linear entropy (also called ``impurity'' as describing the deviation from a pure quantum state)
\begin{equation}\label{eq:2.7}
    L(\rho) \coloneqq 1 - \Tr\rho^2,
\end{equation}
to compare a steered state with the ideal pure target state $\rho_\oplus$ whose linear entropy is unity. Unless explicitly stated, we shall always set the ideal target state as ${\rho}_\oplus=\ket{\uparrow}\bra{\uparrow}$, the erroneous one as $\tilde\rho_\oplus$ and in the same basis, we write the steered state of interest as
\begin{equation}\label{eq:11.30}
    \rho_s(t) = \begin{pmatrix}
    \zeta(t) & \chi(t) \\
    \chi(t)^* & 1- \zeta(t)
    \end{pmatrix}.
\end{equation}
Thus, if we compare $\rho_s(t)$ with $\rho_\oplus$,
the fidelity and the trace distance become
\begin{align}
    F(t) &\coloneqq  F(\rho_\oplus, \rho_s(t)) = \zeta(t), \label{eq:me1} \\
    D_{1}(t) &\coloneqq D_1(\rho_\oplus, \rho_s(t))=  \sqrt{\left[1- \zeta(t) \right]^2 + \abs{\chi(t)}^2}. \label{eq:me2}
\end{align}
We shall denote the above two quantifiers together with the impurity in the stationary regime as $D_{1,\infty}, F_\infty$, and $L_\infty$, correspondingly. 
With the quantifiers at hand, we will determine the robustness of the protocol by performing a series expansion with respect to selected \emph{steering parameters}, e.g., angles defined on the Bloch sphere, channel strengths, probabilities, etc.

\subsection{Types of errors: static and dynamic}\label{ssec:types_of_errors}
In this paper, we discuss two types of errors: \emph{static} and \emph{dynamic}. Static errors refer to the parameters that do not change during the steering protocol. In contrast to static errors, the parameters of the model for dynamic errors  fluctuate during each steering step or from step to step.

We will examine the following two types of static errors:
\begin{enumerate}[(i)]
    \item \emph{Erroneous detector-system coupling parameter}. As we saw in Sec.~\ref{ssec:wml}, there are certain values the product $J\delta t$ can take on that prevent steering from occurring. Therefore, we will define this as our first static error (Sec.~\ref{ssec:erroneous_coupling_parameter}).   
    \item \emph{Erroneously prepared detectors}. The steering protocol requires preparing the detectors in a specific state after each measurement step. Then, it is natural to discuss the case where, because of any external perturbation, we are only capable of preparing the detectors in a state $\tilde \rho_d$ that is not the desired one (i.e., $\tilde \rho_d\neq\rho_d$), and see how the steering protocol with erroneously prepared detectors yields a ``spoiled'' target state (Sec.~\ref{ssec:errors_in_the_detector}).
\end{enumerate}

Dynamic errors are subdivided into two categories: \emph{time-dependent} and \emph{quenched}. If the detector-system Hamiltonian changes in time within one step (between two measurements of the detector), we call the dynamical error time-dependent. If the parameters of the protocol within the step are constant in time but change from step to step, we call the dynamic error quenched. In this work, we will discuss four types of dynamic errors:
 \begin{enumerate}[(i)]
     \item \emph{Fluctuating steering direction}. This error is exclusively quenched and describes the scenario where different steering directions appear at each measurement step. These steering directions can be discrete or continuous (Sec.~\ref{ssec:errors_steering_direction}).
    \item \emph{Fluctuating detector-system interaction strength}.  This error becomes quenched if the coupling constant is drawn from a probability distribution at each steering step. Alternatively, the coupling constant can become a multiplicative white noise (during a single measurement step), making this error time-dependent (Sec.~\ref{ssec:errors_coupling_constant}).
    \item \emph{Environmentally induced perturbation}. A perturbation operator with multiplicative white noise is added to the steering Hamiltonian to represent the interaction of the detector-system with a noisy environment  (Sec.~\ref{ssec:steering_hamiltonian}).
    \item \emph{Erroneous measurement direction}. After each steering step, the basis of the local observable measured on the detector may change, making this dynamic error quenched (Sec.~\ref{ssec:errros_in_the_measurement_direction}).
 \end{enumerate}

\section{Static errors}\label{sec:static_errors}

In this section, we will start by addressing the first static error. We will examine how the system's Bloch vector is affected when $J\delta t$ is a multiple of $\pi$ or an odd multiple of $\pi/2$ and define what we refer to as a \emph{valid} coupling parameter. We will further analyze the effect of the second static error, which occurs when the detector state is prepared incorrectly. In this case, the erroneous preparation of the detector state leads to additional effective Hamiltonian dynamics and modifies the dissipative dynamics observed in the ideal steering protocol. These additional contributions to the steering dynamics will disrupt the target state. 

\subsection{Erroneous coupling parameter}\label{ssec:erroneous_coupling_parameter}

Alluding to the discussion after Eqs.~\eqref{eq:yu1}-\eqref{eq:yu1-1}, let us suppose we want to steer the state $\rho_s(t)$ toward the north pole of the Bloch sphere represented by the state $\rho_\oplus = \ket{\uparrow}\bra{\uparrow}$. 
The detectors are prepared in the state $\rho_d = \ket{\uparrow}\bra{\uparrow}$. With the given target and detector states, the interaction Hamiltonian $H_\text{ds}$ is given by Eq.~\eqref{eq:Hds-proj}, and the associated measurement operators are the same as in Eqs.~\eqref{eq:yu1}-\eqref{eq:yu1-1}.
For a given prior state in the Bloch representation
\begin{equation}
\bm{r}(t) = \textbf{(}x(t),y(t),z(t)\textbf{)} =  \Tr_s[\rho_s(t)\bm{\sigma}],    
\end{equation}
where $\bm{\sigma} = (\sigma^x, \sigma^y,\sigma^z)$ is a vector of Pauli matrices, the updated steered state is then
\begin{equation}\label{eq:n1}
    \bm{r}(t+\delta t) = \begin{pmatrix} \cos(J\delta t) x(t)\\ \cos(J\delta t) y(t)\\ 1 - \cos^2(J\delta t)[1-z(t)] \end{pmatrix}.
\end{equation}
There is no distinction between column and row vectors in our analysis.  
Clearly, from Eq.~\eqref{eq:n1}, if $J \delta t/ \pi  \in \mathbb{Z}$, the $x$ and $y$-components of $\bm{r}(t+\delta t)$ get either reflected or remain invariant so that the state does not get closer to the target state represented by the Bloch vector $$\bm{r}_\oplus = \Tr_s[\rho_\oplus \bm{\sigma}] = (0,0,1).$$ 

If $J \delta t /(\pi/2) \in \mathbb{Z}/ 2\mathbb{Z}$, then the state of the system will jump toward the target state after the measurement, represented by $\bm{r}(t+\delta t) = \bm{r}_\oplus$. However, it is not possible to describe the evolution of $\rho_s(t)$ using Lindbladian dynamics for these specific values of $J\delta t$. Therefore, the coupling strength is erroneous if $J \delta t$ is an integer multiple of $\pi$ or an odd integer multiple of $\pi/2$. If this is not the case, we consider the coupling strength valid. In later sections, we will explore the consequences of this error when it becomes dynamic. 

Henceforth, unless explicitly stated, we shall adopt the WM limit. 

\subsection{Errors in the detector state initialization}\label{ssec:errors_in_the_detector}
After each measurement step, the measurement evolution requires freshly prepared (in a specific state) detectors. Since the detector is also a quantum object, it may interact with the environment so that its state may change from the desired one. The unwanted detector-environment interaction can be cast in the form of Kraus operators $\{K_i\}$, leading to the following detector's averaged density matrix  \cite{jacobs2014quantum}:
\begin{equation}\label{eq:sta1}
    \tilde \rho_d \coloneqq \mathcal{E}[\rho_d] = \sum_i K_i\rho_d K_i^\dagger, 
\end{equation}
where the sum runs over a finite index set. We could, of course, consider the quenched version of this error where (see Appendix~\ref{sec:A1}), at each interaction, the detector state is randomly chosen from an ensemble, e.g., the detector states are $\ket*{\Phi_d^i} = \cos(\theta_i/2)\ket{\Phi_d}+e^{i\varphi_i}\sin(\theta_i/2)\ket*{\Phi_d^\perp}$, and appear with probability $p(i)$ such that $\tilde \rho_d = \sum_i p(i)\ket*{\Phi_d^i}\bra*{\Phi_d^{i}}$, but focus only on the static version of this error.

Without considering any specific set of $K_i$'s, let us assume that their action transforms, in the ONB $\mathcal{B}_d \coloneqq \{\ket{\Phi_d}, \ket*{\Phi_d^\perp} \}$, the ideal detector state to the state 
\begin{equation}\label{eq:13}
    \tilde\rho_d = \begin{pmatrix}
    a & \abs{b}\exp(i\phi) \\ \abs{b}\exp(-i\phi) & 1-a
    \end{pmatrix},
\end{equation}
which is different from the desired detector state $\rho_d = \ket{\Phi_d}\bra{\Phi_d}$. When $a = 1$, which automatically forces $b =0$, we recover $\rho_d$. 

\subsubsection{Dynamics of the steered density matrix} \label{ssec:dynamics_of_rho}

Although we are mainly interested in the stationary state of $\rho_s(t)$, understanding how the steered system reaches the stationary is essential, as reaching it might be an impossible task because of experimental limitations in the measurement rate and the interaction time, among other issues. 

With the detector-system interaction given by Eq.~\eqref{eq:01} and the detector state $\tilde\rho_d$, the blind measurement evolution in the WM limit leads to the Lindbladian dynamics (see Appendix~\ref{sec:A1}):
\begin{multline}\label{eq:14}
\partial_t\rho_s(t)=\bigl[-i\kappa \, \ad(\tilde{h}) + \gamma_+ \mathcal{D}(A) + \gamma_-\mathcal{D}(A^\dagger) \bigr]\rho_s(t),
\end{multline}
where $ \gamma_+ \coloneqq a\gamma$, $\gamma_- \coloneqq (1-a)\gamma$ and
\begin{equation}\label{eq:14.1}
    \ad(\tilde{h})\rho_s(t) \coloneqq [\tilde{h},\rho_s(t)]
\end{equation}
is the adjoint action of 
\begin{equation}\label{eq:15-1}
    \tilde{h}  \coloneqq \exp(i\phi)A + \hc 
\end{equation} 
The coherences of the detector state induce the effective Hamiltonian
\begin{equation}\label{eq:15}
    H = \kappa \tilde{h},  
\end{equation}
whose strength is given by \begin{equation}
\kappa\coloneqq \lim_{\delta t \to 0}J\abs{b}.
\end{equation}
This scaling should be chosen so that $\kappa$ remains constant as we go to the continuum-time limit; otherwise, $H$ would have infinite strength.  

There are three generators of the dynamic semigroup governing the dynamics of $\rho_s(t)$ \cite{baumgartner2008analysis,baumgartner2008analysisII}: the dissipator $\mathcal{D}(A)$ whose stationary state is the ideal target state $\rho_\oplus$ [see Eq.~\eqref{eq09}]; the additional dissipator $\mathcal{D}(A^\dagger)$ annihilating $\rho_\oplus^\perp$ [$\Tr_s(\rho_\oplus \rho_\oplus^\perp) = 0$]; and the unitary generator $\ad(\tilde{h})$. Without loss of generality, let us set $\ket{\Phi_d} = \ket{\uparrow}$ and $\ket*{\Psi_\oplus} = \ket{\uparrow}$. Thus, $A = \sigma^+$ and the Lindbladian in Eq.~\eqref{eq:14} becomes 
\begin{multline}\label{eq:15.1}
    \mathcal{L} = -i\kappa\, \ad\textbf{(}\exp(i\phi)\sigma^+ + \hc  \textbf{)} + \gamma_+ \mathcal{D}(\sigma^+)\\  + \gamma_- \mathcal{D}(\sigma^-).
\end{multline}
By adopting the Bloch representation $$\rho_s(t) = \frac{1}{2}[I_s + \bm{r}(t)\cdot \bm{\sigma}],$$
the effective dimensionless Hamiltonian in the Lindbladian \eqref{eq:15.1} can be rewritten as 
$\tilde{h} = \bm{n}\cdot \bm{\sigma}$, where $\bm{n} = (\cos\phi, -\sin\phi,0)$.
Hence, the unitary channel in $\mathcal{L}$ induces Rabi oscillations around the unit vector $\bm{n}$ with a Rabi frequency of $\kappa/2$. 

From the Bloch representation of $\rho_s(t)$, we get the set of coupled ordinary differential equations for the Bloch components: 
\begin{align}
    \dot{x}(t) &= -\frac{\gamma x(t)}{2} - 2\kappa z(t)\sin\phi, \label{eq:15.12} \\
    \dot{y}(t) &= -\frac{\gamma y(t)}{2} - 2\kappa z(t)\cos\phi, \label{eq:15.a}\\
    \dot{z}(t) &= \gamma(2a-1) - \gamma z(t) + 2\kappa y(t)\cos\phi \notag \\ 
&+2\kappa x(t)\sin\phi.\label{eq:15.b}
\end{align}
For $\kappa \neq \gamma/8$, the solutions of Eqs.~\eqref{eq:15.12}-\eqref{eq:15.b} read
\begin{align}
    x(t) &= \frac{1}{\kappa}\left[ \cos(\phi) g(t) + \sin(\phi) f(t)\right], \label{eq:15.120} \\
    y(t) &= \frac{1}{\kappa}\left[ \cos(\phi)f(t) - \sin(\phi)g(t)\right], \label{eq:15.121}\\
    z(t) &= 2\left[C_1\exp(\Omega_+ t) + C_2\exp(\Omega_- t) + \lambda \right] - 1, \label{eq:15.122}
\end{align}
where
\begin{align}
    f(t) &= C_1\,(\Omega_+ + \gamma)\exp(\Omega_+ t)\notag \\  &+ C_2\,(\Omega_- + \gamma)\exp(\Omega_- t) + \gamma(\lambda - a), \label{eq:15.123} \\
    g(t) &= C_3\exp(-\frac{\gamma}{2}t),
     \label{eq:15.124} 
     \end{align}
the integration constants $C_{i}$ ($i=1,2,3$) depend on the initial state and the protocol parameters, and     \begin{align}
    \lambda &\coloneqq \frac{\gamma^2a + 4\kappa^2}{\gamma^2 + 8\kappa^2}, \label{eq:15.125} \\
    \Omega_\pm &\coloneqq \pm \sqrt{\left( \frac{\gamma}{4} \right)^2 - (2\kappa)^2} - \frac{3\gamma}{4}.
    \label{eq:15.126}
\end{align}

As can be seen from Eqs.~\eqref{eq:15.120}-\eqref{eq:15.126}, for $\kappa \neq \gamma/8$, the dynamics of $\bm{r}(t)$ is mainly controlled by $\Omega_\pm$, and it may exhibit one of the following two regimes: 
\emph{underdamped} when $\kappa > \gamma/8$, 
and 
\emph{overdamped} when $\kappa < \gamma/8$. 
Clearly, the solution corresponding to the ideal protocol (i.e., when $\tilde\rho_d = \rho_d$) belongs to the overdamped regime in the limit $\kappa \rightarrow 0$, yielding 
\begin{align}
    x(t) &= x(0)\exp(-\frac{\gamma}{2}t), \label{eq:15.127-1}\\ 
    y(t) &= y(0)\exp(-\frac{\gamma}{2}t), \label{eq:15.127}\\
    z(t) &= 1 - \left[1-z(0) \right]\exp(-\gamma t), \label{eq:15.129}
\end{align}
where the approach to the target state is exponential in time with the rate $\gamma$ given by Eq.~\eqref{eq:02-1}.
The \emph{critically damped} regime occurs when $\kappa = \gamma/8$ and the solution is obtained directly from Eqs. \eqref{eq:15.12}-\eqref{eq:15.b}:
\begin{align}
    x(t) &= \frac{4(1-2a)}{9}\sin \phi+ \exp(-\frac{3\gamma }{4}t)\, C_x(t), \label{eq:cri1}\\
    y(t) &= \frac{4(1-2a)}{9}\cos \phi+ \exp(-\frac{3\gamma }{4}t)\,  C_y(t), \label{eq:cri2} \\
    z(t) &= \frac{8(1-2a)}{9}+ \exp(-\frac{3\gamma }{4}t )\, C_z(t), \label{eq:cri3}
\end{align}
with 
\begin{widetext}
\begin{align}
    C_x(t) &=  \frac{1}{4} (\gamma  t+4) \left[x(0)\sin\phi
    +y(0) \cos \phi \right]\sin\phi+
    \left[\frac{4}{9} (2a-1)+ \gamma  t \left(\frac{1}{3} (2a-1)-\frac{1}{4} z(0)\right) \right]\sin\phi \notag  \\ & +  \exp(\frac{\gamma  }{4}t)\, 
    \left[x(0)\cos\phi -y(0) \sin \phi \right]\cos\phi, \\
    C_y(t) &=   \frac{1}{4}  (\gamma  t+4) [x(0) \sin \phi+y(0) \cos \phi ]\cos \phi +\left[\frac{4}{9} (2a-1)+ \gamma  t \left(\frac{1}{3} (2a-1)-\frac{1}{4} z(0)\right) \right]\cos \phi  \notag \\ &
    + \exp(\frac{\gamma  }{4}t)  [y(0) \sin \phi -x(0) \cos \phi ]\sin \phi, \\
    C_z(t) &=  \frac{1}{4} \gamma  t [x(0) \sin \phi + y(0) \cos \phi ] +\gamma  t \left[\frac{1}{3} (2a-1)-\frac{1}{4} z(0)\right] + z(0) -\frac{8}{9}(2a-1).
\end{align}
\end{widetext}

\begin{figure*}
\centering
\begin{tabular}{cccc}
\includegraphics[width=0.3\textwidth]{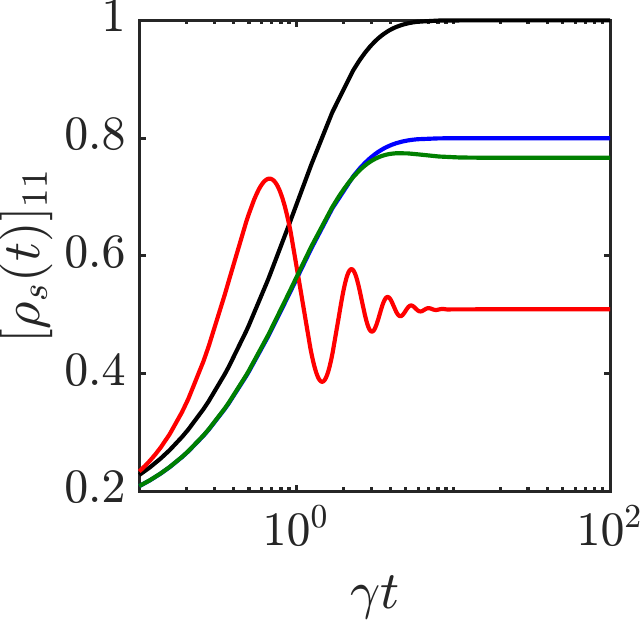} &
\includegraphics[width=0.3\textwidth]{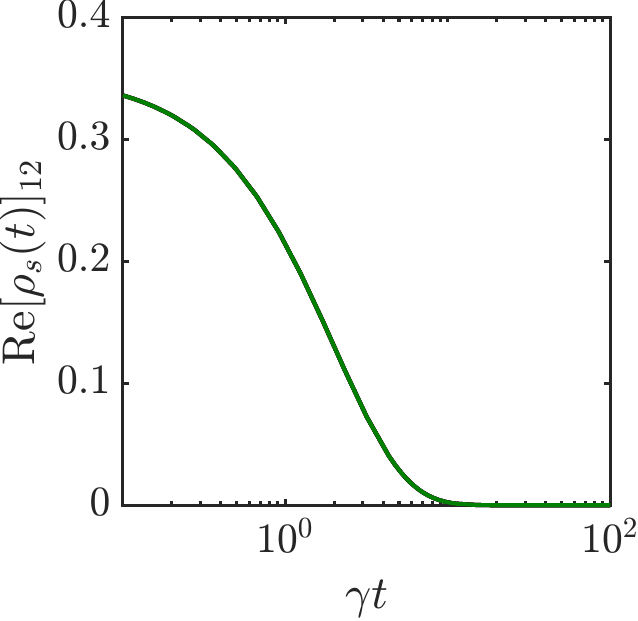} &
\includegraphics[width=0.3\textwidth]{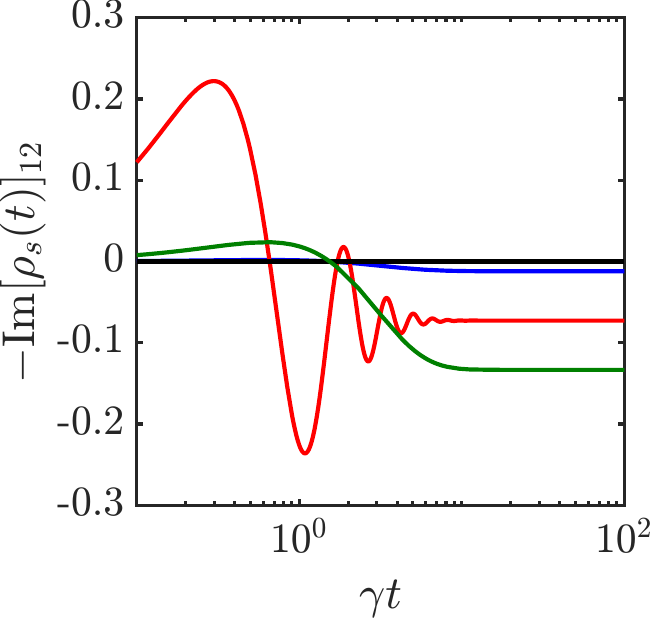} \\
(a)  & (b) & (c) \\[6pt]
\end{tabular}
\begin{tabular}{cccc}
\includegraphics[width=0.3\textwidth]{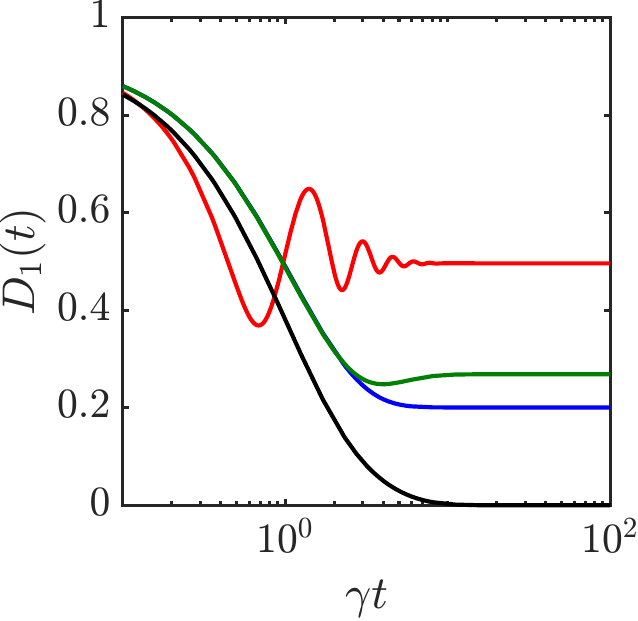} &
\includegraphics[width=0.3\textwidth]{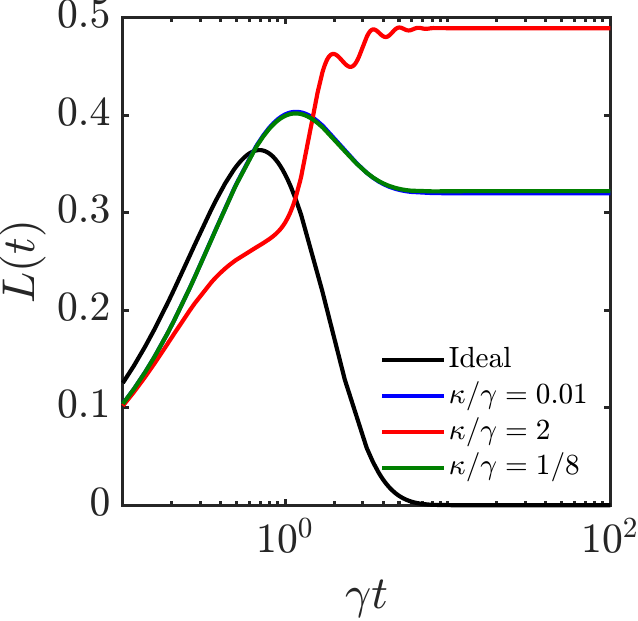} \\
(d)  & (e)  \\[6pt]
\end{tabular}
    \caption{Steering of a single qubit via continuous time evolution  toward the ideal target state $\ket{\Psi_\oplus} = \ket{\uparrow}$ with the ideal parameters (black) [Eqs.~\eqref{eq:15.127-1}-\eqref{eq:15.129}] and with erroneously prepared detectors [Eq.~\eqref{eq:13}] in three different dynamic regimes: underdamped, $\kappa/\gamma = 2$ (red); overdamped, $\kappa/\gamma =  0.01 $ (blue) [Eqs.~\eqref{eq:15.120}-\eqref{eq:15.122} for $\kappa/\gamma \neq 1/8$]; and critically damped [Eq.~\eqref{eq:cri1}-\eqref{eq:cri3}], $\kappa/\gamma =  1/8$ (green). Entries of $\rho_s(t)$ in the three dynamic regimes: (a) $[\rho_s(t)]_{11} = [1 + z(t)]/2$.  Panels (b) and (c) respectively show the real and the negative imaginary parts of $[\rho_s(t)]_{12} = [x(t) -iy(t)]/2$. In (b), all the curves coincide. (d) Trace distance, given by Eq.~\eqref{eq:me2}, between the steered state (erroneous or not) and the ideal target state. (e) Linear entropy [Eq.~\eqref{eq:2.7}]. In all cases, $a = 0.8$ and $\phi = 0$ in Eq.~\eqref{eq:13}, and the initial Bloch vector is $\bm{r}(0) = (1,1,-1)/\sqrt{3}$.}
\label{fig:00}
\end{figure*}

Results for the three dynamical regimes exposed above are compared with those of the ideal steering in Fig.~\ref{fig:00} for particular values of $\kappa/\gamma$. By recalling that $\rho_s(t) = [I_s + \bm{r}(t)\cdot \bm{\sigma}]/2$, in the ideal steering protocol [Eqs.~\eqref{eq:15.127-1}-\eqref{eq:15.129}] the real and imaginary parts of the off-diagonal elements of $\rho_s(t)$ rapidly go to zero, and $$[\rho_s(t)]_{11} = \frac{1}{2}[1 + z(t)]$$ goes to unity twice as fast. This situation no longer holds for erroneously prepared detectors, as shown in panels (a)-(c) of Fig.~\ref{fig:00}.
Furthermore, in Fig.~\ref{fig:00}, panels (d) and (e) show that when $\kappa/\gamma=2$, which corresponds to the underdamped regime, the impurity and trace distance are the lowest until they intersect with the ideal steering curves. In the stationary state, the steered state in the underdamped regime becomes the least pure and is the furthest from the ideal target state compared with the ideal steering and both the overdamped ($\kappa/\gamma = 0.01$) and critically damped ($\kappa/\gamma = 1/8$) regimes. The latter two regimes have a stationary state impurity close to $0.3$ from above. However, the stationary state value of the trace distance corresponding to the overdamped regime is smaller than for the critically damped regime.
It is worth noting that if we adopted an active-decision steering protocol such as the one in Ref.~\cite{herasimenko} while considering erroneously prepared detectors leading to the underdamped regime dynamics of $\rho_s(t)$, it would be reasonable to stop the steering protocol at the precise moment the trace distance reaches the undershoot since this is the closest point to the ideal target state. This would not apply in the other two regimes because it takes more time for the trace distance to reach its minimum value. On the other hand, it is possible to combine the current protocol in the underdamped regime (which would no longer be considered erroneous) with a very strong unitary channel to quickly reach the undershoot as close as possible to the target state and then implement purely dissipative steering. Combining these two steering protocols and potentially other optimization schemes may improve the overall performance of the steering protocol. An example of an optimized protocol of this kind is explored in Ref.~\cite{kumar2022optimized}.

\subsubsection{Stationary state analysis}\label{ssec:stationary_state_analysis}

Next, we will conduct a stationary state analysis of Eq.~\eqref{eq:14} to understand how the steering parameters $\mathcal{P} = \{\gamma,\kappa, a,\phi \}$ influence the target state $\tilde\rho_\oplus$. By using the same orthonormal basis that led to the Lindbladian~\eqref{eq:15.1} we find its stationary state
\begin{equation}\label{eq:16}
     \tilde \rho_\oplus=  \begin{pmatrix}
\dfrac{1}{2} + \dfrac{\gamma^2(2a-1)}{2(\gamma^2 + 8\kappa^2)}  & \dfrac{i2e^{i\phi}\gamma \kappa (1-2a)}{\gamma^2 + 8\kappa^2} \\
-\dfrac{i2e^{-i\phi}\gamma \kappa (1-2a)}{\gamma^2 + 8\kappa^2} & 
\dfrac{1}{2} - \dfrac{\gamma^2(2a-1)}{2(\gamma^2 + 8\kappa^2)}
\end{pmatrix},
\end{equation}
where a clear dependence on the channel strengths can be seen. By turning to the Bloch representation of $\tilde\rho_\oplus$, we can conveniently observe the allowed stationary regions in the Bloch ball with the aid of the Bloch vector
\begin{equation}\label{eq:17}
      \bm{r}_{\infty} = \begin{pmatrix} x_\infty \\ y_\infty \\ z_\infty \end{pmatrix} = \frac{(2a-1)\gamma}{\gamma^2 + 8\kappa^2} \begin{pmatrix} 
    -4 \kappa \sin\phi\\
    -4 \kappa \cos\phi\\
    \gamma
    \end{pmatrix}.
\end{equation}

The regions are the following ones: the origin of the Bloch sphere $\norm{\bm{r}_\infty} = 0$ is attained when $a = 1/2$, which in turn implies that the two dissipators in $\mathcal{L}$ [Eq.~\eqref{eq:15.1}] have the same decay rate. This point, which is the maximally mixed state, is accessible regardless of the presence of the unitary generator in $\mathcal{L}$. In this specific case, $F_\infty = D_{1,\infty} = 1/2$ [Eq. \eqref{eq:me1}].

The second stationary region is the $z$-axis of the Bloch ball. Regardless of the value of $a$, having $b = 0$ forces  $\kappa = 0$, so that there is no unitary generator in $\mathcal{L}$ and the Bloch vector is $\bm{r}_\infty = (0,0,2a-1)$. This result is a consequence of the competition between the dissipators $\mathcal{D}(\sigma^+)$ and $\mathcal{D}(\sigma^-)$, where the former dissipator steers toward ${\rho}_\oplus$ and the latter toward ${\rho}^\perp_\oplus$. The trace distance and fidelity in this case are $D_{1,\infty} = \abs{1-a}$ and $F_\infty = a$, respectively.

The third stationary region is a \emph{stationary ellipsoid}. These are notable features in dissipative quantum dynamics and optimal control of two-level systems \cite{kumar2022optimized,lapert2013understanding,giorgi2020microscopic,mukherjee2013speeding,sauer2013optimal}. Let $\kappa \in [0,\infty)$, $a \in (0,1/2)\cup(1/2,1)$, $\phi \in [0,2\pi)$, and $\gamma \in (0, \infty)$. For $\gamma$ and $a$ fixed, one can demonstrate that the vector components in Eq.~\eqref{eq:17} satisfy the relation \footnote{The same  surface is obtained if $\kappa$ is fixed and if  $\gamma$ varies in  $\in (0,\infty)$.}
    \begin{equation}\label{eq:18}
        \frac{x_\infty ^2+ y_\infty^2}{2(a-1/2)^2} + \frac{\left[z_\infty -(a-1/2)\right]^2}{(a-1/2)^2} = 1,
    \end{equation}
which is an oblate ellipsoid. However, these ellipsoids are punctured: for any ellipsoid, the endpoint of its minor axis that would coincide with the origin of the Bloch sphere is removed from Eq.~\eqref{eq:18}. Also, the other end of the minor axis does not intersect the Bloch sphere. This is shown in Appendix \ref{sec:stationary_ellipsoid}. We denote these punctured ellipsoids by $\mathcal{C}$, and provide two examples in Fig. \ref{fig:01} for $\gamma = 5$, $a=0.8$, and $a =0.2$. 

The faulty protocol discussed here exhibits multiple stationary states for a single target state. This means that a specific target state can be reached using different $\tilde\rho_d$. In other words, multiple intersecting ellipsoids exist for different values of $\tilde\rho_d$. Additionally, in contrast to the steering ellipsoids shown in Ref.~\cite{kumar2022optimized}, any ellipsoid $\mathcal{C}$ corresponding to Eq.~\eqref{eq:18} \emph{does} depend on the steering parameters, it is punctured, and does not intersect the Bloch sphere.
 
In summary, the allowed regions for the protocol considered in this section are either the points representing the pure states ${\rho}_\oplus$  or ${\rho}_\oplus^\perp$; a line joining  $\rho_\oplus$ and ${\rho}_\oplus^\perp$, i.e., $\tilde\rho_\oplus = a{\rho}_\oplus + (1-a){\rho}_\oplus^\perp$; or a punctured ellipsoid with the minor axis parallel to the line joining $\rho_\oplus$ and ${\rho}_\oplus^\perp$. 

\begin{figure}[hb]
  \begin{subfigure}{0.22\textwidth}
    \includegraphics[width=\linewidth]{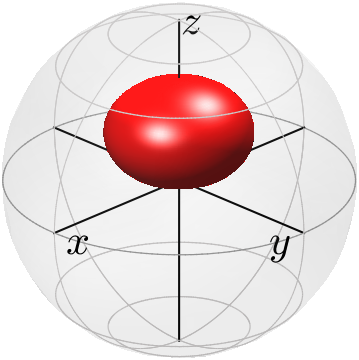}
    \caption{}  \label{fig01:a}
  \end{subfigure}%
 \hspace{1em}  
  \begin{subfigure}{0.22\textwidth}
    \includegraphics[width=\linewidth]{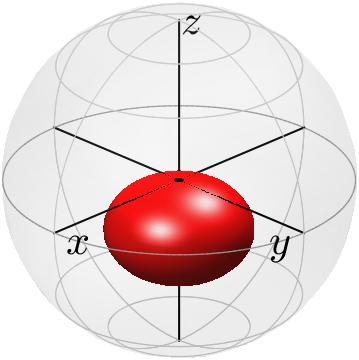}
    \caption{ } \label{fig01:b}
  \end{subfigure}%
  \caption{Stationary ellipsoid representing Eq.~(\ref{eq:18}) with steering parameters (a) $\gamma = 5$, $\kappa \in (0, \infty)$, $a = 0.8$, and (b) $\gamma = 5$, $\kappa \in (0, \infty)$, $a = 0.2$. Each point on the ellipsoids represents a stationary state $\tilde\rho_\oplus$ [Eq.~(\ref{eq:16})] for a specific choice of $\tilde\rho_d$. In panel (b), it is evident that the coordinate origin is not contained in the ellipsoid.} 
  \label{fig:01}
\end{figure}

\subsubsection{Small-error approximation}
Despite having ${\tilde\rho}_\oplus \neq \rho_\oplus$ unless ${\tilde\rho}_d = \rho_d$, a close-to-ideal experimental realization of the protocol would require the errors to be considerably small. Taking this into consideration, in what follows, we will provide a series expansion of the distance measures in the stationary state regime as a function of the steering parameters. This will determine the robustness of the protocol to this error.

The stationary state fidelity [Eq.~\eqref{eq:me1}] and trace distance [Eq.~\eqref{eq:me2}] between $\rho_\oplus$ and $\tilde\rho_\oplus$ together with the linear entropy [Eq.~\eqref{eq:2.7}] of $\tilde\rho_\oplus$ respectively are
\begin{align}
    F_\infty &= \frac{1}{2} + \frac{\gamma^2(2a-1)}{2(\gamma^2 + 8\kappa^2)}, \label{eq:20} \\
    D_{1,\infty} &= \frac{\sqrt{ [(1-a) \gamma^2 + 4\kappa^2]^2 + 4\gamma^2\kappa^2(2a-1)^2  }}{\gamma^2 + 8\kappa^2},\\
    L_\infty  &= 1\notag \\
  &-\frac{ \gamma^2\left(\gamma^2 + 16\kappa^2 \right)\left[ 1 + 2(a - 1)a  \right] + 32\kappa^4 }{\left( \gamma^2 + 8\kappa^2\right)^2}.
\end{align}

Since an error is considered small if $\kappa\rightarrow 0$ and $a \rightarrow 1$, we expand the above expressions first in $\kappa$ and then in $a$ approaching unity from the left. We thus obtain
\begin{align}
F_\infty &= a - \frac{4}{\gamma^2}(2a-1)\kappa^2+ \mathcal{O}(\kappa^4), \label{eq:22.1} \\
D_{1,\infty} &= 1- a + \frac{2}{\gamma^2}\frac{2a-1}{1-a}\kappa^2  +\mathcal{O}(\kappa^4)\label{eq:22.2}\\
L_\infty &= 2a(1-a) +\frac{32}{\gamma^4}(1-2a)^2\kappa^4  +\mathcal{O}(\kappa^6). \label{eq:22.4}
\end{align}
In light of these series expansions, we conclude that the population $(\tilde\rho_d)_{11}$ in Eq.~\eqref{eq:13} dominates the steering at first order in $a$ without any involvement of the ideal decay rate $\gamma$ and the strength of the unitary channel $\kappa$. This situation no longer holds when higher-order terms are considered.
Naturally, $D_{1,\infty}$ tends to zero as $\kappa$ goes to zero faster 
than $a$ to unity, since no coherence must exist in Eq.~\eqref{eq:13} when $\tilde\rho_d = \rho_d$.

\section{Dynamic errors}\label{sec:dynamic_errors}
We now proceed to discuss how dynamic errors affect the steering protocol. In particular, we will investigate the following dynamic errors: (i) fluctuating steering directions, (ii) imperfect control over detector-system interaction coupling, (iii) environmentally induced perturbation in the desired steering Hamiltonian, and (iv) fluctuating directions at which the detectors are projected (i.e., projectively measured).

Error (i) is exclusively quenched and originates from the fact that different steering directions can appear at different steps of the steering dynamics due to an erroneous detector-system interaction. The directions might be continuously or discretely distributed. For the latter case, we show three novel stochastic master equations (SMEs), Eqs.~(\ref{eq:sde1},\ref{eq:ssde5},\ref{eq:sde11}), that exhibit different averaging hierarchies. One of these SMEs [see Eq.~\eqref{eq:sde1}] describes the full stochasticity of the problem, i.e., the stochasticity due to fluctuating steering directions, as well as due to the quantum-mechanical randomness of the outcomes of the measurements for a particular observable. However, we can opt to average out either the measurement stochasticity [see Eq.~\eqref{eq:sde11}] or the steering directions stochasticity [see Eq.~\eqref{eq:ssde5}], which leads to the remaining SMEs. Once both stochasticities are averaged out, all three SMEs lead to the same Lindblad master equation [see Eq.~\eqref{eq:sde4.1.1}], as the averages over the two stochastic processes commute. As we shall see, the unraveled LE is purely dissipative and has as many dissipation channels as steering directions.  Turning to a stationary state analysis, we will compare the ideal target state with the stationary state obtained when two steering directions (or states) appear. These states are symmetrically located on the Bloch sphere relative to the ideal target state. 

Error (ii) becomes quenched if the detector-system coupling strength is drawn from a probability distribution at each steering step. Alternatively, the coupling constant can experience a random time-dependent external perturbation during the measurement steps, e.g., a multiplicative white noise, making this error time-dependent. As we shall see, the LE describing the system dynamics with this particular error will have the same dissipative channel as the one with the ideal steering protocol [cf. Eqs~\eqref{eq:cc1},~\eqref{eq:cc1.2} and ~\eqref{eq:cc6}]. However, it will have an effective channel strength influenced by the noise. 

Considering error (iii), we will examine the LE describing the system dynamics [Eq.~\eqref{eq:c13}] using two approaches: in the first approach, we will obtain a novel SME [Eq.~\eqref{eq:esto1}] that includes both measurement and noise stochasticities. In the second approach, we average the reduced dynamics of the system directly. The resulting LE contains the ideal dissipator (i.e., the one that steers the system toward the desired target state) and other dissipative channels caused by the error.

Since the blind measurement is independent of the detector's basis, the steering protocol remains fully robust against error (iv). It will be shown, however, that by varying the basis for measuring the detectors, it is possible to obtain an SME [Eq.~\eqref{eq:pb6}] including contributions from both jump-type and diffusive-type measurements \cite{ barchielli1991measurements}.

\subsection{Error in the steering direction}\label{ssec:errors_steering_direction}

Suppose we want to steer a qubit toward the ideal target state ${\rho}_\oplus$, with detectors prepared in the state $\rho_d = \ket{\Phi_d}\bra*{\Phi_d}$. At each steering step, the local observable $S_d$ [see Eq.~\eqref{eq:Sd}]
is measured, and, as the steering direction fluctuates, the system gets steered toward an erroneous target state $\omega_i$ (such that $\Tr_s\omega_i^2 = 1$) with probability $p(i)$. Let us denote the set of target states and their associated probabilities as 
\begin{equation}\label{eq:23}
    \mathcal{R} \coloneqq \{ \omega_i;p(i)\}_{i \in \mathcal{I}}
\end{equation}
where $\mathcal{I} = \{1,\ldots, n \}$ is an index set. Note that $\rho_\oplus \in \mathcal{R}$ is not a requirement. Furthermore, the set of discrete probabilities $\{p(i)\}_{i\in \mathcal{I}}$ may become a continuous probability distribution.

The dimensionless detector-system interaction Hamiltonian that steers the system toward $\omega_i$ can be written as [cf. Eq.~\eqref{eq:01}]
\begin{equation}\label{eq:pe2}
h_0^{(i)} = \ket*{\Phi_d^\perp}\bra{\Phi_d} \otimes A_i + \hc, 
\end{equation}
where, for every $i \in \mathcal{I}$,
\begin{equation}\label{eq:pe2.1}
A_i = A(\theta_i, \varphi_i) \coloneqq  R(\theta_i,\varphi_i) A R^\dagger(\theta_i,\varphi_i) 
\end{equation}
is the operator $A$ rotated toward the $i$-th direction under the action of the rotation operator
\begin{equation}\label{eq:7.3}
    R(\theta_i,\varphi_i) \coloneqq \exp(-i\frac{\varphi_i}{2}\sigma^z) \exp(-i\frac{\theta_i}{2}\sigma^y),
\end{equation}
and so $A_i \omega_i = 0$. The unitary operator evolving the detector-system state corresponding to Eq.~\eqref{eq:pe2} is denoted as 
\begin{equation}
    U_i(\delta t) = \exp(-i\sqrt{\gamma \delta t}h_0^{(i)}),
\end{equation}
where the scaling relation $J = \sqrt{\gamma/\delta t}$, which leads to the WM limit, has been set. 
By taking advantage of the rotation angles in Eq.~\eqref{eq:pe2.1}, we denote the states $\omega_i$ by their associated angles $(\theta_i,\varphi_i)$. Hence, Eq.~\eqref{eq:23} can be conveniently written as $\mathcal{R} = \{(\theta_i,\varphi_i); p(i) \}_{i\in \mathcal{I}}$.  

We proceed to describe the dynamics of the steered system via SMEs portraying different averaging hierarchies for the derivation of the LE. Specifically, we will analyze three cases: (a) direct (simultaneous) averaging of the SME over both the \emph{quantum-mechanical} stochasticity introduced by random measurement readouts and the \emph{classical} stochasticity introduced by choice of the measurement direction at each step; (b) averaging first over the steering directions, keeping a particular sequence of readouts, followed by averaging over readouts at the later stage; (c) averaging first over detector readouts for a given sequence of steering directions, followed by averaging over these directions. A summary of the upcoming SMEs and their averages is illustrated in Fig.~\ref{fig:00001}.

An important question regarding these averaging hierarchies concerns the commutativity of the averaging procedures (b) and (c). Indeed, in the fully stochastic consideration based on quantum trajectories, the detector readouts (click or no-click) are conditioned to the given steering direction. However, within the hierarchy (b), the first averaging is performed over all possible steering directions for a fixed (say, click) measurement outcome, whose probability may strongly depend on the steering direction. Note that one might think that such averaging violates the conditional relation between detector outcomes and directions. Nevertheless, as shown below, the resulting LEs for all three averaging hierarchies coincide. 

At the same time, the SMEs for the partially averaged density matrices are different in the three cases of averaging hierarchies. Each of them bears important information on the dynamics of the system, in particular, on the statistics of quantum trajectories, which can be experimentally probed in a finite number of protocol runs.
For example, hierarchy (b) is relevant to the situation when the fluctuations of the steering direction are uncontrolled, while each protocol run yields a definite sequence of readouts. Hierarchy (c) can be experimentally realized by performing multiple runs for a fixed sequence of steering directions intentionally chosen to test the robustness of the protocol. The information extracted from the corresponding SMEs can also be employed for active-decision strategies \cite{herasimenko}, particularly in a termination policy determining the optimum number of steering steps.

\subsubsection{Stochastic steering directions and detector outcomes}

Following the quantum trajectory formalism described in Sec.~\ref{sec:ideal_protocol_and_setup} [cf. Eq.~\eqref{eq:07}], an SME that simultaneously describes the erroneous steering and measurement stochasticities takes the form (see Fig.~\ref{fig:00001})
\begin{widetext}
\begin{equation}\label{eq:sde1}
    \dd \omega_s(t) = \sum_{i \in \mathcal{I}}\chi_i(t)\left[ \gamma \mathcal{D}(A_i)\omega_s(t)\dd t  + \left( \frac{A_i\omega_s(t)A_i^\dagger}{\expval*{A_i^\dagger A_i}_t} - \omega_s(t) \right)\left(\dd N_i(t)- \gamma \expval*{A^\dagger_i A_i}_t\dd t \right)\right].
\end{equation}
\end{widetext}
The derivation of the above SME is shown in Appendix~\ref{sec:appendix_jump_several_directions_static}.
This equation can be understood as follows. Between $t$ and $t + \dd t$, only one of the stochastic variables (or indicators), say, $\chi_i(t)$, equals unity, and the rest are zero. This occurs randomly with a probability
\begin{equation}\label{eq:sde2}
    \mathbb{E}[\chi_i(t)] = p(i) \quad \forall i \in \mathcal{I},
\end{equation}
where $\mathbb{E}$ denotes the trajectory average. At the same time, the Poissonian increment $\dd N_i(t)$ describes a jump [$\dd N_i(t) = 1$] or the lack of it [$\dd N_i(t) = 0$] corresponding to the $i$-th steering direction for which $\chi_i(t)=1$. The strength of each counting process is given by the mean 
\begin{equation}\label{eq:sde3}
    \mathbb{E}[\dd N_i(t)] = \gamma \expval*{A_i^\dagger A_i}_t\dd t = \gamma \Tr[A_i^\dagger A_i \omega_s(t)]\dd t,
\end{equation}
given that the trajectory $\omega_s(t)$ has been realized. To be more precise, in the context of the dynamics described by Eq.~\eqref{eq:sde1}, we must consider the product of each indicator with its correspondent counting process, that is,
\begin{equation}\label{eq:sde3.1}
    \mathbb{E}[\chi_i(t)\dd N_i(t)] = p(i)\gamma \expval*{A_i^\dagger A_i}_t\dd t.
\end{equation}

We note that a direct consequence of Eq.~\eqref{eq:sde1} is that if $\omega_s^2(t) =\omega_s(t)$, then $\omega_s^2(t + \dd t) = \omega_s(t+ \dd t)$, i.e., the evolution of a pure state remains pure. Therefore, Eq.~\eqref{eq:sde1} with
$$\omega_s(t) = \ket{\psi_s(t)}\bra{\psi_s(t)}$$
is equivalent to the stochastic Scrhödinger equation
\begin{multline}\label{eq:sde4.1}
    \dd \ket{\psi_s(t)} = -\frac{1}{2}\sum_{i \in \mathcal{I}}\left(\gamma A_i^\dagger A_i - \gamma \expval*{A_i^\dagger A_i}_t \right)\ket{\psi_s(t)}\chi_i(t)\dd t \\
    + \sum_{i \in \mathcal{I}}\left(\frac{A_i}{\sqrt{\expval*{A_i^\dagger A_i}_t} }- I_s \right)\ket{\psi_s(t)} \chi_i(t)\dd N_i(t),
\end{multline}
where
\begin{align}
\expval*{A_i^\dagger A_i}_t &= \bra{\psi_s(t)}A_i^\dagger A_i\ket{\psi_s(t)}.
\end{align}
 Interestingly, if we set $\chi_i(t) = 1$ for all $i \in \mathcal{I}$ in Eq.~\eqref{eq:sde4.1}, we would obtain a standard stochastic Scrhödinger equation of the jump type describing the continuous monitoring of a quantum system by $n$ detectors \cite{gross_qubit_2018,brun_simple_2002,barchielli1991measurements,breuer2002theory,wiseman2009quantum} instead of a chain of monitoring detectors. 

In Fig.~\ref{fig40:a}, we show a solution (i.e., a trajectory) of Eq.~\eqref{eq:sde1} in the Bloch representation with the steering directions and their probabilities [cf. Eq.~\eqref{eq:23}] given by
\begin{equation}\label{eq:erre}
    \mathcal{R} = \{(\pi/3,0; 0.5),(\pi/3,\pi; 0.5)\},
\end{equation}
a decay $\gamma = 0.1$, $\delta t = 0.1$, and with an initial pure state $\bm{r}(0) = (1,0,-1)/\sqrt{2}$. It can be observed in Fig.~\ref{fig40:a} that the evolved state is always pure and, at random, either evolves continuously or jumps to one of the states in $\mathcal{R}$. A transverse cross-section of the Bloch sphere from Fig.~\ref{fig40:a} is shown in Fig.~\ref{fig40:f}. In this figure, the straight lines represent the jumps. A lack of a stationary state can be observed due to the never-ending jumps and continuous evolution of the statistical operator evolves continuously.  

Taking the full average of Eq.~\eqref{eq:sde1} removes all the stochastic terms, resulting in the LE (see Fig.~\ref{fig:00001})
\begin{equation}\label{eq:sde4.1.1}
\partial_t \rho_s(t) = \sum_{i \in \mathcal{I}} \gamma p(i)\mathcal{D}(A_i)\rho_s(t).
\end{equation}
This master equation is purely dissipative and has as many dissipators as steering directions. Continuing with the above example, averaging over $10^3$ trajectories of the form shown in Fig.~\ref{fig40:a} gives an approximate solution to the corresponding LE. The corresponding averaged trajectory is shown in Figs.~\ref{fig40:d}-\ref{fig40:i} in red using the Bloch representation, and it is compared with the exact solution, in black, of the corresponding Lindblad equation
\begin{align}
    [\rho_s(t)]_{11}&= \frac{9}{10} - \frac{8+5\sqrt{2}}{20}\exp(-\frac{5\gamma}{8} t), \label{eq:lind01}\\
    [\rho_s(t)]_{12}&= \frac{1}{2\sqrt{2}}\exp(-\frac{7\gamma}{8})t. \label{eq:lind02}
\end{align}

As shown in Figs.~\ref{fig40:d} and \ref{fig40:i}, the resulting stationary state is mixed because of the competition of dissipative channels. To complete the picture, the solution of the ideal LE---steering toward the north pole of the Bloch sphere---as given in Eqs.~\eqref{eq:15.127-1}-\eqref{eq:15.129}, is depicted in Figs.~\ref{fig40:e} and \ref{fig40:j}, where again $\bm{r}(0) = (1,0,-1)/\sqrt{2}$ and $\gamma = 0.1$. Figure \ref{fig:301} shows the time dependence of the Bloch components depicted in Fig.~\ref{fig:40}, as well as the quantifiers Eqs.~\eqref{eq:2.7},~\eqref{eq:me1}, and~\eqref{eq:me2}.

\subsubsection{Averaged steering directions}

 We can now examine the situation in which the steering directional stochasticity has been averaged out, leaving only the measurement stochasticity in the system dynamics. In that case, 
 the relation between a  (partially averaged) density matrix $\pi_s(t)$  with the state $\omega_s(t)$---a solution of Eq.~\eqref{eq:sde1}---is given by
\begin{equation}\label{eq:sde8}
    \pi_s(t) \coloneqq \mathbb{E}_i[\omega_s(t)],
\end{equation}
where $\mathbb{E}_i$ denotes the
classical average over the steering directions.
The following SME gives the evolution of the system state:
\begin{widetext}
\begin{equation}\label{eq:ssde5}
\dd \pi_s(t) = \sum_{i \in \mathcal{I}}\gamma p(i) \mathcal{D}(A_i)\pi_s(t)\dd t + \left( \frac{\sum_{i \in \mathcal{I}}p(i)A_i\pi_s(t) A_i^\dagger}{\expval{\sum_{j \in \mathcal{I}}p(j)A_j^\dagger A_j}_t} - \pi_s(t)  \right)\left( \dd N(t) - \gamma \expval{\sum_{i \in \mathcal{I}}p(i)A^\dagger_iA_i}\dd t \right).
\end{equation}
\end{widetext}
The derivation of this SME is presented in
Appendix~\ref{sec:appendix_jump_several_directions_static}, and interestingly, it can also be obtained from the detector-system interaction 
\begin{equation}\label{eq:wn1}
    H_\text{ds}(t\vert \{\xi_i \}) = \sum_{i=1}^N \sqrt{\gamma p(i)} \xi_i(t)h_0^{(i)},
\end{equation}
where $h_0^{(i)}$ is given by Eq.~\eqref{eq:pe2}, and, for all $i,j=1,2,3$ the $\{\xi_i(t)\}_i$ are delta-correlated white noises satisfying
\begin{equation}\label{eq:wn2}
    \mathbb{E}[\xi_i(t)] = 0, \quad \mathbb{E}[\xi_i(t)\xi_j(s)] = \delta_{ij}\delta(t-s).
\end{equation}
The equivalence is shown in Appendix~\ref{sec:several_white_noises}.

Several differences between Eqs.~\eqref{eq:ssde5} and \eqref{eq:sde1} are worth noting. 
Importantly, only the deterministic part of Eq.~\eqref{eq:ssde5}, \begin{equation}\label{eq:newsde}
     \partial_t \pi^\text{det}_s(t) =  - \sum_{i \in \mathcal{I}} \frac{\gamma p(i)}{2}\{A_i^\dagger A_i- \expval*{A_i^\dagger A_i}_t, \pi^\text{det}_s(t)\},
\end{equation}
respects purity, whereas the stochastic part does not. Above, $\pi_s^{\text{det}}$ denotes a density matrix that evolves deterministically. Hence, in general, having $\pi_s^2(t) = \pi_s(t)$ does not imply $\pi_s ^2(t +\dd t) = \pi_s(t+ \dd t)$, which prevents associating a stochastic Schrödinger equation with Eq.~\eqref{eq:ssde5}. 
If a click is registered, the resultant state is mixed.
Because of such jumps toward a mixed state, we also note that this equation does not have the form of a conventional SME. Moreover, Eq.~\eqref{eq:newsde} also describes the deterministic evolution of a qubit continuously and simultaneously monitored by $n$ detectors, where no jump to a pure state is registered \cite{barchielli1991measurements}. Hence,
an observer who has access to a trajectory $\pi_s(t)$ that solves Eq.~\eqref{eq:newsde}, and has only partial information about the experimental setup, would be unable to discern whether $\pi_s(t)$ describes the former model or the one described by Eq.~\eqref{eq:ssde5} where no jump is registered. 

The Bloch vector of the stable, stationary state solution of Eq.~\eqref{eq:newsde} is \begin{equation}\label{eq:sde6}
    \bm{r}_\infty = -\frac{\Tr[\sum_{i\in \mathcal{I}} p(i)A_i^\dagger A_i \bm{\sigma}]}{\norm{\Tr[\sum_{i\in \mathcal{I}} p(i)A_i^\dagger A_i \bm{\sigma}]}}.
\end{equation}
(see Appendix~\ref{sec:appendix_jump_several_directions_static}). Hence, if $\dd N(t) = 0$ in Eq.~\eqref{eq:ssde5}, the updated state $\pi_s(t+\dd t)$ gets closer to the state represented by Eq.~\eqref{eq:sde6}, as $\pi_s(t)$ continuously evolves toward the former stationary state. 
 
Furthermore, given a particular trajectory $\pi_s(t)$ the probability of the click between $t$ and $t+\dd t$ is given by 
\begin{equation}\label{eq:sde7}
    \mathbb{E}_\alpha[\dd N(t)] = \gamma \expval{\sum_{i\in \mathcal{I}}p(i)A_i^\dagger A_i}_t\dd t, 
\end{equation}
where the expectation value $\expval{\bullet}_t$ is taken with respect to $\pi_s(t)$, and $\mathbb{E}_\alpha$ denotes the quantum average with respect to the detector readouts after the classical average $\mathbb{E}_i$ over the steering directions is taken. 

In connection to the last point, the fully averaged density matrix of the system is expressed as
\begin{equation}
\rho_s(t) = \mathbb{E}_\alpha[\pi_s(t)] = (\mathbb{E}_\alpha \circ \mathbb{E}_i)[\omega_s(t)] = \mathbb{E}[\omega_s(t)].   \end{equation}
This also implies, as anticipated, that taking the expectation value $\mathbb{E}_\alpha$ in Eq. \eqref{eq:ssde5} coincides with Eq. \eqref{eq:sde4.1.1}. In a similar spirit, the relation between the stochastic variables of Eq. \eqref{eq:sde1} with those of Eq. \eqref{eq:ssde5} with respect to the partial averaging $\mathbb{E}_i$ are (see Appendix~\ref{sec:appendix_jump_several_directions_static})
\begin{align}
\mathbb{E}_i[\chi_i(t)] &= p(i), \label{eq:sde9} \\
    \mathbb{E}_i[\chi_i(t)\dd N_i(t)] &=  \frac{p(i)\Tr[A_i^\dagger A_i \pi_s(t)]\dd N(t)}{\Tr[\sum_{j\in \mathcal{I}}p(j)A_j^\dagger A_j\pi_s(t)]}.
\end{align}

In Fig.~\ref{fig40:b}, we illustrate a particular trajectory of Eq.~\eqref{eq:ssde5} in the Bloch representation, and in Fig.~\ref{fig40:g} we show a transversal cross-section of the Bloch sphere. The erroneous states are given by Eq.~\eqref{eq:erre}. There, the Bloch vector of this trajectory evolves continuously along the surface of the sphere from its initial pure state and then continues with a jump toward a mixed state. Afterward, it can be seen that the state follows another continuous trajectory (now inside the Bloch ball) as it tries to reach the state $\bm{r}_\infty = (0,0,1)$ [see Eq.~\eqref{eq:sde6}].

Before introducing the following average hierarchy, we would like to contrast the quenched dynamic error considered here with the static error of erroneously prepared detectors (Sec.~\ref{ssec:errors_in_the_detector}) to highlight the importance of Eq.~ \eqref{eq:ssde5}.

We could have formulated the dynamic error of fluctuating steering directions in a different yet equivalent manner. Instead of having detectors prepared in $\rho_d = \ket{\Phi_d}\bra{\Phi_d}$ interacting with the system via the fluctuating (dimensionless) Hamiltonians in Eq.~ \eqref{eq:pe2}, and always measuring the local observable $S_d = \ket{\Phi_d}\bra{\Phi_d} - \ket*{\Phi_d^\perp}\bra*{\Phi_d^\perp}$, we could instead have detectors randomly prepared in states of the form $\ket{\Phi^{i}_d} = R(\theta_i,\varphi_i)\ket{\Phi_d}$ [cf. Eq.~\eqref{eq:7.3}] interacting with the steered system via the Hamiltonian $$\tilde{h}_0^{(i)} = \ket*{\Phi^{i,\perp}_d}\bra*{\Phi^{i}_d}\otimes A_i  +\hc$$ 

Now, after the interaction takes place, the local observable that must be measured is 
$$S_d^{(i)} = \ket*{\Phi_d^i}\bra{\Phi_d^i} - \ket*{\Phi^{i,\perp}_d}\bra*{\Phi^{i,\perp}_d}$$ 
instead of $S_d$, and the possible system outcomes are still given by Eq.~\eqref{eq:sde1}, if no average is taken. 

On the other hand, if the detector readouts are averaged, the state of the system is given by Eq.~ \eqref{eq:ssde5}. Taking a step further, performing a blind measurement (or a trajectory average) with the already averaged detector directions gives, once again, Eq.~\eqref{eq:sde4.1.1}. This LE is fundamentally different from Eq.~\eqref{eq:14}, obtained from erroneously prepared detectors interacting with the system through the \emph{same} Hamiltonian. Although the two dissipators appearing in Eq.~\eqref{eq:14} can be obtained by having two possible orthogonal detector states, there is no combination of steering directions that, upon total averaging, would induce a unitary channel, as each erroneous detector state interacts with the system with a \emph{different} Hamiltonian (see discussion above).   

\begin{figure*}[t]
  \begin{subfigure}{0.2\textwidth}
    \includegraphics[width=\linewidth]{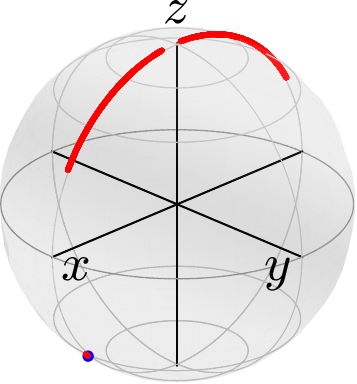}
    \caption{}  \label{fig40:a}
  \end{subfigure}%
  \begin{subfigure}{0.2\textwidth}
    \includegraphics[width=\linewidth]{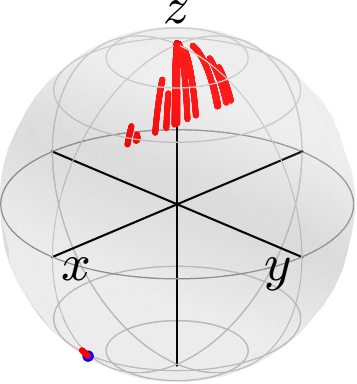}
    \caption{ } \label{fig40:b}
  \end{subfigure}%
  \begin{subfigure}{0.2\textwidth}
    \includegraphics[width=\linewidth]{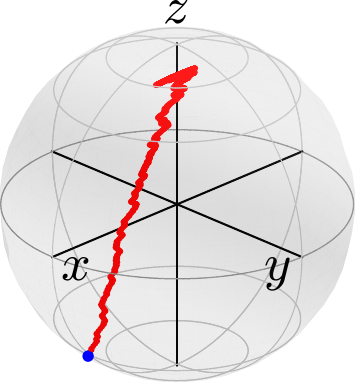}
    \caption{ } \label{fig40:c}
  \end{subfigure}%
   \begin{subfigure}{0.2\textwidth}
    \includegraphics[width=\linewidth]{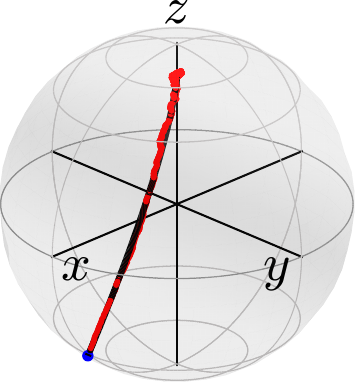}
    \caption{ } \label{fig40:d}
  \end{subfigure}%
   \begin{subfigure}{0.2\textwidth}
    \includegraphics[width=\linewidth]{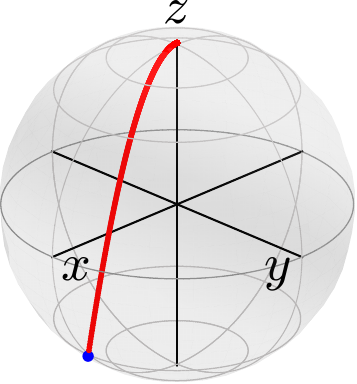}
    \caption{ } \label{fig40:e}
  \end{subfigure}%
 \vskip\baselineskip
 \begin{subfigure}{0.2\textwidth}
    \includegraphics[width=\linewidth]{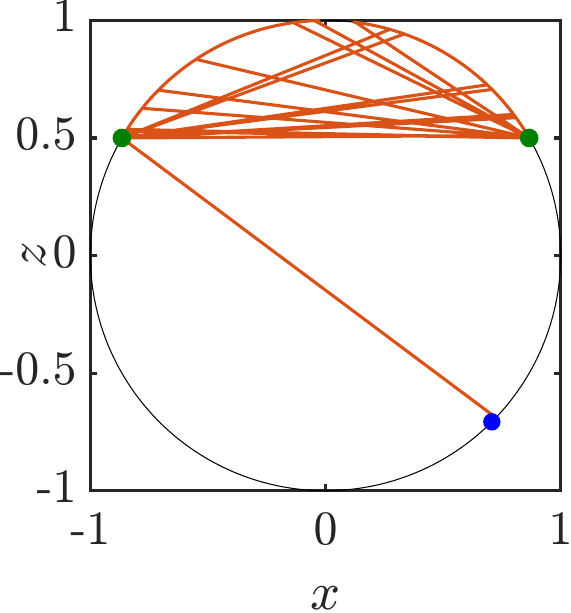}
    \caption{}  \label{fig40:f}
  \end{subfigure}%
  \begin{subfigure}{0.2\textwidth}
    \includegraphics[width=\linewidth]{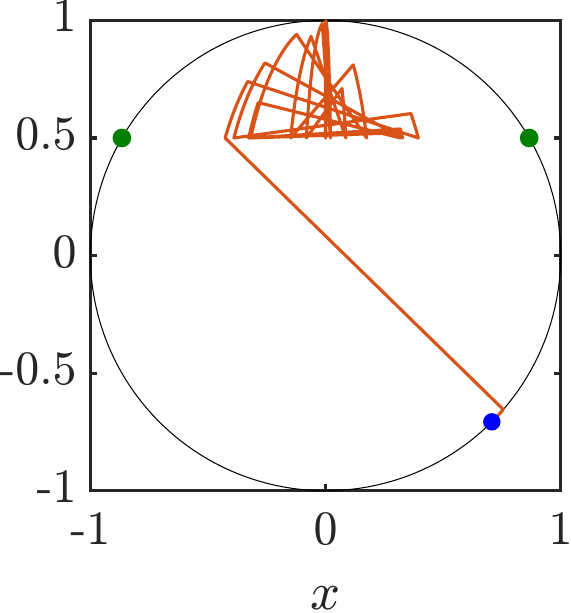}
    \caption{ } \label{fig40:g}
  \end{subfigure}%
  \begin{subfigure}{0.2\textwidth}
    \includegraphics[width=\linewidth]{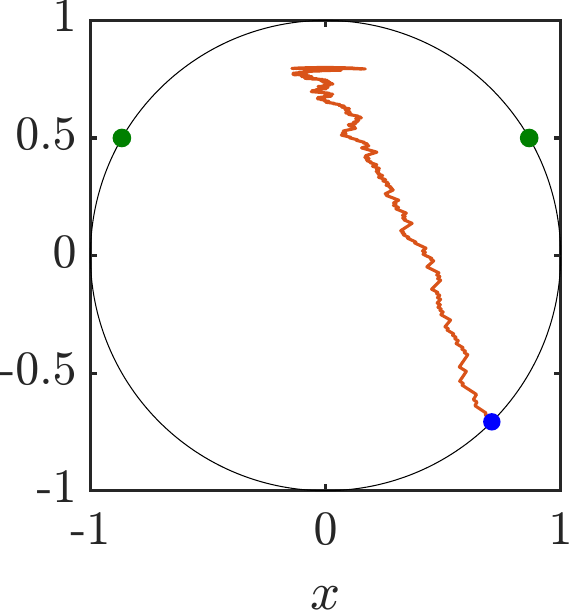}
    \caption{ } \label{fig40:h}
  \end{subfigure}%
   \begin{subfigure}{0.2\textwidth}
    \includegraphics[width=\linewidth]{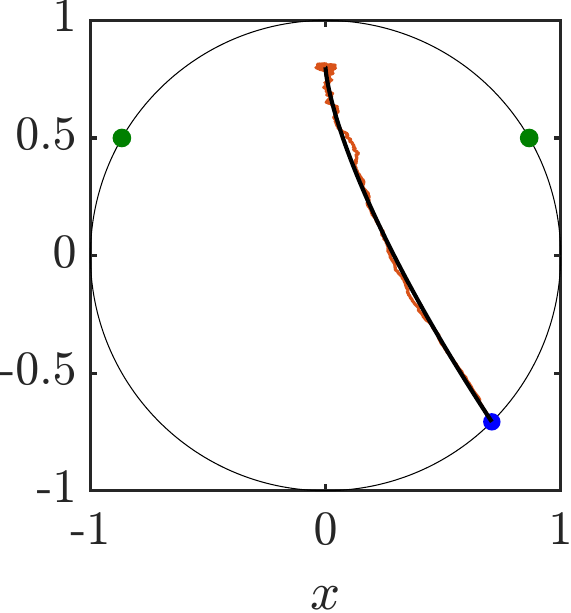}
    \caption{ } \label{fig40:i}
  \end{subfigure}%
   \begin{subfigure}{0.2\textwidth}
    \includegraphics[width=\linewidth]{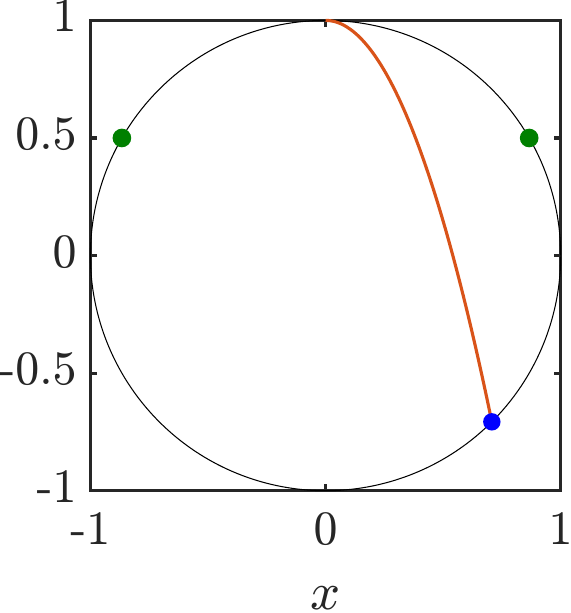}
    \caption{ } \label{fig40:j}
  \end{subfigure}%
  \caption{Dynamics of the Bloch vector of a steered qubit with $\mathcal{R} = \{(\pi/3,0,0.5),(\pi/3,\pi,0.5) \}$ as the set of target states [cf. Eq.~\eqref{eq:23}] under different averaging hierarchies for $\gamma = 0.1$, initial point $\bm{r}(0) = (1,0,-1)/\sqrt{2}$ (represented by the blue dot), and $\delta t = 0.1$. The two target states in $\mathcal{R}$ are represented by the two green dots in (f)-(j). Panels (a)-(c) and their respective transverse cross-sections in panels (f)-(i) display several representative quantum trajectories, corresponding to full stochasticity retained (a), averaging over the steering directions (b), and averaging over detector outcomes (c), which are solutions of Eqs.~\eqref{eq:sde1},~\eqref{eq:ssde5}, and~\eqref{eq:sde11}, respectively. Panels (e) and (j) show the averaged dynamics of the Bloch vector for ideal steering toward the north pole [see Eq.~\eqref{eq09}]. The jumps from a prior state are represented in panels (f) and (g) by straight lines. Panels (d), in black, and (i), in red, represent the analytical [see Eqs.~\eqref{eq:lind01}-\eqref{eq:lind02}] and numerical fully averaged dynamics of the erroneous steering with $\mathcal{R}$ given as above. The numerical average is performed over $10^3$ trajectories. We note that the final state, although not pure, is closer to the north pole than the midpoint of the straight line (not shown) connecting the two green dots. In fact, for all averaging hierarchies, under a sufficient number of steering steps, the dynamics of the Bloch vector associated with the steered state gets locked above the line joining the two (erroneous) target states.}
  \label{fig:40}
\end{figure*}

\subsubsection{Averaged detector outcomes}

Let us now discuss the third averaging hierarchy, where first, the measurement stochasticity is averaged out, but the directional stochasticity is kept. For this case, the detector readouts in Eq.~\eqref{eq:sde1} are averaged so that the SME that describes the system dynamics is given by (see Appendix~\ref{sec:appendix_jump_several_directions_static} for the derivation; see also Fig.~\ref{fig:00001})
 \begin{equation}\label{eq:sde11}
     \dd \sigma_s(t) = \sum_{i \in \mathcal{I}}\gamma \mathcal{D}(A_i)\sigma_s(t)\chi_i(t)\dd t,
 \end{equation}
where [cf. Eq.~\eqref{eq:sde8}]
\begin{equation}\label{eq:sde12}
    \sigma_s(t) \coloneqq \mathbb{E}_\alpha[\omega_s(t)],
\end{equation}
relates the partially averaged density matrix $\sigma_s(t)$ with the quantum trajectories $\omega_s(t)$ containing the two stochastic processes. One can see that taking the mean of Eq.~\eqref{eq:sde11} yields Eq.~\eqref{eq:sde4.1.1}. The SME~\eqref{eq:sde11} describes the random appearance of all the possible dissipation channels where only one is active within each steering step. Contrary to the two previous SMEs, Eqs.~\eqref{eq:sde1} and \eqref{eq:ssde5}, Eq.~\eqref{eq:sde11} never respects purity. An example of this is shown in Fig.~\ref{fig40:c}, where $\mathcal{R}$ is again given by 
Eq.~\eqref{eq:erre}.

\begin{figure*}[t]
\centering
 \begin{subfigure}{0.9\textwidth}
    \includegraphics[width=\linewidth]{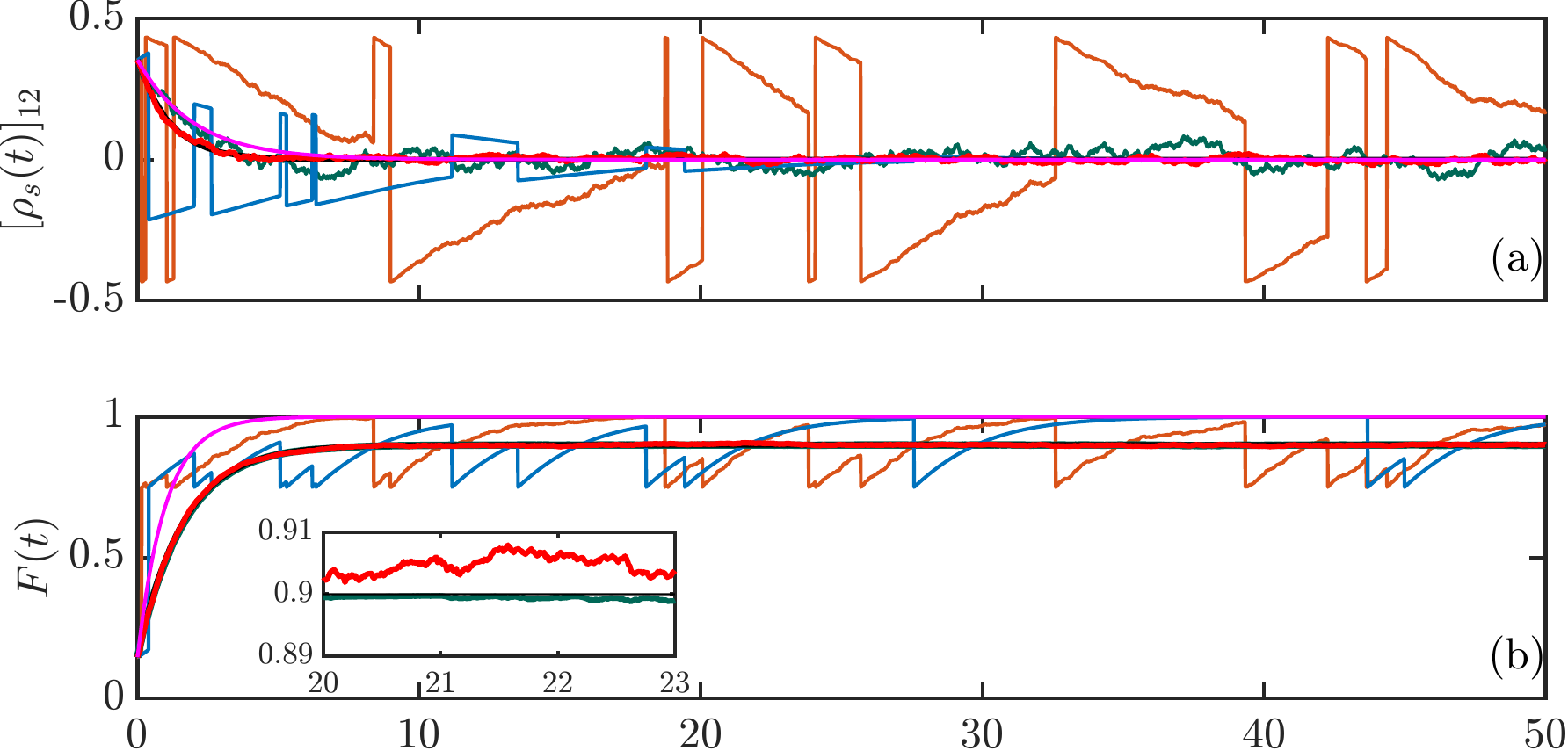}
  \end{subfigure}%
   \vskip\baselineskip
   \begin{subfigure}{0.9\textwidth}
    \includegraphics[width=\linewidth]{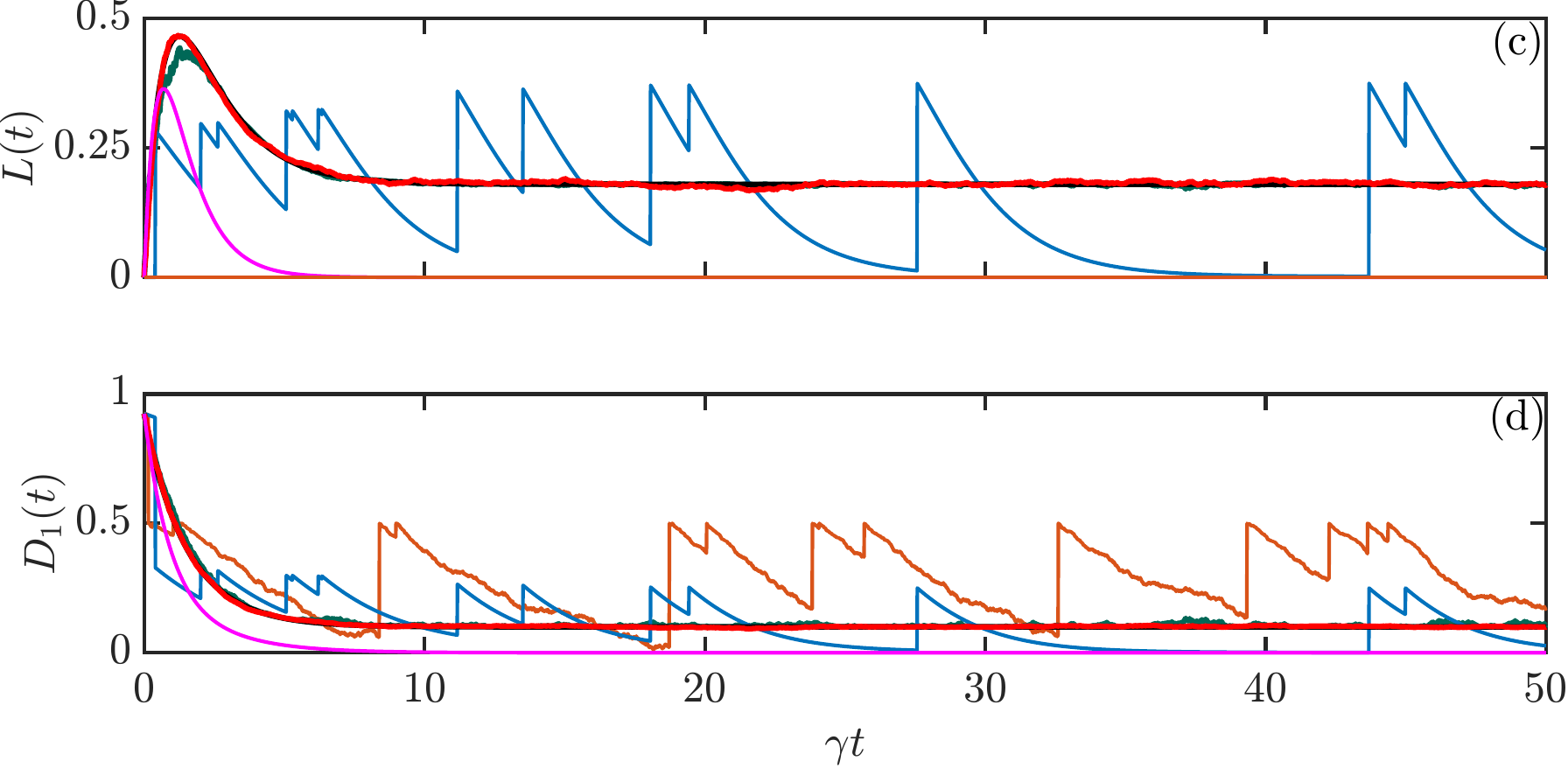}
  \end{subfigure}%
  \caption{Entries and distance measures for different averaging hierarchies as functions of time with the set of steering states Eq.~\eqref{eq:erre}. (a) Coherences of the density matrices. (b) Fidelity $F(t)$ [Eq.~\eqref{eq:me1}]. (c) linear entropy $L(t)$ [Eq.~\eqref{eq:2.7}] of several density matrices---corresponding either to stochastic master equations or Lindblad equations---with the ideal target state $\rho_\oplus = \ket{\uparrow}\bra{\uparrow}$. (d) Trace distance $D_1(t)$ [Eq.~\eqref{eq:me2}].  Orange curves correspond to the results of a quantum trajectory of Eq.~\eqref{eq:sde1} (cf. Figs.~\ref{fig40:a} and~\ref{fig40:f}), where both the classical and quantum stochastic process are present. Blue curves correspond to a quantum trajectory of Eq.~\eqref{eq:ssde5} (cf. Figs.~\ref{fig40:b} and~\ref{fig40:g}), where the classical average over the steering directions $\mathbb{E}_i$ is taken. Red curves show the quantum trajectory solving Eq.~\eqref{eq:sde11} (cf. Figs.~\ref{fig40:c} and~\ref{fig40:h}), where the quantum average $\mathbb{E}_\alpha$ over detector readouts is taken. Magenta and black curves correspond to the solution of the ideal LE~\eqref{eq09} (cf. Figs.~\ref{fig40:e} and \ref{fig40:j}) and the fully (i.e., with respect to quantum and classical stochasticity) averaged LE~\eqref{eq:sde4.1.1} (cf. Figs.~\ref{fig40:d} and \ref{fig40:i}), respectively. Green curves correspond to the average over $10^3$ quantum trajectories.
  } 
  \label{fig:301}
\end{figure*}

\subsubsection{Comparison of the steering dynamics for different averaging hierarchies}

In Fig.~\ref{fig:301}, we present the distance measures and the entries of density matrices of trajectories corresponding to the different averaging hierarchies, as well as for an ideal (single steering direction-north pole), fully averaged steering. For the trajectory displaying full stochasticity [see panel (b) in orange],  when no jump occurs, the fidelity grows continuously and monotonically, sometimes getting very close to unity [as shown by $D_1(t)$ in panel (d) also in orange]. This behavior repeats after each jump until the steering stops. It is also worth noting that, as long as the initial state is pure, the linear entropy [depicted in panel (c)] is always zero for this trajectory and every other described by the SME~\eqref{eq:sde1}.

\begin{figure*}[t!]
    \centering
    \includegraphics[width=\linewidth]{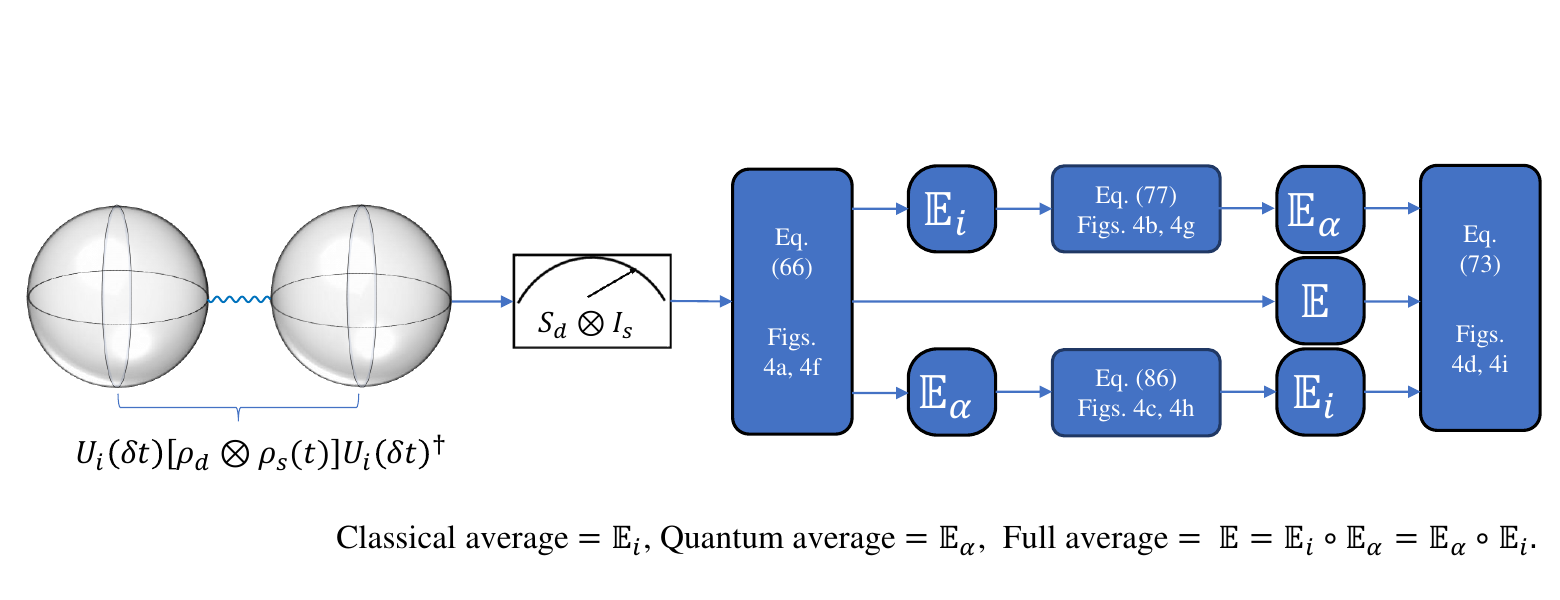}
    \caption{Illustration of the averaging hierarchies for measurements with randomly chosen measurement directions at each protocol step. After detector and system jointly evolve under the unitary operator $U_i(\delta t) \coloneqq \exp(-i\sqrt{\gamma \delta t}h_0^{(i)}),$ which is induced by $h_0^{(i)}$ given by Eq.~\eqref{eq:pe2}, the observable $S_d\otimes I_s$, with $S_d = \ket{\Phi_d}\bra*{\Phi_d} - \ket*{\Phi_d^\perp}\bra*{\Phi_d^\perp}$, is measured. The detector-system interaction $h_0^{(i)}$ for the measurement direction $i$ appears randomly with a probability $p(i)$. The stochastic master equation describing the system outcomes once $S_d\otimes I_s$ is measured is given by Eq.~\eqref{eq:sde1}. An example of a trajectory corresponding to this stochastic master equation is shown in Fig.~\ref{fig40:a} where the steering directions $\mathcal{R}$ are given by Eq.~\eqref{eq:erre}. If no selection of the steering directions is made, Eq.~\eqref{eq:sde1} must be averaged with respect to the directions. This classical average is denoted by $\mathbb{E}_i$. Upon taking the classical average, the resulting stochastic master equation is given by Eq.~\eqref{eq:ssde5}, and a particular trajectory is shown in Fig.~\ref{fig40:b} for the same $\mathcal{R}$ as before. Averaging Eq.~\eqref{eq:ssde5} with respect to the detector outcomes, which is a quantum mechanical average denoted by $\mathbb{E}_\alpha$, unravels the Lindblad master equation~\eqref{eq:sde4.1.1}. Returning to Eq.~\eqref{eq:sde1}, one can first take the quantum average and get Eq.~\eqref{eq:sde11}. A particular trajectory of this SME is shown in Fig.~\ref{fig40:c} with the same $\mathcal{R}$ as before. Averaging the latter equation over the steering directions (i.e., taking the classical average) also gives the LE.~\eqref{eq:sde4.1.1}. The overall procedure shows that the order in which the averages are taken is immaterial to obtain the same LE. Hence, the total average is $\mathbb{E} = \mathbb{E}_\alpha\circ \mathbb{E}_i = \mathbb{E}_i \circ \mathbb{E}_\alpha$, implying that  Eq.~\eqref{eq:sde4.1.1} can be directly unraveled from Eq.~\eqref{eq:sde1}. Figure~\ref{fig40:d} shows a trajectory (i.e., a solution) of the former equation, representing the fully averaged dynamics with the same steering directions treated above. 
 }
    \label{fig:00001}
\end{figure*}
Once the averaging over steering directions is carried out (see Figs.~\ref{fig40:b} and \ref{fig40:g}), the trajectory evolution is determined by Eq.~\eqref{eq:ssde5}. From Fig.~\ref{fig:301}-(b) (in blue), we can observe how the fidelity increases monotonically until a jump occurs. In contrast to the fully stochastic scheme, the jump is toward a mixed state that gets closer and closer to the center---this point is eventually reached---of the imaginary line joining the two erroneous target states represented by the two green dots in Figs~\ref{fig40:f}-\ref{fig40:j}. Similarly to the previous scheme, the ideal target state can be reached [cf. Eq. \eqref{eq:sde6}] if no jump occurs during a sufficient time lapse. 

If we average over the detector readouts (cf. Figs.~\ref{fig40:c} and \ref{fig40:h}), the three distance measures behave similarly to those obtained from the fully averaged dynamics (cf. Figs.~\ref{fig40:d} and \ref{fig40:i}), as shown in panels of Fig.~\ref{fig:301} by green, red and black curves, respectively. Note that the difference relies on a fluctuation of the measures in the former stochastic case around the stationary state values. The fluctuations observed in the red curves, representing the numerically-averaged dynamics, are due to the finite number of averaged trajectories [see, e.g., the sub-figure in panel (b)]. 

We would like to reiterate that, contrary to the case of averaged detector readouts, obtaining the fully averaged trajectory followed by the Bloch vector (see Fig.~\ref{fig40:d}) from both the fully stochastic picture and the averaged directions is by no means self-evident. 
Furthermore, even though the fully averaged Bloch trajectories corresponding to the erroneous and ideal protocols have different characteristics such as curvatures and end-points (compare Figs.~\ref{fig40:d} and \ref{fig40:i} with Figs.~\ref{fig40:e} and \ref{fig40:j}), the respective dependencies of the distance measures on time for a given initial state have overall similar qualitative features (see Fig.~\ref{fig:301}). For example, in Fig.~\ref{fig:301}c for $L(t)$, both magenta (erroneous) and black (ideal) curves first grow, attain a maximum, and then decay exponentially towards a constant value. The difference is only in the saturation value (finite for the erroneous protocol versus zero for the ideal one).

In Fig.~\ref{fig:00001}, we summarize in a diagram the averaging hierarchies corresponding to Eqs.~\eqref{eq:sde3}, \eqref{eq:ssde5} and \eqref{eq:sde11}, and how they unravel Eq. \eqref{eq:sde4.1.1}. A correspondence with the trajectories in Fig. \ref{fig:40} is also shown. 

\subsubsection{Stationary state of the fully averaged dynamics and small-error approximation}\label{ssec:stationary_state_of_the_fully_averaged}

There is an infinite number of probability distributions (both discrete and continuous) over the sphere that can be associated with the steering directions Eq.~\eqref{eq:23}. However, for simplicity, we will use the set of two steering directions
\begin{equation}\label{eq:dir}
 \mathcal{R} = \{(\theta,0;p),(\theta,\pi;1-p) \}
\end{equation}
as it simplifies the process of obtaining analytical results. In Appendix~\ref{sec:two_continuous_steering_directions}, we consider two continuously distributed steering directions: a uniform distribution between two angles and a von Mises distribution. Note that $\rho_\oplus = \ket{\uparrow}\bra{\uparrow} \notin \mathcal{R}$ for $\theta >0$. Given the steering set Eq.~\eqref{eq:dir}, we shall study how robust the protocol is to this error by performing a series expansion of the quantifiers in $p$ and $\theta$. We shall see that $p=1/2$ is the most favorable condition, as the leading order of the quantifiers is of fourth order in $\theta$.

Two relevant entries of the stationary state solution  of the fully averaged LE [cf. Eq.~\eqref{eq:sde4.1.1}]
\begin{equation}\label{eq:field0}
    \partial_t\rho_s(t) =\gamma \left[ p  \mathcal{D}\textbf{(}A(\theta)\textbf{)} + (1-p) \mathcal{D}\textbf{(}A(-\theta)\textbf{)} \right]\rho_s(t),
\end{equation}
corresponding to the steering set Eq.~\eqref{eq:dir} are 
\begin{widetext}
\begin{align}
        [\tilde\rho_{\oplus}]_{11} &= \frac{4 + (1-p)p + 4[1+(1-p)p]\cos\theta + (p-1)p(4\cos3\theta + \cos 4\theta)}{8 + 2(1-p)p + 2(p-1)p \cos 4\theta}, \label{eq:7.9}\\
         [\tilde\rho_{\oplus}]_{12} &= \frac{2(-1+2p)\sin\theta}{4+(1-p)p + (-1+p)p\cos 4 \theta}. \label{eq:7.9.1}
\end{align}
\end{widetext}
In Fig.~\ref{fig:3}, we depict the distance measures, Eqs.~\eqref{eq:me1}-\eqref{eq:me2}, comparing $\tilde\rho_\oplus$ with $\rho_\oplus$, as well as impurity of $\tilde\rho_\oplus$, Eq.~\eqref{eq:2.7}, as functions of $p$ and $\theta$. The domain of the latter variable is set as $\theta \in [0,\pi]$ since all the quantifiers are even with respect to $\theta$. The impurity (see Fig.~\ref{fig3:b}) has a thick crest centered at the plane $\theta = \pi/2$, and its maximum $\max{L_{\infty}(p,\theta)} = 0.5$ is located at the intersection of the planes $\theta = \pi/2$ and $p = 1/2$. The density matrix corresponding to the values $(p,\theta) = (1/2,\pi/2)$ is maximally mixed, as the two dissipators in Eq.~\eqref{eq:field0} have the same strength and try to steer the system toward orthogonal states.

There is an interesting behavior of the trace distance as observed in Fig.~\ref{fig3:c}. Upon intersecting the surface $D_{1,\infty}(p,\theta)$ with constant planes of small $\theta$, the resulting curves, which are differentiable, resemble an absolute value centered at $p=1/2$ as shown in Fig.~ \ref{fig4:a}. The behavior of $D_{1,\infty}(p,\theta)$ for several fixed probabilities as a function of the angle is shown in Fig. \ref{fig4:b}, where the curves are pretty close to zero for $p= 1/2 - \varepsilon$, $0 < \varepsilon \ll 1$---this also holds for $p= 1/2+\varepsilon$ because $D_{1,\infty}(p,-\theta) = D_{1,\infty}(p,\theta)$.

Returning to the properties observed in Fig.~\ref{fig:4}, we can see a crossover between absolute-value--shaped curves centered around $p=1/2$, which are differentiable, to curves with a smaller curvature at $p=1/2$. This behavior can be seen analytically by performing a series expansion of $D_{1,\infty}(p,\theta)$ around different points. For example, for $p \in [0,1/2)\cup (1/2,1]$, we have
\begin{widetext}
\begin{multline}\label{eq:td0}
    D_{1,\infty}(p,\theta) = 
   \abs{1-\frac{p}{2}}\left( \theta +\frac{ 60 (p-1) p-1 }{24}\theta^3+\frac{120 (p-1) p \left[78 (p-1) p-49\right]+1 }{1920}\theta^5  \right)\\
    +\frac{16(p-1)p \textbf{[} 630 (p-1) p \left\lbrace 8 (p-1) p \left[157 (p-1) p-163\right]-229\right\rbrace+27011\textbf{]}-1}{1290240\abs{1-\frac{p}{2}} }\theta^7 + \mathcal{O}(\theta^9),
\end{multline}
\end{widetext}
where there are only odd powers of $\theta$. This is no longer the case if we expand $D_{1,\infty}(p,\theta)$ at $p = 1/2$ with respect to $\theta$: 
\begin{subequations}
\label{eqtd1}
\begin{align}
    D_{1,\infty}(p = 1/2, \theta) &= \frac{4\sin^2(\theta/2)}{3 + \cos(2\theta)} \label{eq:td1} \\ 
    &= \frac{\theta^4}{16} + \frac{\theta^6}{48} + \mathcal{O}(\theta^8). \label{eq:td1-1}
\end{align}
\end{subequations}
By comparing Eqs.~\eqref{eq:td0} and \eqref{eq:td1-1}, the protocol shows more robustness when $\theta \ll 1$ and $p=1/2$, as the leading power in $\theta$ is of order four instead of one.

The change of evenness in powers of $\theta$ in the series expansion of $D_{1,\infty}(p,\theta)$ is only observed in the trace distance; Figs.~\ref{fig3:a}-\ref{fig3:b} evidence the smoothness of $F_\infty(p,\theta)$ and $L_\infty(p,\theta)$ in $(0,1)\times(-\pi,\pi) \subset \mathbb{R}^2$. More precisely, expanding $F_\infty(p,\theta)$ and $L_\infty(p,\theta)$ in $\theta$ and $p \neq 1/2$ gives
\begin{align}
    F_\infty(p,\theta) &= 1 - \frac{1}{4}\left( 1 - 4p + 4p^2 \right)\theta^2\notag \\ 
    &\!\!\!\!\!\!\!+ \frac{1}{48}\left( 1 + 8p -104p^2 + 192p^3 - 96p^4 \right)\theta^4 + \mathcal{O}(\theta^6), \\
    L_\infty(p,\theta) &= 2p\left( 1 - 4p +6p^2 -3p^3\right)\theta^4 + \mathcal{O}(\theta^6).
\end{align}
\begin{figure}[t]
  \begin{subfigure}{0.25\textwidth}
    \includegraphics[width=\linewidth]{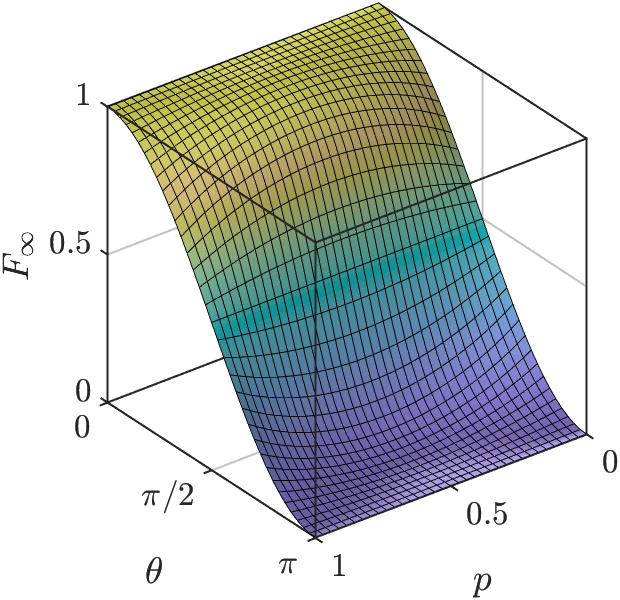}
    \caption{}  \label{fig3:a}
  \end{subfigure}%
  \begin{subfigure}{0.25\textwidth}
    \includegraphics[width=\linewidth]{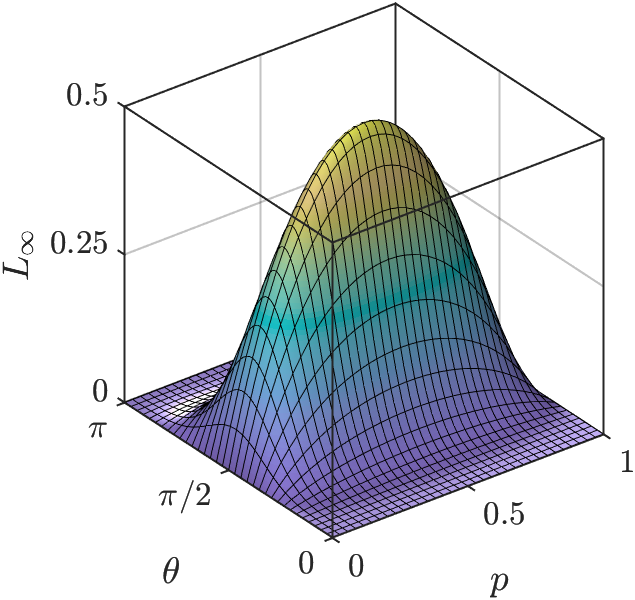}
    \caption{ } \label{fig3:b}
  \end{subfigure}%
\vskip\baselineskip
  \begin{subfigure}{0.25\textwidth}
    \includegraphics[width=\linewidth]{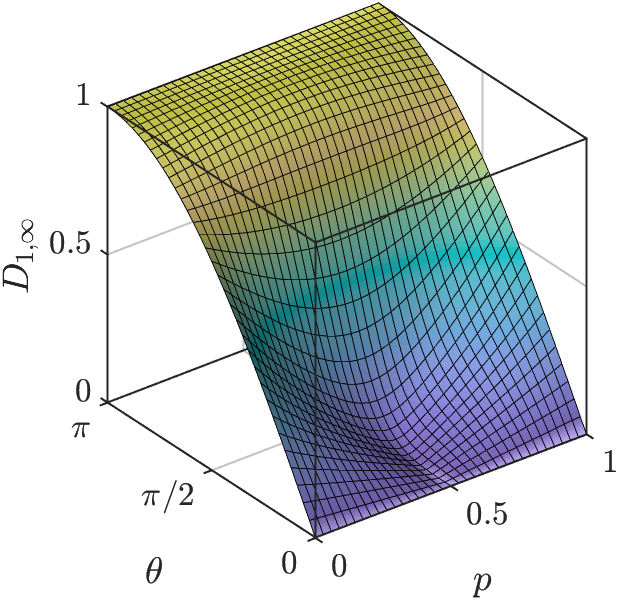}
    \caption{ } \label{fig3:c}
  \end{subfigure}%
  \caption{ Distance measures between the ideal target state $\rho_\oplus = \ket{\uparrow}\bra{\uparrow}$ and the actual target state $\tilde\rho_\oplus$, together with its impurity, as functions of $(p,\theta)$. (a) Fidelity $F_\infty(p,\theta)$ [Eq.~\eqref{eq:me1}]. (b) Impurity $L_\infty(p,\theta)$ [Eq.~\eqref{eq:2.7}]. There, we observe the trivial fact that states close to $(p,\theta)= (1/2,\pi/2)$, which corresponds to the maximally mixed state, are those with the greatest impurity. (c) Trace distance $D_1(p,\theta)$ [Eq.~\eqref{eq:me2}]. This surface shows an absolute-value-shaped region centered at $p = 1/2$.  } 
  \label{fig:3}
\end{figure} 
\begin{figure*}[t]
  \begin{subfigure}{0.40\textwidth}
    \includegraphics[width=\linewidth]{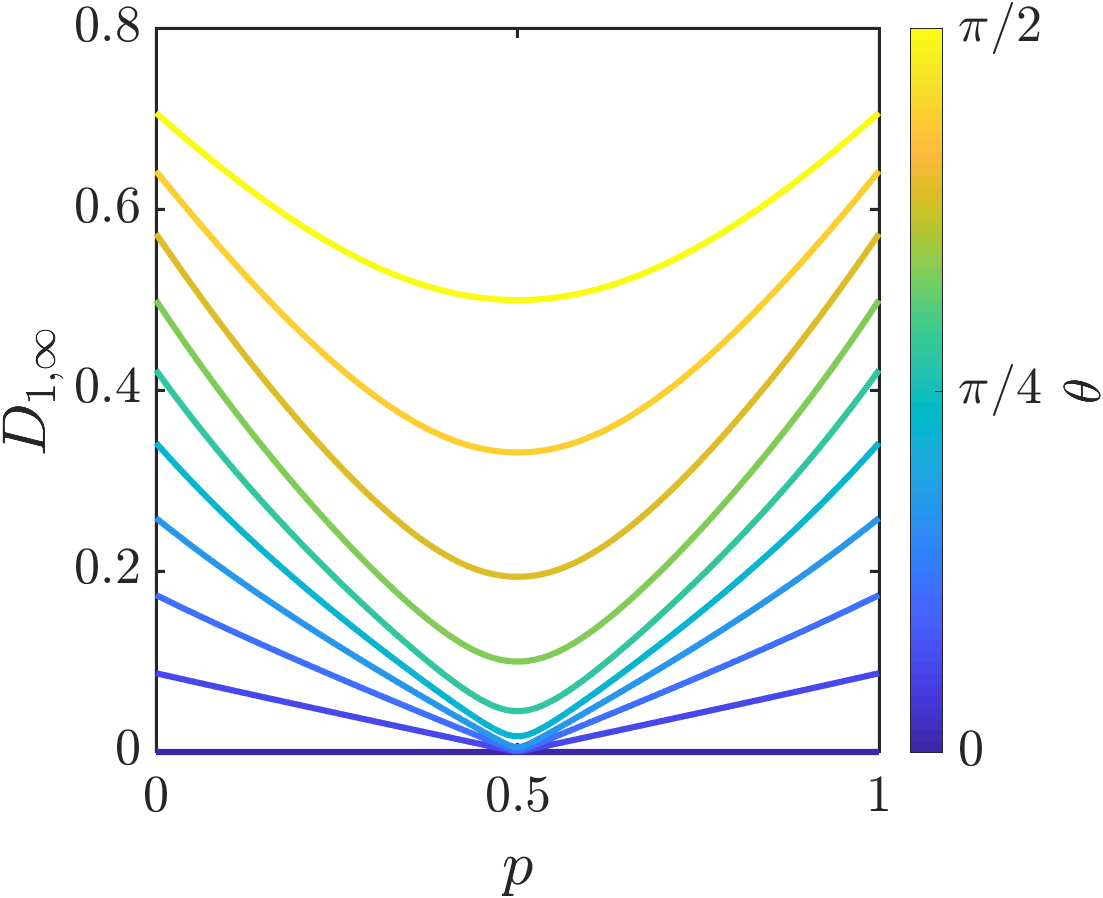}
    \caption{}  \label{fig4:a}
  \end{subfigure}%
  \begin{subfigure}{0.38\textwidth}
    \includegraphics[width=\linewidth]{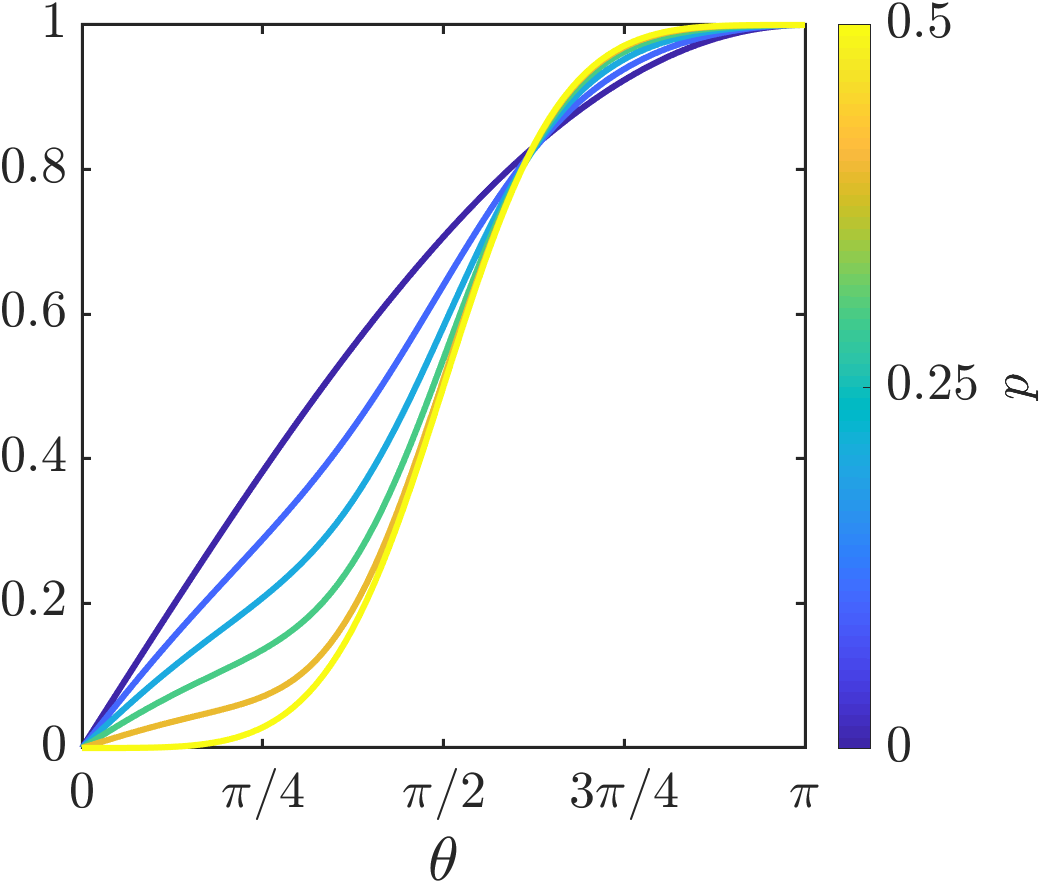}
    \caption{ } \label{fig4:b}
  \end{subfigure}%
  \caption{(a) Sections of $D_{1,\infty}(p,\theta)$ [Eq.~\eqref{eq:me2}] with constant $\theta \in [0,\pi/2]$. The curves resemble an absolute value as the angle gets closer to zero. (b) Sections of $D_{1,\infty}(p,\theta)$ with constant $p \in [0,1/2]$. The change of concavity for curves with $0 \leq \theta \lesssim \pi/2$ is reflected on the change of the powers in $\theta$ in the series expansion of $D_{1,\infty}(p,\theta)$ Eqs.~\eqref{eq:td0}-\eqref{eqtd1}.}
  \label{fig:4}
\end{figure*}
The above two series coincide term-by-term with their expansion  at $p = 1/2$,
\begin{align}
    F_{\infty}(p = 1/2, \theta) &= \frac{1}{2} + \frac{2\cos\theta}{3 + \cos2\theta} = 1- \frac{\theta^4}{16} +  \mathcal{O}(\theta^6), \label{eq:td2}\\
    L_\infty(p=1/2,\theta) &= \frac{2\sin^4\theta}{(3 + \cos2\theta)^2} = \frac{\theta^4}{8} +  \mathcal{O}(\theta^6).\label{eq:td3}
\end{align}
Note that the second order term in $\theta$ of $F_\infty(p,\theta)$ vanishes at $p = 1/2$, and, in contrast with $D_{1,\infty}(p,\theta)$ with $p \neq 1/2$ [see Eq. \eqref{eq:td0}], the fidelity and impurity have only even powers of $\theta$.

Returning to the discrete distribution of states given by Eq.~\eqref{eq:erre}, the uniqueness of the actual target state and its closeness to the ideal one observed in Eq.~\eqref{eqtd1} can be better understood if we adopt the geometrical point of view of the fully averaged dynamics and make use of the averaging hierarchies previously shown.

Let us first address the former point by implementing the superposition principle of fields. Each of the dissipators present in the Lindbladian in Eq.~\eqref{eq:field0}
induces a vector field in the Bloch ball with a single fixed point on the sphere's surface. Therefore, when adding two fields, there will only be a single stable fixed point \emph{within} the Bloch sphere. This is easily extended when there are more dissipators of the form $\mathcal{D}\textbf{(}A(\theta,\varphi)\textbf{)}$ with angles being either continuous or discretely distributed. The fields are added two by two until all are summed, resulting in a single stationary state---In the continuous case, they are integrated. 

The closeness of the actual target state to the ideal one when the steering states are given by Eq.~\eqref{eq:dir} with $p = 1/2$ and $\theta \ll 1$---even for a broader range of angles---is evident in the three averaging hierarchies presented at the beginning of the section. When the stochasticity of both the direction and the detector outcomes is maintained, the steered Bloch vector will eventually jump to one of the target states in $\mathcal{R}$. Then its trajectory will be exclusively contained in the arc connecting the two target states and passing through the north pole (see Figs.~\ref{fig40:a} and \ref{fig40:f}).

If the steering directions are averaged, the Bloch vector will try to reach the north pole (the ideal target state) when no jump occurs. If a jump occurs, the $z$-component of the Bloch vector will coincide with the $z$-component of both non-ideal target states. Subsequently, the trajectory will be continuous and, once more, directed toward the ideal target state until another jump occurs. Nevertheless, the absolute value of the $x$-component of the Bloch vector reduces after each jump. This behavior repeatedly occurs with the consequence of having the jumps concentrated in the middle point of the imaginary line $\bar{l}$ passing through the two erroneous target states in $\mathcal{R}$. Likewise, the continuous trajectories also get closer and closer to the $z$-axis segment that passes through the ideal target state (the north pole) and the middle point of $\bar{l}$. Therefore, all the trajectories followed by the Bloch vector will be contained in the intersection of the spherical cap, whose base contains $\bar{l}$, with the plane containing these two points through which the segment $\bar{l}$ passes (see Figs.~\ref{fig40:b} and \ref{fig40:g}). 

Lastly, when the detectors' outcomes are ignored, only one of the dissipators---$\mathcal{D}\textbf{(}A(\theta)\textbf{)}$ or $\mathcal{D}\textbf{(}A(-\theta)\textbf{)}$---will be active during each steering step. After a sufficiently long steering time, regardless of the initial point, the trajectory followed by the Bloch vector will wiggle close to a point above $\bar{l}$ as it is evident in Figs.~\ref{fig40:c} and \ref{fig40:h}.  Once more, all the trajectories within the three averaging hierarchies with the $\mathcal{R}$ mentioned above are within the spherical cap described in the previous case.  The overall result upon performing a total average over the stochastic variables---and as long as $0< \theta < \pi/2$---is having a stationary state always lying above $\bar{l}$; and the closer the angle $\theta$ is to zero, the closer the stationary state is to the north pole (see Fig.~\ref{fig:40}).

\subsection{Error in the  detector-system coupling strength}\label{ssec:errors_coupling_constant}

The detector-system coupling strength may fluctuate because of imperfect control. This section discusses how this type of error affects the final state of the steering protocol. This error can be quenched or time-dependent. However, we shall see that the averaged dynamics of the system state will be described by the same LE as in the ideal steering case, with the dark state being the ideal target state, implying that the protocol is robust to this error. 

Without loss of generality, let us choose again the ONBs $\mathcal{B}_\oplus = \{\ket{\uparrow}, \ket{\downarrow}\}$ and $\mathcal{B}_d = \{\ket{\uparrow}, \ket{\downarrow}\}$, such that the interaction Hamiltonian is [cf. Eq.~\eqref{eq:Hds-proj}]
\begin{equation}
\label{eq:cc0}
H_\text{ds} = J(t)(\sigma^+\otimes\sigma^- + \text{h.c.}) = J(t) h_0,  
\end{equation}
so that the ideal target state is  the north pole of the Bloch sphere. Here, $J(t)$ controls the detector-system interaction strength and can have quenched
[with $J(t)=\sum_k J_k \delta_{t,k\delta t}$, where $\delta_{t,t'}$ is the Kronecker delta-symbol and $J_k$ is randomly chosen from a predefined set $\mathcal{J}$ at each step] or generic time-dependent error contributions. Below, we discuss both cases one by one.

\subsubsection{Quenched error}

Let $\mathcal{I}\coloneqq \{1, \ldots, N \}$, and let $\mathcal{J} \coloneqq \{J_n \}_{n\in \mathcal{I}}$ be a valid set of detector-system coupling strengths, i.e., such that for a given interaction time $\delta t$, $J_n\delta t/\pi  \notin  \mathbb{Z}$ and $J_n\delta t/(\pi/2)  \notin  2\mathbb{Z}/\mathbb{Z}$, and there is at least one $J_n\neq 0$ with non-zero probability to occur. Suppose that, after each measurement step, the coupling acquires a value in $\mathcal{J}$, say, $J_n$, with a given probability $p(n)$ with the conditions stated above. 

The quantum trajectories accounting for the stochastic appearance of coupling constants in $\mathcal{J}$ can also be described by SMEs like Eqs.~\eqref{eq:sde1}, \eqref{eq:ssde5}, and \eqref{eq:sde11} by mapping $A_n \rightarrow A$ for all $n \in \mathcal{I}$, and $\gamma$ to $\gamma_n \coloneqq \lim_{\delta t \to 0}J_n^2\delta t$. 
Under this mapping, it is easy to see that a single target state will be shared by $n$ dissipators that are proportional to each other. Therefore, it is sufficient to focus only on the fully averaged dynamics described by 
\begin{equation}\label{eq:cc1}
    \partial_t\rho_s(t) = \sum_{n \in \mathcal{I}}p(n)\gamma_n \mathcal{D}(\sigma^+)\rho_s(t).
\end{equation}
This LE has the same dissipator as in 
Eq.~\eqref{eq09} with $A\cong \sigma^+$, but the effective channel strength is different: $\gamma\to \sum_{n \in \mathcal{I}}p(n) \gamma_n$. Thus, the trajectories followed by the states are precisely the same as for the ideal steering, yet they traverse the Bloch ball at a different speed. 

If $J$ belongs to a continuous distribution, such as a Gaussian one $\mu_J$ with zero mean and variance $\sigma$, the fully averaged state after a steering step is given by 
\begin{align}\label{eq:cc1.1}
    \rho_s(t + \delta t) &= \int_\mathbb{R} \Tr_d \left\{ \exp[-i J\delta t  \ad(h_0)]\rho_d \otimes {\rho}_s(t)\right\}\dd \mu_J\notag \\ &= \Tr_d\left[ \exp( -\frac{\sigma^2\delta t^2}{2} \ad^2(h_0) )\, \rho_d \otimes {\rho}_s(t)\right],
\end{align}
where $\dd \mu_J$ is the one-dimensional Gaussian measure. Making use of the WM limit  $\lim_{\delta t \to 0}\delta t \sigma^2 = \gamma = \text{const}$ upon performing a series expansion in Eq.~\eqref{eq:cc1.1}  and using the ONB $\mathcal{B}_\oplus$, gives the LE 
\begin{equation}\label{eq:cc1.2}
    \partial_t{\rho}_s(t) = \gamma \mathcal{D}(\sigma^+){\rho}_s(t),
\end{equation}
where we took the usual decomposition of $h_0$ as in Eq.~\eqref{eq:cc0}. This WM limit is similar to the one used in  Ref.~\cite{francoise_quantum_2006}.

\subsubsection{Time-dependent error}

Promoting the detector-system coupling to be a white noise variable, i.e., $J \mapsto \Upsilon\xi(t)$ with 
\begin{equation}\label{eq:cc2}
    \mathbb{E}[\xi(t)] = 0, \quad \mathbb{E}[\xi(t)\xi(s)] = \delta(t-s), 
\end{equation}
makes the error time-dependent in our categorization (see Sec.~\ref{sec:definition_of_the_errors}). Hence, by using the It\^o calculus rules \cite{breuer2002theory} 
\begin{equation}\label{eq:cc3}
    (\dd X_t)^2 = \dd t, \quad \dd X_t \dd t = 0, \quad (\dd t)^2 = 0,
\end{equation}
with the Wiener increment
\begin{equation}
\dd X_t \coloneqq \int_0^t\xi(s)\dd s,
\end{equation}
the unitary operator for an infinitesimal time reduces to 
\begin{align}\label{eq:cc4}
    U(\dd t) = e^{-i\Upsilon h_0 \dd X_t } 
    = I_{ds} - i \Upsilon h_0\dd X_t -\frac{\Upsilon^2}{2}h_0^2 \dd t.
\end{align}
Therefore, the updated system state after taking the blind measurement is
    \begin{equation}\label{eq:cc5}
    \rho_s(t+ \dd t ) = \rho_s(t) + \Upsilon^2\mathcal{D}(A)\rho_s(t) \dd t,
\end{equation}
which leads to the LE
\begin{equation}\label{eq:cc6}
    \partial_t\rho_s(t) = \Upsilon^2\mathcal{D}(A)\rho_s(t).
\end{equation}

In conclusion, the only difference among the LEs given by Eqs.~\eqref{eq09},  \eqref{eq:cc1}, \eqref{eq:cc1.2}, and \eqref{eq:cc6} is the dissipation strength, giving that the protocol is entirely robust to errors in the coupling constant.

\subsection{Errors in the steering Hamiltonian}\label{ssec:steering_hamiltonian}

In an experimental situation akin to the one in which the detectors are erroneously prepared (see Sec.~\ref{ssec:errors_in_the_detector}), we may assume the existence of an environment, e.g., a heat bath, over which we do not have absolute control, and that interacts with the system and detector \emph{during} the steering and not before. As a result, the initial state $\rho_{\text{eds}}(0) = \rho_e(0)\otimes \rho_{d}\otimes \rho_s(0)$, involving the environment state $\rho_e(0)$, evolves under the Hamiltonian 
\begin{equation}\label{eq:c1} 
    H_{\text{eds}} = H_e\otimes I_{\text{ds}} + I_e\otimes H_\text{ds}  + \Tilde{H}_{e}\otimes\Tilde{H}_{\text{ds}}
\end{equation}
where $H_e$ is the Hamiltonian of the environment,  $H_\text{ds}$ is the ideal steering Hamiltonian, and 
$\Tilde{H}_{e}\otimes \Tilde{H}_{\text{ds}}$ is the Hamiltonian dictating the interaction of the environment with the detector and steered system. 

Adapting the steps outlined in Sec. \ref{sec:ideal_protocol_and_setup}, we first allow the total system---now being the environment-detector-system---to evolve during a time $\delta t$ and then take the partial trace over the environmental degrees of freedom. This gives the reduced detector-system density matrix
\begin{equation}\label{eq:noi1}
    \rho_{\text{ds}}(\delta t ) = \Tr_e[\mathcal{U}(\delta t) \rho_{\text{eds}}(0)\mathcal{U}^\dagger (\delta t)],
\end{equation}
where $\mathcal{U}(\delta t) = \exp(-i\delta tH_{\text{eds}})$. Immediately after, the observable $S_d = \ket*{\Phi_d}\bra{\Phi_d} - \ket*{\Phi_d^\perp}\bra*{\Phi_d^\perp}$ is measured on the detector, leaving us with the reduced steered state [cf. Eq.~\eqref{eq:04}],
\begin{equation}\label{eq:noi2}
    \rho_{s,\alpha}(\delta t) = \frac{\bra{\alpha}\rho_{ds}(\delta t)\ket{\alpha}}{P(\alpha)},
\end{equation}
where $\alpha \in \{0,1\}$ accounts for the measurement outcome with probability $P(\alpha)$, and $\ket{0} = \ket{\Phi_d}$, $\ket{1} = \ket*{\Phi_d^\perp}$.

The reduced state, Eq.~\eqref{eq:noi2}, can be thought of as the result of applying the measurement (Kraus) operator
\begin{equation}\label{eq:noi3}
    K_\alpha(\delta t) \coloneqq \bra{\alpha}\sum_{i,j} \sqrt{p_j}\bra{\psi_i}\mathcal{U}(\delta t)\ket{\psi_j}\ket{0}
\end{equation}
to the initial state $\rho_s(0)$, 
where the initial state of the environment is written as $\rho_e(0) = \sum_i p_i\ket{\psi_i}\bra{\psi_i}$. 
An integral can replace this sum if the spectrum of the environment is continuous. In comparison with the previous scenarios, Eq.~\eqref{eq:noi3} illustrates how the dynamics of the steered system become more complex and perhaps intractable, even if all the terms in the Hamiltonian \eqref{eq:c1} terms are known. 

As the focus of our study is a sequence of generalized measurements performed on the system to steer it toward a particular target state, in what follows, we seek to translate the microscopic theory Eq.~\eqref{eq:c1} in the total Hilbert space $\mathcal{H}_e\otimes \mathcal{H}_d\otimes \mathcal{H}_s$ into a phenomenological model in $\mathcal{H}_d\otimes \mathcal{H}_s$, to make the problem tractable. More precisely, we will capture the influence of the environment on the detector-system in a stochastic operator perturbing the ideal steering Hamiltonian. However, this replacement will require a set of  assumptions and conditions on the environment and its interaction with the detector-system.

\subsubsection{Noise representation}

We start by rescaling some of the terms in Eq.~\eqref{eq:c1},
\begin{equation}\label{eq:noi4}
    H_{\text{eds}} = \lambda^{-2}H_e\otimes I_{\text{ds}} + I_e\otimes H_\text{ds} + \lambda^{-1}\Tilde{H}_{e}\otimes \Tilde{H}_{\text{ds}},
\end{equation}
such that we are interested in the limit $\lambda \rightarrow 0$. This is known as the \emph{singular-coupling limit} \cite{breuer2002theory,rivas2012open}, and it
implies that the characteristic relaxation time of the environment tends to zero, which, in turn, guarantees the elimination of any memory effects linked to the environment. According to the vanishing memory effect, the correlation function of the environment is proportional to a delta function
\cite{rivas2012open}, namely,
\begin{equation}\label{eq:noi5}
    C(t-s)=\!\int_\mathbb{R}\!e^{-i\omega(t-s)/\lambda^2}\!\Tr[\Tilde{H}_e(\omega) \Tilde{H}_e \rho_e(0)]\frac{\dd s}{\lambda^2} \propto \delta(t-s),
\end{equation}
where $\Tilde{H}_e(\omega)$ is an eigenoperator of $\ad(H_e)$ with eigenvalue $-\omega$. Another requirement is that the integral of the correlation function with $s = 0$ over all times is equal to a positive constant, i.e., 
\begin{equation}\label{eq:noi6}
    \eta = \int_\mathbb{R}\, C(t)\,\dd t.
\end{equation}

Given the above assumptions, we can replace the reduced dynamics of the detector-system, $$\rho_{\text{ds}}(t) = \Tr_e\rho_{\text{eds}}(t),$$ 
by the average over realizations of the unitary-driven density matrix
\begin{equation}\label{eq:noi7}
    \chi_{\text{ds}}(t \vert \xi) = U(t\vert \xi) \rho_d\otimes \rho_s(0) U^\dagger(t\vert \xi)
\end{equation}
given a realization of the stochastic noise $\xi$ (see below).
Here, 
\begin{equation}\label{eq:c4}
    U(t\vert \xi) \coloneqq \overrightarrow{\mathcal{T}}\exp(-i\int_0^{t}  H(s\vert \xi)\dd s)
\end{equation}
is the time-ordered (as represented by the symbol $\overrightarrow{\mathcal{T}}$), unitary time-evolution operator generated by the Hamiltonian
\begin{equation}\label{eq:c5}
    H(t\vert \xi) \coloneqq H_\text{ds} + \sqrt{\eta} \Upsilon \xi(t)\, \Tilde{h}_{\text{ds}}
\end{equation}
acting only in the detector-system space. Here, the effective time-dependent coupling $\Upsilon\xi(t)$ accounts for the environmentally induced noise, $\Tilde{h}_{\text{ds}}$ is the effective dimensionless Hamiltonian describing the noisy detector-system interaction, and $\eta$ is defined by Eq.~\eqref{eq:noi6}. The normalization conditions of the white noise are the same as in Eq.~\eqref{eq:cc2}.

To summarize, the properties of the environment described by the microscopic model given by Eq.~\eqref{eq:noi4} and $\rho_e(0)$ are encoded in the perturbation term of the phenomenological model Eq.~\eqref{eq:c5}, under the assumption that the environment-detector-system coupling in the former equation is singular and that the coupling constant in the latter equation is given by Eq.~\eqref{eq:noi6}. Under these conditions, it is then guaranteed that 
\begin{equation}\label{eq:noi8}
    \rho_{\text{ds}}(t) = \Tr_e \rho_{\text{ds}}(t) = \mathbb{E}[\zeta_{\text{ds}}(t\vert \xi)],
\end{equation}
where $\mathbb{E}$ denotes the average over all ``classical'' noise trajectories $\mathcal{\xi}$, and $\zeta_{\text{ds}}(t\vert \xi)$ is the detector-system density matrix for a given noise realization.

The replacement of the exact, reduced dynamics of an open system by the realization-average of a stochastic, unitary evolution represented by Eq.~\eqref{eq:noi7} is called the \emph{noise representation}, and justifies the widely used Hamiltonians of the form given by Eq.~\eqref{eq:c5} (see, e.g., Refs.~\cite{muller2016stochastic,kumar2020quantum,nakazato1999two,salgado2002lindbladian,onorati2017mixing,fischer1998averaged,avron_lindbladians_2015}). More conditions on the applicability of the noise representation are studied in Refs.  \cite{szankowski2020noise,szankowski2021measuring} and the references therein. 

In what follows, we start with the blind measurement scheme and find the LE governing the averaged dynamics of $\rho_s(t)$. We call this procedure a \emph{direct averaging}. Later, we opt for a different strategy to derive an SME of the jump-diffusive type that unravels the same LE we obtain in the direct averaging. We finally finish by studying three particular forms of $\tilde{h}_{\text{ds}}$ in Eq.~\eqref{eq:c5}.

\subsubsection{Direct averaging}

Let us start by taking the expectation value as in  Eq.~\eqref{eq:noi8}, but in the interaction picture with $H_\text{ds}=Jh_0$ as the free Hamiltonian, and then return to the Schrödinger picture. This gives
\begin{multline}\label{eq:c9}
\rho_{\text{ds}}(t) = \exp[-itJ\,\ad(h_0)]\\ \times \mathbb{E}\left[   \overrightarrow{\mathcal{T}}\exp(-i\int_{0}^t  \Upsilon\xi(s)\sqrt{\eta}\, \ad\textbf{(}\hat{\tilde{h}}(s)\textbf{)} \, \dd s)\right]\, \rho_d\otimes \rho_s(0),
\end{multline}
where $\hat{\tilde{h}}(s) \coloneqq \exp[is\, J\ad(h_0)]H$ is any Hamiltonian term $H$ in the interaction picture.  After noting that the time-ordering operator commutes with the expectation $\mathbb{E}$ (this is proven in Appendix~\ref{sec:commutation_expectation}), the above equation can be cast in the form 
$$\rho_{\text{ds}}(t) = \mathcal{E}_t\rho_{\text{ds}}(0)$$ 
with the dynamical map
\begin{equation}\label{eq:c10}
    \mathcal{E}_t = \exp[-itJ\ad(h_0)]\, \overrightarrow{\mathcal{T}}\exp(-\frac{\Upsilon^2 \eta}{2}\int_0^t  \ad^2\textbf{(}\hat{\tilde{h}}(s)\textbf{)}\dd s).
\end{equation}
Here, the $n$-th power of the adjoint action of an operator $A$ over $B$ is recursively defined as 
$$\ad^n(A)B = [\ad^{n-1}(A)]\ad (A)B = \ad^{n-1}(A)[A,B].$$ 
Thus, $\ad^2(A)B = [A,[A,B]].$

Taking the time derivative of $\rho_{\text{ds}}(t) = \mathcal{E}_t\rho_{\text{ds}}(0)$ gives 
\begin{equation}\label{eq:c11}
\partial_t\rho_{\text{ds}}(t) = -iJ[h_0, \rho_{\text{ds}}(t)] - \frac{\Upsilon^2\eta}{2}[\tilde{h}_{\text{ds}}, [\tilde{h}_{\text{ds}},\rho_{\text{ds}}(t)]].
\end{equation}
This LE contains two channels: a unitary channel describing the detector-system interaction \emph{without} performing any partial trace (or detector readout); and a dissipator originating from the interaction between the detector-system with the environment. Now, the formal solution of the above LE is  $\rho_{\text{ds}}(t) = \exp(\mathcal{L}t)\rho_{\text{ds}}(0)$ with 
$$\mathcal{L} = -iJ\ad (h_0) - \frac{\Upsilon^2\eta}{2}\ad^2 (\tilde{h}_{\text{ds}}).$$ 
By comparing this solution with Eq.~\eqref{eq:c10} we obtain the interesting identity
\begin{widetext}
\begin{equation}\label{eq:identity}
    \exp[-itJ\,\ad(h_0)]\overrightarrow{\mathcal{T}}\exp[-\frac{\Upsilon^2\eta}{2}\int_0^t \exp[isJ\,\ad (h_0)]\ad^2(\tilde{h}_{\text{ds}}) \exp[-isJ\,\ad (h_0)]  \dd s] = \exp(-iJt\,\ad h_0 - \frac{\Upsilon^2 \eta }{2}  t\, \ad^2(\tilde{h}_{\text{ds}})),
\end{equation}
\end{widetext}
and can find the LE obeyed by $\rho_s(t) = \Tr_d\rho_{\text{ds}}(t)$. To this end, we start from 
\begin{equation}\label{eq:c12}
    \rho_{s}(t + \delta t) = \Tr_d\left[\exp(\mathcal{L}\delta t)\, \rho_d\otimes \rho_{s}(t)\right],
\end{equation}
and use the WM limit with the usual representations of $\rho_{d}$ and $h_{0}$ with respect to the ONBs $\mathcal{B}_\oplus = \{\ket{\Psi_\oplus}, \ket*{\Psi_\oplus^\perp} \}$ and $\mathcal{B}_d = \{\ket{\Phi_d}, \ket*{\Phi_d^\perp} \}$ to get (see Appendix~\ref{sec:LE_errors_hamitlonian}).
\begin{equation}\label{eq:c13}
\partial_t\rho_s(t) = \left[ \gamma \mathcal{D}(A) + \tilde{\gamma} \mathcal{D}(G) + \tilde{\gamma} \mathcal{D}(B)  \right]\rho_{s}(t),
\end{equation}
where $G = G^\dagger$, $C = C^\dagger$, and $B$ are block matrices of the Hamiltonian [cf. Eq.~\eqref{eq:c5}]
\begin{equation}\label{eq:c13.1}
    \Tilde{h}_{\text{ds}} = \begin{pmatrix}
    G & B^\dagger \\ B & C
    \end{pmatrix},
\end{equation}
and
\begin{equation}
\tilde{\gamma} = \eta \Upsilon^2.
\end{equation}

We note that we obtained Eq.~\eqref{eq:c13} by first taking the average over realizations of the white noise $\xi(t)$ from $t$ to $t+\delta t$ and then performing a blind measurement. Alternatively, it can be shown (see Appendix~\ref{sec:lastproof}) that one can perform a blind measurement at $t+\delta t$ given the \emph{same} stochastic trajectory and \emph{then} take the average over realizations of the stochastic variable. 

We provide a few relevant comments about the LE~\eqref{eq:c13}. First, we note that this LE has no hybrid channels generated by $G$, $B$ or $A$; that is, it has no cross-terms of the form $G\rho_s(t)B^\dagger- \{B^\dagger G,\rho_s(t) \}/2$, and so on. Second, there is no channel associated with $C$.

In addition, Eq.~\eqref{eq:c13} explicitly shows that a local 
environment-detector interaction, effectively described by $\Tilde{h}_{\text{ds}} = O\otimes I_s$ in Eq.~\eqref{eq:c5}, has no effect on the dynamics of $\rho_s(t)$. This is easily seen by noting that for this interaction  $G, B \propto I_s$ in Eq.~\eqref{eq:c13.1}, and thus $\mathcal{D}(I_s) = 0$. This means that performing a blind measurement takes care of the environmental influence over the detectors, and this perturbation is not transferred to the state of the system. Likewise, if the interaction is local in the system space, e.g.,  $\Tilde{h}_{\text{ds}} = I_d\otimes G$, the LE loses the dissipator associated with the off-diagonal blocks in Eq.~\eqref{eq:c13.1}: $\mathcal{D}(B) = 0$.

\subsubsection{Jump-diffusive stochastic master equation}\label{ssec:jump_diffusive_stochastic}

An SME describing the possible individual quantum trajectories obeyed by the steered state under the influence of the white noise and the continuous monitoring of the detectors reads as (see Appendix~ \ref{sec:jump-diffussive-sme})
\begin{widetext}
\begin{multline}\label{eq:esto1}
    \dd \omega_s(t) = -i\sqrt{\tilde{\gamma}}[ G, \omega_s(t)]\dd X(t)+  \tilde{\gamma}\mathcal{D}(G)\omega_s(t)\dd t + \left( \expval*{\gamma A^\dagger A + \tilde{\gamma}  B^\dagger B}_t\omega_s(t) - \frac{1}{2}\{\gamma A^\dagger A + \tilde{\gamma} B^\dagger B, \omega_s(t)\}\right)\dd t\\ 
    + \left( \frac{\gamma A\omega_s(t) A^\dagger + \tilde{\gamma}  B\omega_s(t) B^\dagger}{\expval*{\gamma A^\dagger A + \tilde{\gamma}  B^\dagger B}_t}- \omega_s(t) \right)\dd N(t),
\end{multline}
\end{widetext}
where $\dd X(t)$ is a Wiener increment with zero mean and variance $\dd t$, $\dd N(t)$ is a Poissonian increment with strength 
\begin{equation}\label{eq:esto1.1}
    \mathbb{E}[\dd N(t)] = \expval*{\gamma A^\dagger A + \tilde{\gamma}  B^\dagger B }_t\dd t.
\end{equation}
The overall It\^o table is
\begin{align}
    (\dd X(t))^2 &= \dd t, \\
    (\dd N(t))^2 &= \dd N(t), \\
    \dd N(t)\dd t &= \dd X(t) \dd t = \dd X(t)\dd N(t) = 0, \\
    (\dd t)^2 &=0. \label{Ito-table}
\end{align}

The SME~\eqref{eq:esto1} is quite peculiar. When a click is registered [$\dd N(t) = 1]$, the steered state is found in a mixed state unless $\gamma = \tilde{\gamma}$ and $B = A$, or $B =0$. On the other hand, if no click is registered [$\dd N(t) = 0$], the system evolution contains deterministic and fluctuating contributions. The fluctuating term is given by a unitary channel generated by $G$. In contrast, the deterministic terms are due to an It\^o correction of the unitary fluctuating generator in the form of a dissipator, the deterministic backaction induced by the other environment-induced terms (i.e., the terms proportional to $B$), and the backaction due to the detectors. Contrary to the usual diffusive \cite{breuer2002theory,wiseman2009quantum,barchielli1991measurements,attal2010stochastic} and hybrid jump-diffusive SMEs \cite{kuramochi_simultaneous_2013,dong_quantum_2019,pellegrini_markov_2010,altintan_hybrid_2020}, see also Eq.~\eqref{eq:pb6}, the Wiener increment in Eq.~\eqref{eq:esto1} multiplies a unitary generator instead of a non-unitary one. Finally, by taking the expectation value of Eq.~\eqref{eq:esto1}, using the rules specified by Eqs.~\eqref{eq:esto1.1}-\eqref{Ito-table}, we arrive at the LE~\eqref{eq:c13}.

\subsubsection{Examples}\label{ssec:examples}
Let us illustrate Eqs.~\eqref{eq:c13} and \eqref{eq:esto1} with two particular perturbation Hamiltonians [see Eq.~\eqref{eq:c5}] of the form $\Tilde{h}_{\text{ds}} = I_d\otimes G$ with $G = \sigma^z$ and $G = \sigma^x$. For both cases, the detectors are prepared in the state $\rho_d = \ket{\uparrow}\bra{\uparrow}$, and  the ideal target state is given by $\rho_\oplus = \ket{\uparrow}\bra{\uparrow}$.

\begin{figure*} 
\centering
  \begin{subfigure}{0.22\textwidth}
    \includegraphics[width=\linewidth]{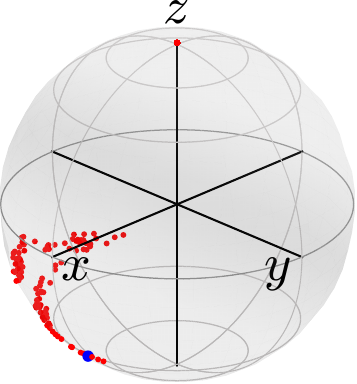}
    \caption{}\label{fig11:a}
  \end{subfigure}%
   \hspace{1em}
  \begin{subfigure}{0.22\textwidth}
    \includegraphics[width=\linewidth]{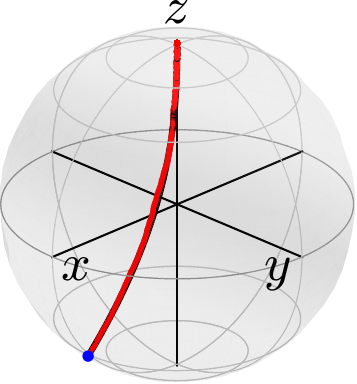}
    \caption{}\label{fig11:b}
  \end{subfigure}%
    \hspace{1em}
    \begin{subfigure}{0.22\textwidth}
    \includegraphics[width=\linewidth]{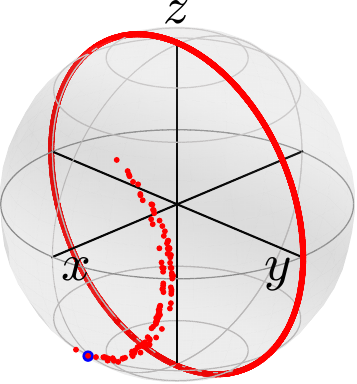}
    \caption{}\label{fig11:c}
  \end{subfigure}%
   \hspace{1em}
  \begin{subfigure}{0.22\textwidth}
    \includegraphics[width=\linewidth]{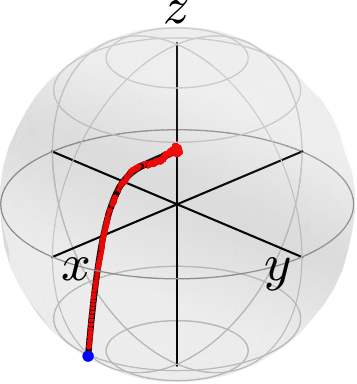}
    \caption{}\label{fig11:d}
  \end{subfigure}%
  \vskip\baselineskip 
   \begin{subfigure}{0.22\textwidth}
    \includegraphics[width=\linewidth]{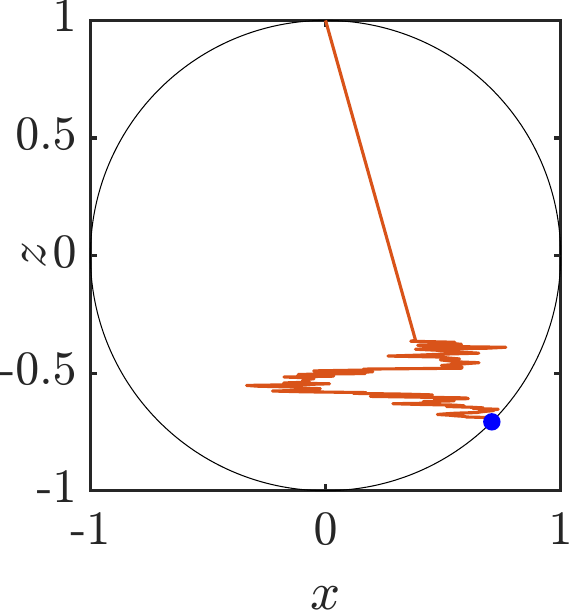}
    \caption{}\label{fig11:e}
  \end{subfigure}%
     \hspace{1em}
   \begin{subfigure}{0.22\textwidth}
    \includegraphics[width=\linewidth]{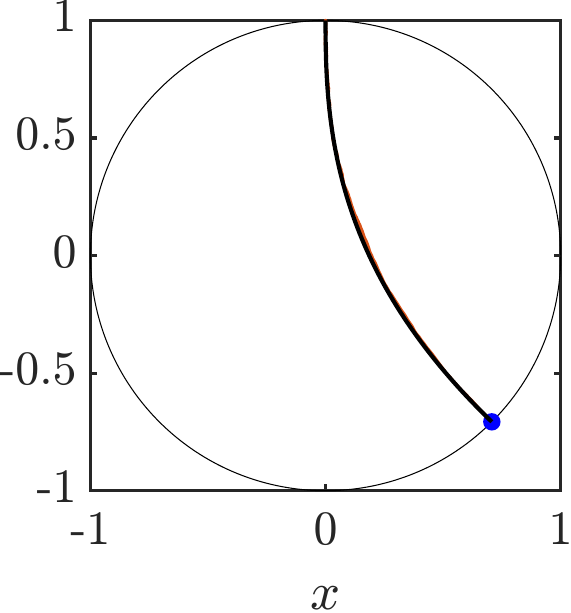}
    \caption{}\label{fig11:f}
  \end{subfigure}%
     \hspace{1em}
      \begin{subfigure}{0.22\textwidth}
    \includegraphics[width=\linewidth]{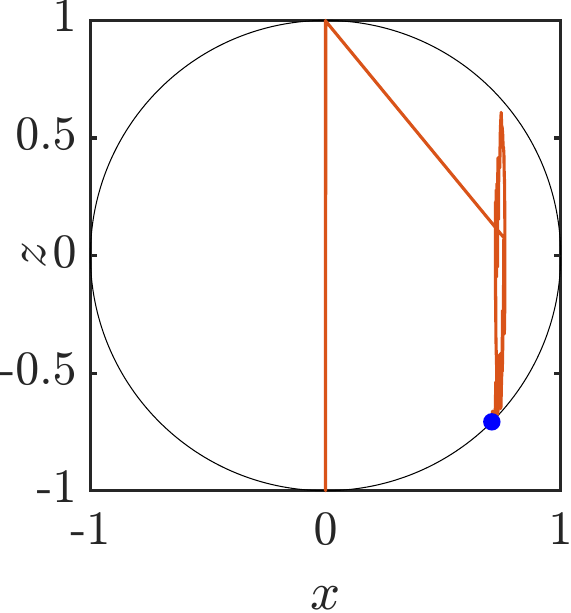}
    \caption{}\label{fig11:g}
  \end{subfigure}%
     \hspace{1em}
   \begin{subfigure}{0.22\textwidth}
    \includegraphics[width=\linewidth]{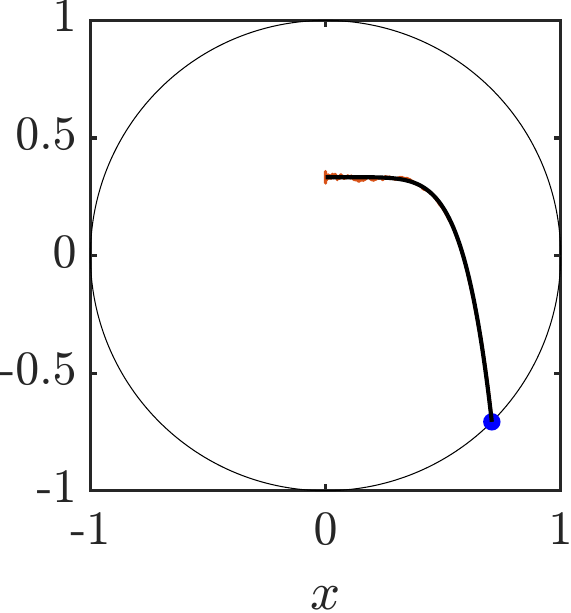}
    \caption{}\label{fig11:h}
  \end{subfigure}%
  \caption{Steering of a single qubit toward the north pole of the Bloch sphere, i.e., $\rho_\oplus = \ket{\uparrow}\bra{\uparrow}$, with the perturbed detector-system Hamiltonian $H_\text{ds}(t) = \sqrt{\gamma/\delta t}(\sigma^+\otimes \sigma^- + \hc) +  I_d\otimes \sqrt{\tilde{\gamma}}\xi(t)G,$ where $G = \sigma^z$ in panels (a)-(b) and (e)-(f), and $G = \sigma^x $ in panels (c)-(d) and (g)-(h). The decay rates are $\gamma = \tilde{\gamma} =  0.1$ and the initial state is $\bm{r}(0) = (1,0,-1)/\sqrt{2}$, which is represented by a blue dot. For $G = \sigma^z$ and $G = \sigma^x$, respectively, (a) and (c) display particular quantum trajectories that are solutions of Eq.~\eqref{eq:esto1}. In panels (a) and (e), a combination of the fluctuating rotation around the $z$-axis, with the continuous backaction steering induced by the detectors, can be observed until the state jumps to the north pole, where the evolution stops. Panels (b) and (f) show an average of $10^3$ (in red) trajectories---similar to the one in panel (a)---and the analytical solution to the LE~\eqref{eq:LEsz} (in black). The ideal target state is still the stationary point of the dissipative dynamics, implying full robustness to this type of perturbation. 
  In panels (c) and (g), we observe the interplay of the continuous backaction and the rotation around the $x$-axis until a jump to the north pole occurs. Hereafter, the Bloch vector stays locked on the great circle of the $y$-$z$--plane. As a result, the average over $10^4$ trajectories displayed in panels (d) and (h) (in red) show that the actual target state [Eq.~\eqref{eq:esto2}] is mixed and does not coincide with the ideal one. As the strength of the fluctuating rotation around the $x$-axis increases, the stationary state becomes more mixed. The analytical solution to the corresponding LE [Eq.~\eqref{eq:LEsx}] is shown in black.}
  \label{fig:11}
  \end{figure*}
  
In Fig.~\ref{fig11:a}, we show a particular solution of Eq.~\eqref{eq:esto1} with  $G = \sigma^z$, where we can see the fluctuating rotation of the Bloch vector $\bm{r}(t) = \Tr[\omega_s(t)\bm{\sigma}]$ around the $z$-axis while it gets inevitably closer to the target state, owing to the backaction of the detectors. In that particular trajectory, the system jumps to the north pole, where it stops evolving. A straight line represents the jump in the transverse cut of the Bloch sphere shown in Fig.~\ref{fig11:e}. 

Averaging over $10^4$ trajectories gives rise to the purely dissipative dynamics shown in Figs.~\ref{fig11:b} and \ref{fig11:f} in red, which is an approximate solution of the LE [cf. Eq.~\eqref{eq:c13}]
\begin{equation}\label{eq:LEsz}
\partial_t\rho_s(t) = \left[\gamma \mathcal{D}(\sigma^+) + \tilde{\gamma} \mathcal{D}(\sigma^z) \right]\rho_s(t).
\end{equation}
The solution to this equation is
\begin{align}
    [\rho_s(t)]_{11} &= 1 + \lbrace{ [\rho_s(0)]_{11} - 1 \rbrace}\exp(-\gamma t), \\
    [\rho_s(t)]_{12} &= [\rho_s(0)]_{12}\exp(-\frac{\gamma + 4\Tilde{\gamma}}{2}t),
\end{align}
and is shown in Figs.~\ref{fig11:b} and \ref{fig11:f} in black. In those figures, we can see that the numerically averaged trajectory agrees to a great extent with the analytical solution.  Moreover, we note that the only change compared with the ideal solution Eq.~\eqref{eq:15.127}-\eqref{eq:15.129} (in the Bloch representation) is in the rate at which the $x$ and $y$-components decay, which is faster in the case here treated. 

Although the averaged trajectories change compared to the ideal dissipative dynamics (cf. Fig.~\ref{fig40:e}), the ideal target state is invariant. This is a trivial consequence of $\ket{\Psi_\oplus} = \ket{\uparrow}$ being an eigenstate of $G = \sigma^z$. We provide further insight in Fig.~\ref{fig:356} by illustrating the time-dependence of the relevant quantities of the erroneous steering with $G = \sigma^z$. There, we can see again [cf. Fig.~\ref{fig:301}] the similarities between the impurities and trace distance of the ideal and averaged (both numerically and analytical) dynamics. Due to the presence of the dissipator $\mathcal{D}(\sigma^z)$, which dampens the coherences of the state even faster than in the ideal case, the steered state becomes maximally mixed as it traverses the $z$-axis of the Bloch sphere. While doing so, it approaches the ideal state faster than in the ideal steering.  

Turning to the case where $G = \sigma^x$, the ideal target state is no longer an eigenstate of this operator. As a consequence, the actual final state of the averaged dynamics is 
\begin{equation}
    [\tilde\rho_\oplus]_{11} = 1 - \frac{\tilde{\gamma}}{\gamma + 2\tilde{\gamma} }, \quad [\tilde\rho_\oplus]_{12} = 0.  \label{eq:esto2}
\end{equation}
This state is the result of the competition between the two dissipative channels in the LE 
\begin{equation}\label{eq:LEsx}
    \partial_t\rho_s(t) = \left[ \gamma\mathcal{D}(\sigma^+) + \tilde{\gamma} \mathcal{D}(\sigma^x) \right]\rho_s(t),
\end{equation}
where the first dissipator tries to collapse the entire Bloch ball toward the north pole and the second one toward the $x$-axis. 

A single trajectory of the corresponding SME and its average is shown in Figs.~\ref{fig11:c} and \ref{fig11:d}, respectively. The depicted single trajectory results from the combined effect of the fluctuating rotation around the $x$-axis and the continuous, non-unitary evolution directing the Bloch vector toward the north pole. Once the north pole is reached by a jump---or via continuous evolution for other trajectories---the Bloch vector stays locked on the great circle in the $y$-$z$--plane. Since this occurs for every trajectory, once the average is performed, the stationary state is no longer pure and lies on the $z$-axis below the north pole. Figure~\ref{fig:357} shows the time progression of the Bloch components depicted in Figs.~\ref{fig11:c}-\ref{fig11:d}, and in their respective transverse cuts Figs.~\ref{fig11:g}-\ref{fig11:h}, in addition to the usual quantifiers.

The stationary state quantifiers comparing the actual target state and the ideal target state are to leading order 
\begin{align}
    F_\infty &= \frac{\gamma + \tilde{\gamma}}{\gamma + 2\tilde{\gamma}} \approx 1  - \frac{\tilde{\gamma}}{\gamma}, \label{eq:esto4}\\
    D_{1,\infty} &=  \frac{\tilde{\gamma}}{\gamma + 2\tilde{\gamma}} \approx \frac{\tilde{\gamma}}{\gamma}, \label{eq:esto5}\\
    L_\infty &= \frac{2\tilde{\gamma}(\gamma + \tilde{\gamma})}{(\gamma + 2\tilde{\gamma})^2} \approx \frac{2\tilde{\gamma}}{\gamma}, \label{eq:esto6}
\end{align}
where we have performed a series expansion in small $\Tilde{\gamma}$ with fixed $\gamma$. We observe here that the dissipator strength $\Tilde{\gamma}$ affects the steering to first order, which means that this error significantly affects the steering protocol.

 \begin{figure*}[t]
\centering
 \begin{subfigure}{0.805\textwidth}
\includegraphics[width=\linewidth]{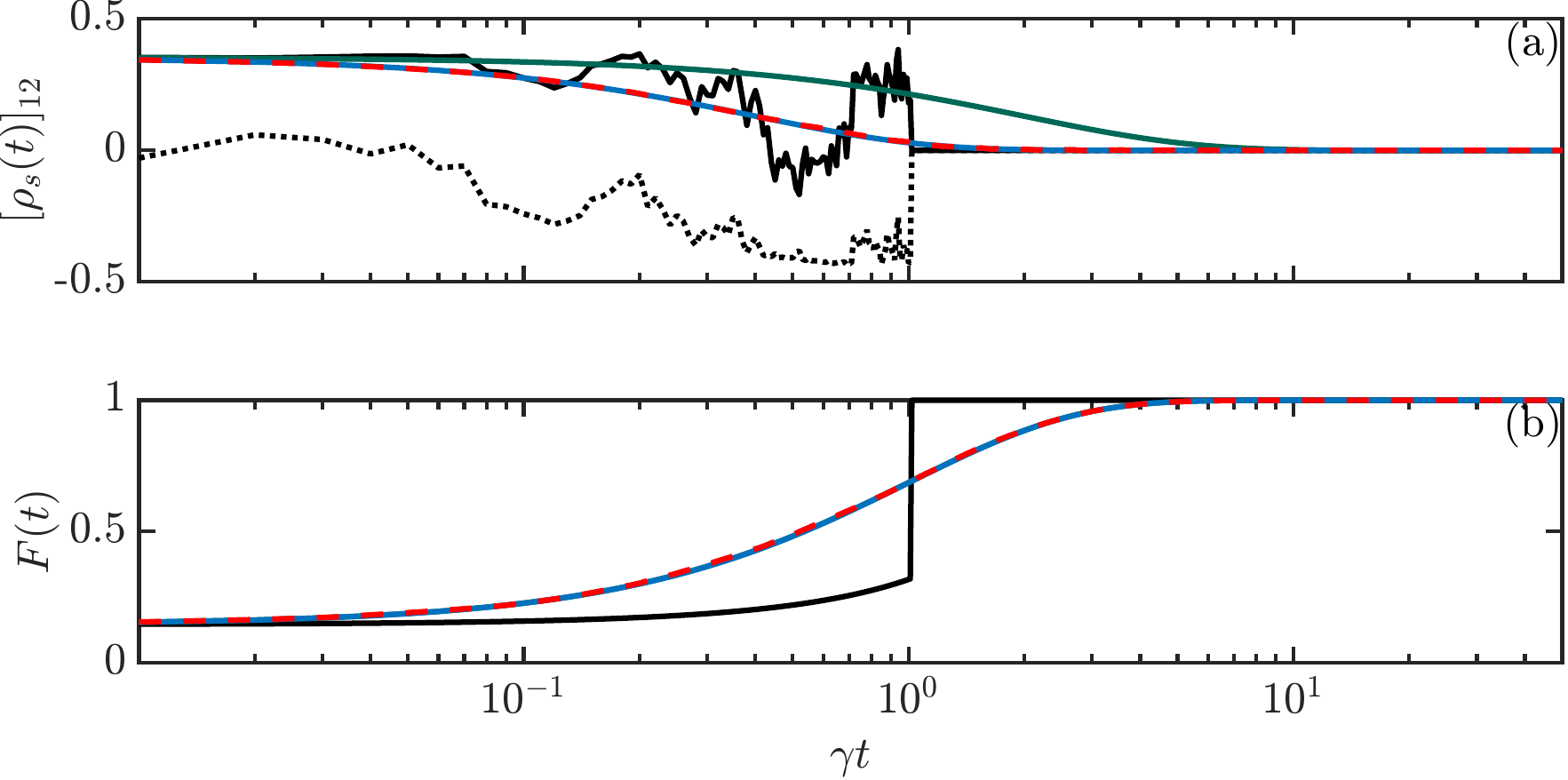}
  \end{subfigure}%
  \hspace{1em}
   \vskip\baselineskip
   \begin{subfigure}{0.8\textwidth}
    \includegraphics[width=\linewidth]{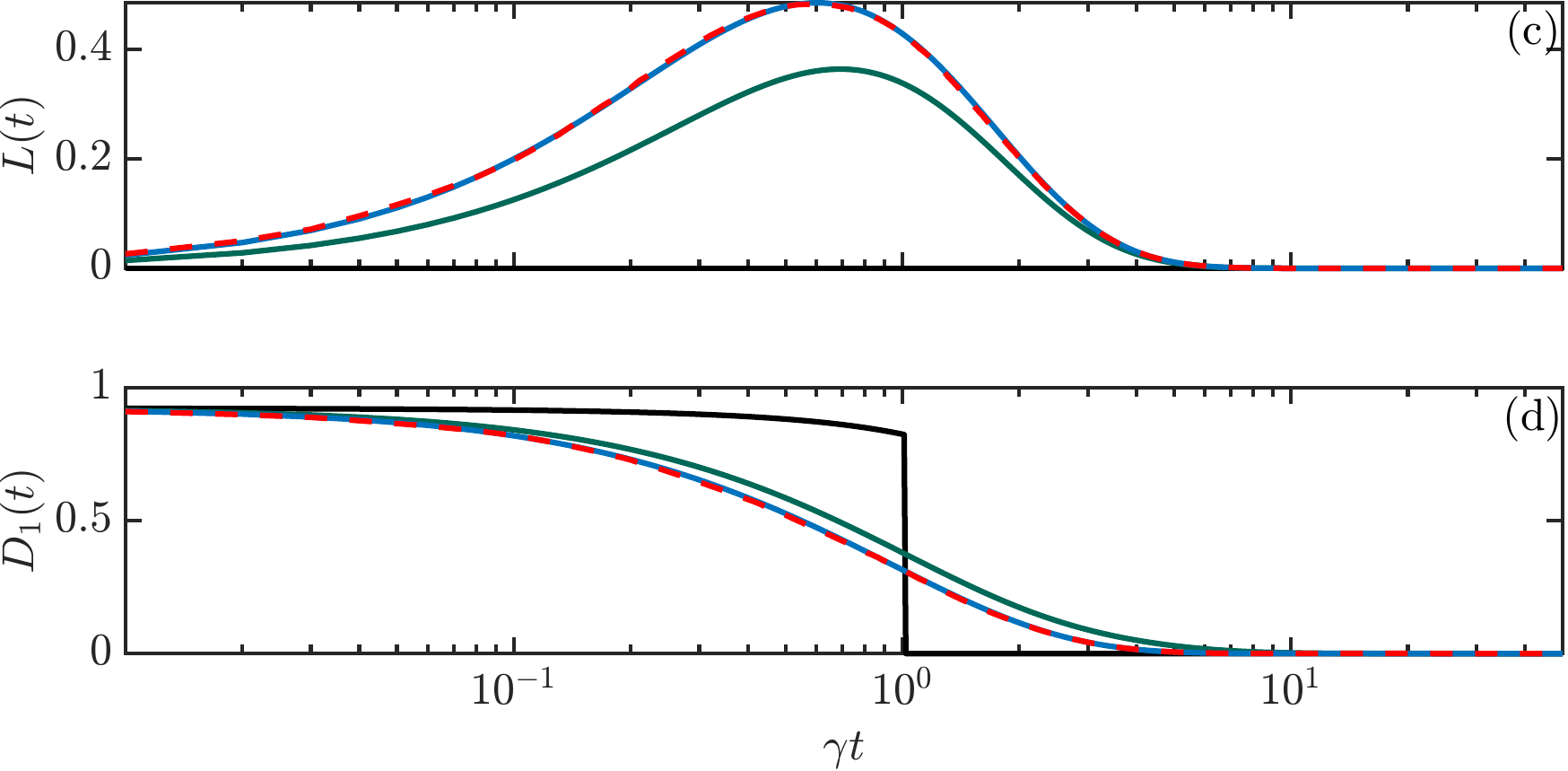}
  \end{subfigure}%
   \hspace{1em}
  \caption{Entries of density matrices and distance measures as functions of time, associated with the steering of a single qubit to the ideal state $\rho_\oplus = \ket{\uparrow}\bra{\uparrow}$ with the perturbed detector-system Hamiltonian $H_\text{ds}(t) = \sqrt{\gamma/\delta t}(\sigma^+\otimes \sigma^-+\hc)+I_d\otimes \sqrt{\Tilde{\gamma}}\xi(t)\sigma^z$. (a) Coherences of several states are generically labeled as $[\rho_s(t)]_{12}$ even if they refer to different density matrices. (b) Fidelity [Eq.~\eqref{eq:me1}]. (c) Impurity [Eq.~\eqref{eq:2.7}]. (d) Trace distance [Eq.~\eqref{eq:me2}]. Note that the impurity of the quantum trajectory is always zero as Eq.~\eqref{eq:esto1} respects purity when $B =0$. Black curves correspond to a single quantum trajectory solving Eq.~\eqref{eq:esto1} with $G = \sigma^z$. In (a), the solid and dotted black lines, respectively, refer to the real and the negative imaginary part of the coherences of the quantum trajectory. Green curves correspond to the ideal steered state solving LE~\eqref{eq09}. Blue and red curves are the exact and approximate solutions to the fully averaged LE~\eqref{eq:c13} (where $B=0$), respectively. The average was taken over $10^4$ trajectories.  For all the plots, $\gamma = \Tilde{\gamma} = 0.1$, $\delta t = 0.1$, and the initial Bloch vector $\bm{r}(0) = (1,0,-1)/\sqrt{2}$.
  } 
  \label{fig:356}
\end{figure*}

 \subsection{Errors in the measurement direction (direction of the detector projection)}\label{ssec:errros_in_the_measurement_direction}

We close the study of the dynamic errors by considering the situation in which the basis of the local observable measured on the detectors changes at each steering step with a given probability. We know \emph{a priori} that the blind measurement scheme will not be affected by this occurrence, for the partial trace does not depend on the chosen basis. Nevertheless, from the individual quantum trajectories, the situation described above can lead to a set of different SMEs unraveling the error-free LE containing a single dissipation channel whose stationary state is the ideal target state. 

Let $\{\mathcal{B}_{d,i}\}_{i \in \mathcal{I}}$ be a family of ONBs spanning $\mathcal{H}_d$ with $\mathcal{I} = \{1,\ldots, n\}$. For convenience, we set $$\mathcal{B}_{d,1} \coloneqq \{\ket{0} = \ket{\uparrow}, \, \ket{1} = \ket{\downarrow} \}$$ as the canonical basis, and for $i \neq 1$, $$\mathcal{B}_{d,i} = \{\ket*{\psi_0^{(i)}},\ket*{\psi_1^{(i)}} \} \neq \mathcal{B}_{d,1}.$$ After the detector and system interact, there is a probability $p(i)$ of measuring the detector in the ONB $\mathcal{B}_{d,i}$. Regardless of the measurement basis, we will assume that the remaining steering parameters coincide with those of the ideal protocol setting (see Sec.~\ref{sec:ideal_protocol_and_setup}). 

The general discrete SME describing the repeated interaction of the steered system with a set of detectors measured in random ONBs is
\begin{equation}\label{eq:pb2}
    \omega_{k+1} = \sum_{i \in \mathcal{I}}\sum_{\alpha \in \{0,1\} }\frac{\mathcal{M}_\alpha^{(i)}\omega_k}{p(i,\alpha\vert \omega_k)}\mathds{1}^{k+1}_{i,\alpha},
\end{equation}
where the time $t$ has been discretized, i.e., $t_k \coloneqq k \delta t$, $k \in \mathbb{Z}^+$, and so $\omega_k \coloneqq \omega(t_k)$. Further,   $\mathds{1}^{k+1}_{i,\alpha}$ is the indicator describing the random implementation of the measurement basis $\mathcal{B}_{d,i}$ and the (also random) resulting detector state $\ket*{\psi_\alpha^{(i)}}$ with $\alpha \in \{0,1\}$. The expectation of this indicator, given the prior state $\omega_k$, reads 
\begin{equation}\label{eq:pb3}
    \mathbb{E}[\mathds{1}^{k+1}_{i,\alpha}] = p(i,\alpha\vert \omega_k) = \Tr  \mathcal{M}_\alpha^{(i)}\omega_k,
\end{equation}
which is the joint probability of obtaining the detector outcome $\alpha$ with the $i$-th measurement direction. 

  \begin{figure*}[t]
\centering
 \begin{subfigure}{1\textwidth}
    \includegraphics[width=\linewidth]{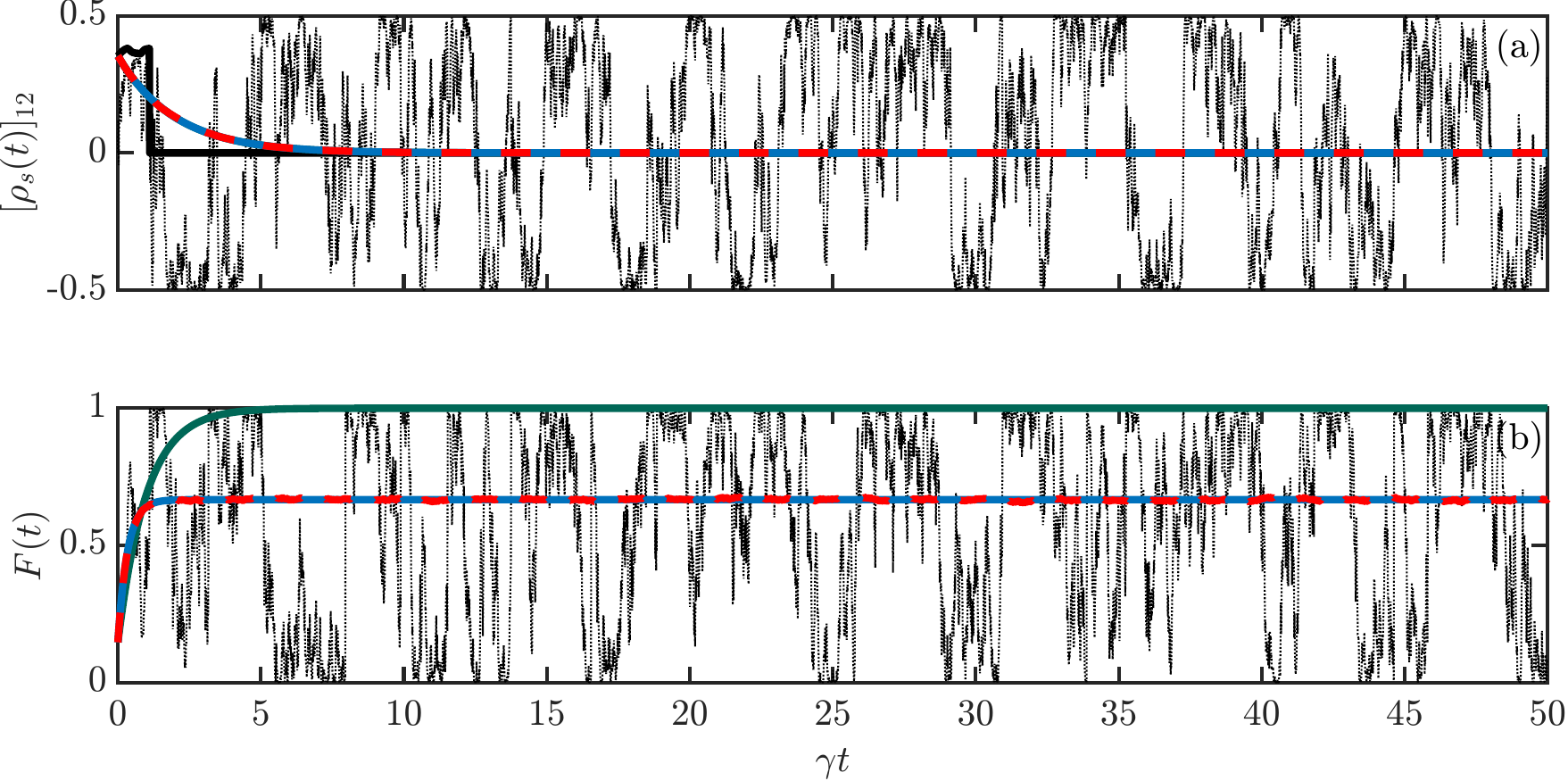}
  \end{subfigure}%
   \vskip\baselineskip
   \begin{subfigure}{1\textwidth}
    \includegraphics[width=\linewidth]{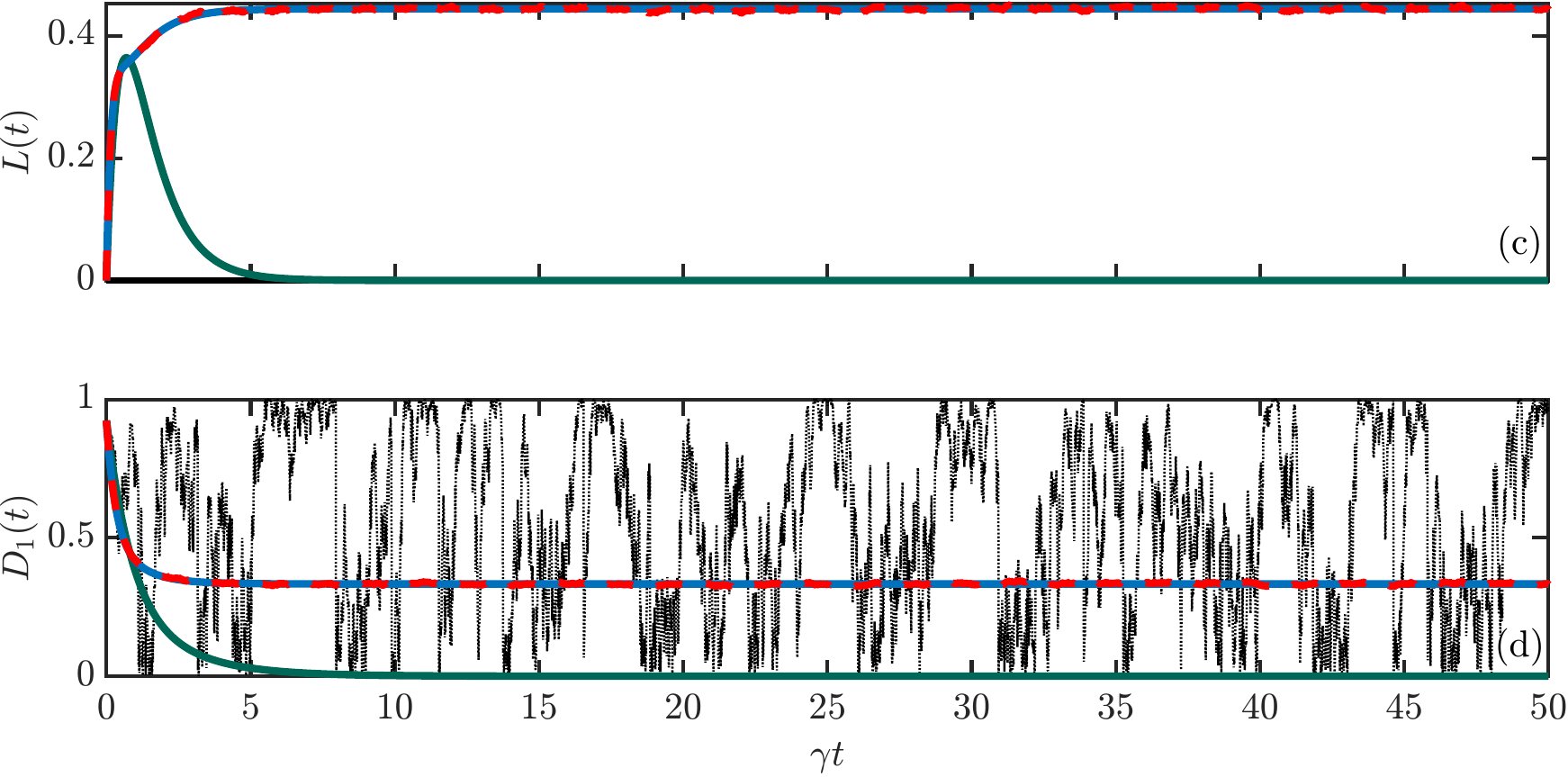}
  \end{subfigure}%
  \caption{Entries of density matrices and distance measures as functions of time, associated with the steering of a single qubit to the ideal state $\rho_\oplus = \ket{\uparrow}\bra{\uparrow}$ with the perturbed detector-system Hamiltonian $H_\text{ds}(t) = \sqrt{\gamma/\delta t}(\sigma^+\otimes \sigma^-+\hc)+I_d\otimes \sqrt{\Tilde{\gamma}}\xi(t)\sigma^x$. (a) The coherences of several states are generically labeled as $[\rho_s(t)]_{12}$ even if they correspond to different density matrices. (b) Fidelity [Eq.~\eqref{eq:me1}]. (c) Impurity [Eq.~\eqref{eq:2.7}]. Note that the impurity of the quantum trajectory is always zero as Eq.~\eqref{eq:esto1} respects purity when $B =0$. (d) Trace distance [Eq.~\eqref{eq:me2}]. Black curves correspond to a single quantum trajectory solving Eq.~\eqref{eq:esto1} with $G = \sigma^x$. In (a), the solid  and dotted black lines, respectively, correspond to the real and the negative imaginary part of the coherences of the quantum trajectory. Green curves correspond to the ideal steered state solving LE~\eqref{eq09}. Note that in (a), the green curve is not visible since it coincides with the red and blue curves. Blue and red curves are the exact and approximate solutions of the fully averaged LE~\eqref{eq:c13} (where $B=0$), respectively. The average was taken over $10^4$ trajectories. 
  For all the plots, $\gamma = \Tilde{\gamma} = 0.1$, $\delta t = 0.1$, and the initial Bloch vector $\bm{r}(0) = (1,0,-1)/\sqrt{2}$.} 
  \label{fig:357}
\end{figure*}

The unnormalized state
\begin{align}\label{eq:pb5}
\mathcal{M}_\alpha^{(i)}\omega_k &\coloneqq p(i)M_\alpha^{(i)}(\delta t)\omega_k M_\alpha^{(i)}(\delta t)^\dagger
\end{align}
describes the backaction on the prior state, $\omega_k$, once the state of the detector is reduced from $\ket{0}$ to  $\ket*{\psi_\alpha^{(i)}}$. The measurement operator is
\begin{equation}\label{eq:pb7}
M_\alpha^{(i)}(\delta t) \coloneqq \bra*{\psi_\alpha^{(i)}}\exp[-i U(\delta t)]\ket{0},    
\end{equation}
with $U(\delta t) = \sqrt{\gamma \delta t}h_0$, and $h_0 = \ket*{\Phi_d^\perp}\bra{\Phi_d}\otimes A +\hc$
If an unbiased average over \emph{all} possible results is taken, i.e., a blind measurement, the updated state becomes
\begin{subequations}
\begin{align}
    \rho_{k+1} &\coloneqq \mathbb{E}[\omega_{k+1}]\\
    &= \sum_{i,\alpha}p(i)M_\alpha^{(i)}(\delta t)\mathbb{E}[\omega_k] M_\alpha^{(i)}(\delta t)^\dagger\\
    &= \Tr_d[ U(\delta t)\rho_d\otimes \rho_k U^\dagger(\delta t)], 
\end{align}
\end{subequations}
which is the ideal, error-free, updated state. Recall that in the WM limit, the above equation leads to ideal LE~\eqref{eq09}.

\subsubsection{Example: two measurement directions}
Let us now consider a particular example in the quantum trajectories scheme. Let us suppose that there are two possible measurement directions: the ideal one denoted by  $\mathcal{B}_{d,1}$, and an erroneous one with the basis $\mathcal{B}_{d,2}$ with the respective ket vectors
\begin{equation}\label{eq:basis}
    \ket*{\psi_0^{(2)}} \coloneqq \frac{1}{\sqrt{2}}\left(\ket{0} + \ket{1} \right), \quad
    \ket*{\psi_1^{(2)}} \coloneqq \frac{1}{\sqrt{2}}\left(\ket{0} - \ket{1} \right).
\end{equation}

\begin{figure} 
\centering
  \begin{subfigure}{0.22\textwidth}
    \includegraphics[width=\linewidth]{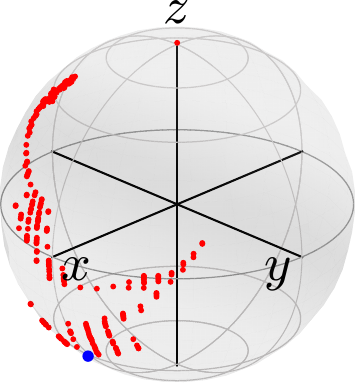}
    \caption{}\label{fig111:a}
  \end{subfigure}%
   \hspace{1em}
  \begin{subfigure}{0.22\textwidth}
    \includegraphics[width=\linewidth]{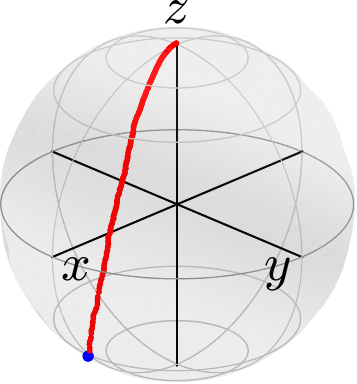}
    \caption{}\label{fig111:b}
\end{subfigure}
  \hspace{1em} 
\begin{subfigure}{0.22\textwidth}
    \includegraphics[width=\linewidth]{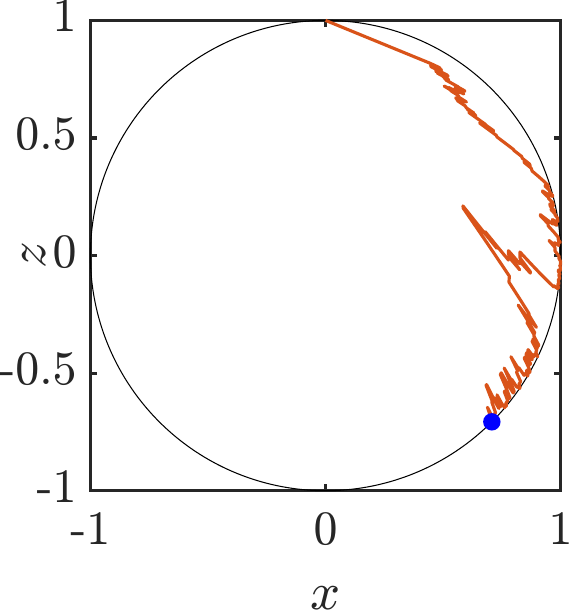}
    \caption{}\label{fig111:c}
  \end{subfigure}%
   \hspace{1em}
  \begin{subfigure}{0.22\textwidth}
    \includegraphics[width=\linewidth]{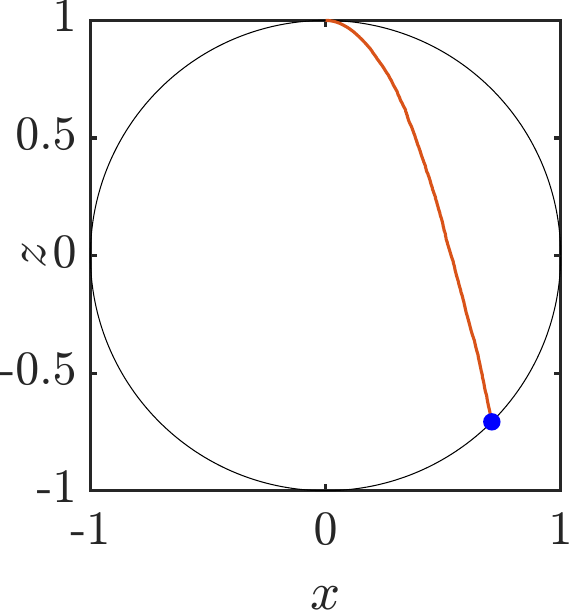}
    \caption{}\label{fig111:d}
\end{subfigure}
  \caption{Dynamics of the Bloch vector of a steered qubit toward the north pole with two measurement directions: the correct one, and an erroneous one given by Eq.~\eqref{eq:basis}. Both measurement directions are used with the same probability. A particular quantum trajectory solving Eq.~\eqref{eq:pb9} is shown in (a) and (c). An average over $10^3$ quantum trajectories of the form presented in (a) is shown in (b) and (d). The initial state is $\bm{r}(0) = (1,0,-1)/\sqrt{2}$ (represented by the blue dot). The decay is $\gamma = 0.1$, and the time-step used in the stochastic Schrödinger equation is $\delta t = 0.1$.  }
  \label{fig:111}
  \end{figure}

In the limit $\delta t = \dd t \rightarrow 0$, the discrete SME~\eqref{eq:pb2} becomes the continuous-time, hybrid SME (see Appendix~\ref{sec:jump_diffusive_type})
\begin{multline}\label{eq:pb6}
    \dd \omega_s (t) = \mathcal{D}(L)\omega_s(t) \dd t 
    \\+ \left[ \frac{L\omega_s(t)L^\dagger}{\expval{L^\dagger L}_t} - \omega_s(t) \right]\,\left[\dd N(t) - \expval*{L^\dagger L}_t\dd t \right]\chi_1(t) \\+ \left[ L\omega_s(t) + \omega_s(t)L^\dagger - \expval*{L+L^\dagger}_t\omega_s(t) \right]\dd X(t)\chi_2(t),
\end{multline}
where we have set $L\coloneqq i\sqrt{\gamma}A$ for convenience. The indicator $\chi_i(t)$ describes the random choice of the local measurement basis $\mathcal{B}_{d,i}$ appearing with probability $\mathbb{E}[\chi_i(t)] = p(i)$.  
The hybrid SME~\eqref{eq:pb6} is a combination of two standard SMEs: If $\chi_1(t) = 1$ between $t$ and $t+\dd t $, that is, if we measure in the correct direction, we have the SME of the jump-type we have derived before [cf. Eq.~\eqref{eq:07}], where the detector outcome is registered simultaneously. This detection is captured by $\dd N(t)$, where the mean value of counts up to time $t$ for a given trajectory $\omega_s(t)$ is 
\begin{equation}\label{eq:pb8}
    \mathbb{E}[ \chi_1(t)\dd N(t)] = p(1)\Tr[L^\dagger L\omega_s(t)]\dd t.
\end{equation}

On the other hand, if the measurement is performed in the wrong direction, i.e., $\chi_2(t) = 1$, we have a diffusive-type SME characterized by the Wiener increment 
$\dd W(t)$ with zero mean and variance $\dd t$. 
Note, however, that this diffusive SME does not coincide with the diffusive part of the SME~\eqref{eq:esto1} (after setting $A = B =0$). The latter equation contains a unitary, diffusive part on top of a deterministic It\^o correction arising as a simple dissipator. In contrast to this case, the diffusive-type SME~\eqref{eq:pb6} describes the continuous measurements on the detectors made in the basis $\mathcal{B}_{d,2}$, where neither a jump nor a unitary evolution occurs. 

As both the jump and diffusive part in Eq.~\eqref{eq:pb6} respect the purity of states, this equation induces the stochastic Schrödinger equation
\begin{widetext}
\begin{multline}\label{eq:pb9}
    \dd \ket{\psi_s(t)} = -\frac{1}{2}\left(L^\dagger L - \expval*{L^\dagger L}_t \right)\ket{\psi_s(t)}\chi_1(t)\dd t + \left( \frac{L}{\sqrt{\expval*{L^\dagger L}_t}} - I_s \right)\ket{\psi_s(t)}\chi_1(t)\dd N(t) \\
    + \left(- \frac{1}{2}L^\dagger L + \frac{1}{2}\expval*{L + L^\dagger}_t - \frac{1}{8}\expval*{L + L^\dagger}^2_t \right)\ket{\psi_s(t)}\chi_2(t)\dd t  + \left( L - \frac{1}{2}\expval*{L + L^\dagger}_t I_s \right)\ket{\psi_s(t)}\dd W(t)\chi_2(t).
\end{multline}
\end{widetext}

Figures~\ref{fig:111} and \ref{fig:112} show a representative quantum trajectory and the average over $10^3$ quantum trajectories obtained from Eq.~\eqref{eq:pb9} with an initial state (in the Bloch representation) $\bm{r}(0) = (1,0,-1)/\sqrt{2}$, decay $\gamma = 0.1$, time-step $\delta t = 0.1,$ and probabilities $p(1) = p(2) = 1/2$, i.e., the probability of measuring in the correct and erroneous basis is the same. \begin{figure*}[t!]
     \centering
    \includegraphics[width=\linewidth]{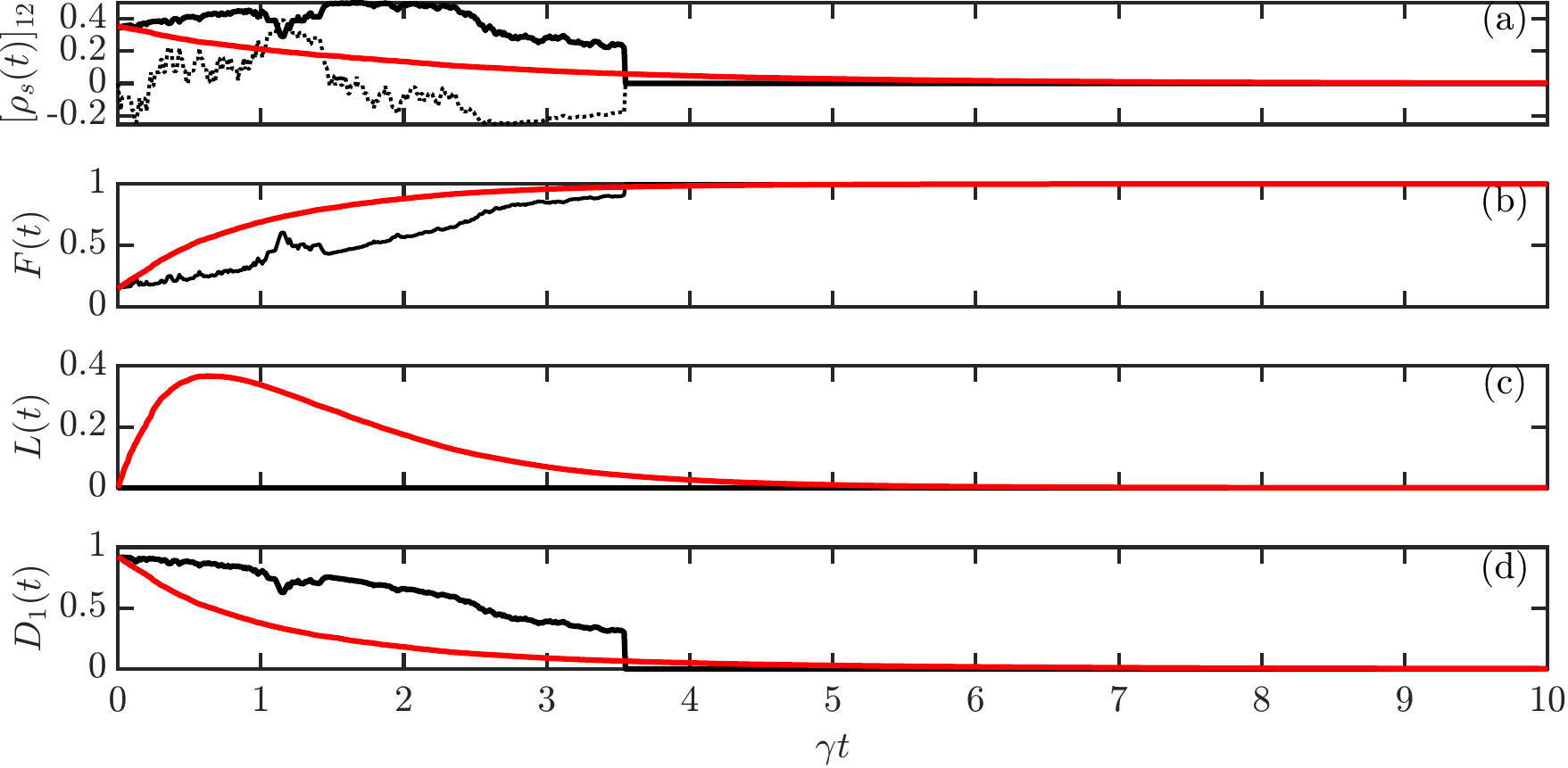}
        \caption{
       Entries and quantifiers of the density matrix as functions of time of a single qubit steered toward the north pole of the Bloch sphere, where the detector is measured in two different directions: the correct one---related to the ideal steering---and an erroneous one given by Eq.~\eqref{eq:basis}. Both measurement directions are used with the same probability as functions of time associated with the steering of a single qubit toward the north pole. (a) Coherences of the density matrices. (b) Fidelity. (c) Impurity. (d) Trace distance. Black curves correspond to a representative quantum trajectory solving Eq.~\eqref{eq:pb9}. In (a), the solid and dotted lines correspond to the real and the negative imaginary parts of the coherences. Red curves are associated with the average over $10^3$ quantum trajectories.}
    \label{fig:112}
\end{figure*}
The trajectory followed by the Bloch vector of the representative quantum trajectory shown in Figs.~\ref{fig111:a} and \ref{fig111:c}, displays the
full-fledged stochastic contributions contained in Eq.~\eqref{eq:pb9} as it approaches the north pole. Specifically, from its initial position, its evolution is randomly governed by continuous, non-unitary evolution present in Eq.~\eqref{eq:pb9} when $\dd N(t) = 0$ and $\chi_1(t)= 1$,
and the diffusive, non-unitary evolution when $\chi_2(t) = 1$. 
The steered state approaches the ideal target state---the north pole---notwithstanding, as shown also in the trace distance in Fig.~\ref{fig:112}d. This fluctuating behavior stops when a jump to the north pole is registered. The state stops evolving as this is a stationary state of Eq.~\eqref{eq:pb9}. 

Even though the quantum trajectories of Eq.~\eqref{eq:pb9} have a somewhat ``erratic'' behavior before a jump occurs (if it occurs), their average is precisely the one obtained from the ideal protocol where the measurement basis is always the correct one (see Fig.~\ref{fig:111}b and Fig.~\ref{fig:111}d and compare with Figs.~\ref{fig40:e} and \ref{fig40:j}). Therefore, as we stated at the beginning of this section, the protocol is fully robust to this type of error by design.

\section{ Discussion}\label{sec:conclusions}

We have studied the robustness of a measurement-induced quantum steering protocol to errors applied to a qubit. This protocol was introduced in Ref.~\cite{roy2020measurement} and is based on the repeated interaction of a chain of detectors with the steered system. After an interaction occurs, each detector is immediately measured, and the outcomes are not selected (``blind measurement''). The state of the steered system then reaches the predetermined target state with a given fidelity in a finite time.

The protocol could enable the preparation of any quantum system with finite degrees of freedom in a pure state, should it prove to be experimentally feasible. However, a realistic implementation of the protocol requires considering any possible undesired alteration of the protocol's steps and parameters, i.e., errors. We have sorted the errors into two categories depending on how they appear relative to each steering step: \emph{static}, if they are either constant or appear with a given probability at each steering step, or \emph{dynamic} otherwise. 

To simplify our analysis, we have studied one error at a time. We have considered two types of static errors (due to a wrongly chosen detector-system coupling parameter and erroneously prepared detector states) and four types of dynamic errors (due to fluctuating steering directions, fluctuating detector-system interaction strength, errors in the steering Hamiltonian, and erroneously chosen measurement direction, and errors in the steering Hamiltonian). We have set the error-free protocol as a reference and introduced various quantifiers such as fidelity, trace distance, and linear entropy (``impurity'') to characterize the protocol's robustness.

In our study of static errors, we have demonstrated how a wrongly chosen detector-system coupling might effectively implement either a projective measurement or a Pauli $\sigma^z$-gate on the steered state. Both occurrences impede the implementation of Lindbladian dynamics and might lead to the complete failure of the protocol when steering many-body systems.

We have also shown that erroneously prepared detectors can induce Lindbladian dynamics with an extra dissipation channel that steers toward a state orthogonal to the ideal one. While this type of dissipator is induced by the population $(\tilde\rho_d)_{22}$ of the detector state, its coherences give rise to a unitary channel. In the small-error approximation, that is, when the strength of the two former channels is small, the leading terms in the quantifiers (i.e., trace distance, impurity, and fidelity) were linear in the population $(\tilde\rho_d)_{22}$. Thus, the protocol was not that robust to this error.

As a result of their fluctuating behavior, dynamic errors produce more complex dynamics.
We have found three novel stochastic master equations describing different types of averaging hierarchies when individual detectors, interacting with the system, could steer it toward states different from the ideal one. These stochastic master equations differ from the two most common ones: One describes the detectors continuously monitoring the system, resulting in a sudden change to different pure states. In contrast, the other stochastic master equation describes how the detectors can induce a diffusive and non-unitary evolution on the steered states  \cite{barchielli1991measurements,attal2010stochastic,breuer2002theory,pellegrini2009diffusion}. 

Furthermore, we have demonstrated that, in contrast to these two types of equations, when both the random steering direction and detector results are considered, a weighted sum of stochastic master equations of the first type mentioned above is obtained, where the weights are stochastic indicators. Now, when the steering directions are averaged out---which would require massive post-selection in an experimental execution of the protocol---we found that whenever a click was registered (no matter from which detector), the steered state jumped to a mixed state. The system dynamics described by this equation coincides with the non-unitary dynamics found when several detectors monitor a quantum system, and there is no jump to any pure state. We also demonstrated how this stochastic master equation could be obtained from a different model of random, repeated interactions, where the detector-system Hamiltonian has several delta-correlated white noises.

We have further shown that when all the detector outcomes are averaged out, the system evolution is governed by the stochastic weighted sum of simple dissipators, each steering toward one of the available directions. We have provided a particular example with two erroneous steering directions parametrized by their probability and a polar angle in the Bloch sphere. In the stationary state regime, the quantifiers showed that the protocol is quite robust to this error, as its leading power in the polar angle is of order four.

In addition, we have investigated how an environment may interfere with the detector and steered system by using a perturbation Hamiltonian with multiplicative white noise. With this error, we have analytically studied the dynamics of the system and showed how the additional dissipative channels appear in the corresponding Lindbladian. We have developed two approaches to arrive at the resulting Lindblad equation. The first approach involves directly averaging the detector-system dynamics over realizations of the white noise. Subsequently, a blind measurement is performed. 

In our second approach, we have devised a novel stochastic master equation to simultaneously describe the influences of the detector measurement and the environment perturbation. This master equation is diffusive because it has a unitary fluctuating generator, unlike the master equation of the diffusive type mentioned. The deterministic part of this stochastic master equation includes the detector's contribution (the finite backaction), an additional backaction caused by the environment, and a dissipator, which is the It\^o correction of the fluctuating unitary generator. As part of this equation, an inhomogeneous Poissonian process represents the jump part. However, unlike the usual jump terms used in standard stochastic master equations, this one describes a jump to a mixed state instead of a pure state. Because of the novelty of this equation, it would be necessary to extend and strengthen its mathematical foundations in the same manner as it was done for the jump, diffusive, and more standard diffusive-type stochastic master equations \cite{attal_repeated_2006, attal2010stochastic,barchielli1991measurements,pellegrini2009diffusion, pellegrini_markov_2010}. With this error, we analyzed the dynamics analytically for a particular form of the perturbation Hamiltonian. We showed how the additional dissipative channels appear in the system dynamics. We have shown that when the ideal target state is an eigenstate of the constant operator of the perturbation, the protocol displays complete robustness. This is no longer the case when the former condition is not fulfilled.

Our findings indicate that errors due to fluctuating detector-system strength could be quenched or time-dependent. However, we have found that the resultant Lindblad equation remains the same in both cases, and only the dissipation rate is affected. Further, while fluctuating measurement directions do not change the averaged dynamics, they alter the nature of the stochastic differential equation describing the system dynamics. Specifically, we have demonstrated that the system could follow a jump-type or diffusive-type behavior after every steering step.

Our work opens up several future directions. In particular, our analytical treatment of the errors in steering directions, where we derive three stochastic master equations, is essentially valid for more complicated dynamics having multiple stochasticities. This approach can be applied to study measurement-induced entanglement transitions \cite{zerba,turkeshi2021measurement,lunt2020measurement,doggen2022generalized,ippoliti2021entanglement,skinner2019measurement,li2019measurement} where multiple stochasticities in the dynamics can stem from different errors or multiple measurement observables.

Even though we have focused on a single qubit steering, our approach can be systematically applied to quantum systems possessing a larger number of degrees of freedom, where, in addition, multiple errors can occur. Importantly, when considering systems with two degrees of freedom (or more), the role of static and dynamic errors in modifying  or undermining measurement-engineered entanglement is an outstanding challenge. While the present work only addressed the question of how different errors affect the steering protocol, the present study can be taken as a starting point for developing stabilizer codes \cite{nielsen2002quantum,lidar2013quantum,continuous_error_correction_experiment}, with the prospects of implementing error correction schemes.

Our results can be adapted and readily used in a number of experimental platforms where errors and noise are present in various measurement-based protocols. These include the observation of topological transitions in single qubits implementing weak measurements \cite{topological_transition_weak_measurements_experiment,topological_transitions_optical_experiment}, monitoring superconducting qubits via weak-measurements where the quantum trajectories are registered  \cite{quantum_trajectory_tracking_nn,qubit_quantum_traj_exp,continuous_control_experiment,quantum_jumps_Devoret}, measuring incompatible operators in superconducting qubits via weak measurements using ancillary quantum systems \cite{weak_measurements_experiment}, among other applications and platforms. The relevant types of errors and their parameters can vary from setup to setup, as can be inferred from the above references.

Note that the list of errors studied in this work is far from exhaustive. For example, it would be interesting to investigate other perturbation Hamiltonians with additional (even non-multiplicative) noise sources. More formally, it would be intriguing to mathematically substantiate our novel stochastic equations (comprising white noise), as was done with the jump and diffusive stochastic master equations \cite{pellegrini_poisson_2009,attal_repeated_2006}.

\section*{Acknowledgments} 
We thank Samuel Morales, Artem Sapozhnikov, Ronen Eldan, Markus Fröb, and Ofer Zeitouni for useful discussions. The work was supported by the Deutsche
    Forschungsgemeinschaft (DFG): Project No. 277101999
    -- TRR 183 (Project B02) and Grants No. EG 96/13-1 and No. GO 1405/6-1, the Helmholtz International Fellow
    Award, and the Israel Binational Science Foundation-- National Science Foundation through award
    DMR-2037654. 
   EMG, IVG, and YG are grateful to Departamento de F{\'i}sica, FCFM, University of Chile (Santiago) for hospitality during the final stage of this work.
   EMG. and IVG also acknowledge the support by the European Commission under the EU Horizon 2020 MSCA-RISE-2019 program (Project 873028 HYDROTRONICS).
 
\begin{widetext}

\appendix
\section{The derivation of the Lindblad equation associated with errors in the detector states}\label{sec:A1}

In this appendix, we derive the LE~\eqref{eq:14} from both the static and quenched versions of the errors in the detector state.  

\subsection{Static error}
Let us first consider the erroneous detector state 
\begin{equation}\label{eq:a1}
      \tilde\rho_d = \begin{pmatrix}
    a & \abs{b}\exp(i\phi) \\ \abs{b}\exp(-i\phi) & 1-a
    \end{pmatrix}
\end{equation}
written in the ONB $\mathcal{B}_d = \{ \ket{\Phi_d},\ket*{\Phi_d^\perp}\}$. Before the detector interacts with the steered system, we have the product state
\begin{equation}\label{eq:a2}
   \tilde\rho_d\otimes \rho_s(t) = \begin{pmatrix}
    a\rho_s(t) & b\rho_s(t) \\ b^*\rho_s(t) & (1-a)\rho_s(t)
    \end{pmatrix}.
\end{equation}
The detector-system Hamiltonian $H_\text{ds} = J\left( \ket*{\Phi_d^\perp}\bra{\Phi_d}\otimes A + \hc \right)$ can also be written in the ONB $\mathcal{B}_d$ as
\begin{equation}\label{eq:a3}
    H_\text{ds} = Jh_0 \equiv  \gamma\begin{pmatrix}
    0 & A^\dagger \\ A & 0
    \end{pmatrix}.
\end{equation}
Next, we replace Eqs. \eqref{eq:a2}-\eqref{eq:a3} into the second and third terms of the series expansion of the evolved detector-system state
\begin{equation}\label{eq:a3.1}
    \rho_{\text{ds}}(t+ \delta t ) = \exp[-iJ \delta t \, \ad (h_0)]\tilde\rho_d \otimes \rho_s(t) = \left( I_{ds} - iJ\delta t \ad (h_0)  - \frac{J^2\delta t^2}{2}\ad^2 (h_0)\right) \tilde\rho_d\otimes \rho_s(t)  + \mathcal{O}(J^3\delta t^3),   
\end{equation}
which are
\begin{equation}\label{eq:a4}
\ad( h_0)\tilde\rho_d\otimes\rho_s(t)
= \begin{pmatrix}
b^*A^\dagger \rho_s - b\rho_s(t) A &  (1-a)A^\dagger \rho_s(t) -a\rho_s(t)A^\dagger \\
 aA \rho_s(t) -c \rho_s(t)A  &  bA\rho_s(t) -b^*\rho_s(t)A^\dagger 
\end{pmatrix},
\end{equation}
and
\begin{equation}\label{eq:a5}
\ad^2 (h_0)\tilde\rho_d\otimes\rho_s(t)
=\begin{pmatrix}
\{aA^\dagger A, \rho_s \}- 2(1-a)A^\dagger\rho_s A  & bA^\dagger A \rho_s -2 b^* A^\dagger\rho_sA^\dagger + b\rho_s AA^\dagger   \\
-2bA \rho_s A + b^*AA^\dagger \rho_s+ b^*\rho_sA^\dagger A & \{(1-a)AA^\dagger, \rho_s\} - 2aA\rho_s A^\dagger 
\end{pmatrix}.
\end{equation}
Next, we take the partial trace with respect to the detectors to get
\begin{equation}\label{eq:a6}
    \rho_s(t + \delta t) = \rho_s(t) + \left[ -iJ\abs{b}\delta t\,\ad\textbf{(}\exp(i\phi)A + \hc \textbf{)} + a J^2\delta t^2\mathcal{D}(A) + (1-a)J^2\delta t^2\mathcal{D}(A^\dagger) \right]\rho_s(t)  + \mathcal{O}(J^3\delta t^3).
\end{equation}
The next step is to evaluate the limit $$\lim_{\delta t \to 0}\frac{\rho_s(t+ \delta t) - \rho_s(t)}{\delta t}$$ while guaranteeing that the products $\kappa = J\abs{b}$ and $\gamma = J^2\delta t$ are kept constant---This is the WM limit [cf. Eq.~\eqref{eq:02-1}]. The resulting equation is then 
\begin{equation}\label{eq:a6.1}
\partial_t\rho_s(t) = \left[ -i\kappa \, \ad(\tilde{h}) + \gamma_+ \mathcal{D}(A) + \gamma_-\mathcal{D}(A^\dagger) \right]\rho_s(t),  
\end{equation}
which is Eq.~\eqref{eq:14}, where $\gamma_+ \coloneqq a\gamma$ and $\gamma_- \coloneqq (1-a)\gamma$ are the decays, and $\tilde{h} = \exp(i\phi) A + \hc$

\subsection{Quenched error}
Let us now assume that at each steering step, the detector state is $\ket{\Phi_d^i} = \cos(\theta_i/2)\ket{\Phi_d}+e^{i\varphi_i}\sin(\theta_i/2)\ket*{\Phi_d^\perp}$ and randomly chosen from an ensemble with probability $p(i)$ such that  $\sum_ip(i)\ket{\Phi_d^i}\bra{\Phi_d^i} = \tilde \rho_d$. Given this quenched version of the error, the discrete-time stochastic master equation governing the dynamics of the steered state from time $t_k = k\delta t$ to $t_{k+1} =(k+1)\delta t$ is
\begin{equation}\label{eq:rvb2.2}
     \omega_{k+1} = \sum_{i,\alpha}\frac{\mathcal{M}^{(i)}_{\alpha}\omega_k}{p(i,\alpha\vert \omega_k)} \mathds{1}_{i,\alpha}^{k+1},
\end{equation}
where $\omega(t_k)= \omega_k$,  $\mathbb{E}[\mathds{1}_{i,\alpha}^{k+1}] = p(i,\alpha\vert \omega_k)$ is the probability of the indicator function of the outcomes $(i,\alpha)$ given the previous state $\omega_k$, and $\mathcal{M}^{(i)}_{\alpha}\omega_k = p(i)\bra{\alpha}U(\delta t)\ket{\Phi_d^i}\omega_k \bra{\Phi_d^i} U^\dagger(\delta t)\ket{\alpha}$ is the updated, unnormalized state. We have mapped $\ket{\Phi_d}\mapsto \ket{0}$ and $\ket*{\Phi_d^\perp} \mapsto \ket{1}$. [See Appendix~\ref{sec:appendix_jump_several_directions_static} for a more detailed description of the notation used in Eq.~\eqref{eq:rvb2.2}.]

If we average with respect to the detector outcomes and the detectors in \eqref{eq:rvb2.2}, i.e., $\mathbb{E}[\omega_k] = \rho_k$ and perform a series expansion in $\delta t$, we can have the formal derivative
\begin{align}\label{eq:rvb2.3}
   \partial_t\rho_s(t) &=  \lim_{\delta t\to 0}\dfrac{\rho_s(t+\delta t)-\rho_s(t)}{\delta t} \notag \\
   &= \sum_ip(i)\Bigl( -iJ\sin \frac{\theta_i}{2}\cos\frac{\theta_i}{2}[Ae^{i\varphi_i}+\text{h.c.},\rho_s(t)] +\cos^2 \frac{\theta_i}{2}J^2\delta t\mathcal{D}(A)\rho_s(t)\notag \\ &\,\,\,\,+\sin^2\frac{\theta_i}{2}J^2\delta t\mathcal{D}(A^\dagger)\rho_s(t)
 \Bigr) + \mathcal{O}(J^3\delta t^2).
\end{align}
Therefore, we recover \eqref{eq:a6.1} in the weak-measurement if
\begin{equation}
    \sum_ip(i)\sin\frac{\theta_i}{2}\cos\frac{\theta_i}{2}e^{i\varphi_i}= \kappa e^{i\phi}\sqrt{\frac{\delta t}{\gamma}} \quad \text{and} \quad \sum_ip(i)\cos^2\frac{\theta_i}{2} = a.
\end{equation}

 \section{Stationary ellipsoid}\label{sec:stationary_ellipsoid}
 Here, we demonstrate the properties obeyed by the stationary ellipsoids described by Eq.~\eqref{eq:18}. 

 Let us demonstrate these two properties by denoting the punctured ellipsoid by $\mathcal{C}$. To demonstrate that $\{(0,0,0)\} \notin \mathcal{C}$, we represent the Bloch vector $\bm{r}_\infty$ in spherical coordinates:
    \begin{equation}
     \bm{r}_\infty = (r\sin\theta\cos\varphi, r\sin\theta\sin\varphi, r\cos\theta).
    \end{equation}
For $\kappa > 0$, the polar angle is given by 
    \begin{equation}\label{eq:19}
        \theta = \arccos{\left[\text{sgn}(2a-1)\frac{\gamma}{\sqrt{\gamma^2 + 16\kappa^2}}\right]}.
    \end{equation}
For a fixed value of $\gamma$, the polar angle tends to $\theta \rightarrow \pi/2$  as  $\kappa \rightarrow \infty$, and therefore
      \begin{equation}\label{eq:19.1}
   r = \norm{\bm{r}_{\infty}} = \frac{\abs{2a-1}\gamma \sqrt{\gamma^2 + 16\kappa^2}}{\gamma^2 + 8\kappa^2}  \rightarrow 0 \,
\end{equation}
as $\kappa\rightarrow \infty$.
Thus, the endpoint of the ellipsoid's minor axes coinciding with the origin of the Bloch sphere does not belong to $\mathcal{C}$.   

The other statement is directly checked from Eq. \eqref{eq:18} by setting $x_\infty = y_\infty = 0$. This gives that the other endpoint of the minor axis is located at $z_\infty = 2a-1$, which is never equal to $\pm 1$, as $a \notin \{ 0, 1\}$. The physically attainable point $\bm{r}_\infty = (0,0,2a-1)$ is obtained when $\kappa \rightarrow 0$.

\section{Stochastic differential equations for erroneous steering directions}\label{sec:appendix_jump_several_directions_static}
Here, we will derive the stochastic master equations (SMEs) shown in Sect. \ref{ssec:errors_steering_direction} corresponding to different averaging hierarchies with several steering directions: Eq. \eqref{eq:sde1}, where the detector readouts and the random steering directions are present; Eq. \eqref{eq:ssde5}, where the steering directions are averaged out; and Eq. \eqref{eq:sde11}, where the detector readouts are averaged out. 

We denote the set of steering directions as $\mathcal{R} = \{(\theta_i,\varphi_i; p(i)) \}_{i\in \mathcal{I}}$, where $\mathcal{I} = \{1,2,\ldots, n \}$ is an index set indicating the number of steering directions, and $p(i)$ denotes the probability of steering toward the $i$-th direction parameterized by the angles $(\theta_i,\varphi_i)$ with $\theta_i \in [0,\pi]$ and $\varphi_i \in [0,2\pi)$.  Note that the ideal target state may be contained in $\mathcal{R}$, yet it is not required. We label each steering step by an integer number, e.g., $k \in \mathbb{Z}^+$ for a given time $t_k = k\delta t$, and we relabel the states of the system as $\omega_s(t_k) = \omega_k$.

\subsection{On the Kraus operators}
Before attempting to derive the SMEs, we must construct and adequately understand the Kraus operators inducing the operations on the system density matrix. 

The measurement operators associated to a detector readout  $\alpha$ \emph{given} that the $i$-th direction appeared (see below) are
\begin{equation}\label{eq:gen1}
    \Omega_{i,\alpha} \coloneqq M_\alpha^{(i)}(\delta t) = \bra{\alpha}\exp(-iJ\delta t h_0^{(i)})\ket{0} \, \alpha \in \{0,1 \},
\end{equation}
where $h^{(i)}_0 = \ket*{\Phi_d^\perp}\bra{\Phi_d}\otimes A_i + \hc$ is the dimensionless operator associated with the Hmailtonian of the $i$-th direction (or state), and $A_i = R(\theta_i, \varphi_i)UR^\dagger (\theta_i,\varphi_i)$ is the rotated operator operator annihilating the $i$-th target state  [cf. Eq.~\eqref{eq:7.3}]. 

Let us define the following operators
\begin{equation}\label{eq:gen2}
    W_{i,\alpha} \coloneqq \sqrt{p(i)}\Omega_{i,\alpha} \quad \text{with} \quad i \in \{1,\ldots, n \} \quad \text{and} \quad \alpha \in \{0,1\}.
\end{equation}
For a given prior state $\omega_k$, if upon a measurement of the local observable $S_d^{(i)} = \ket{\Phi_d^i}\bra{\Phi_d^i} - \ket*{\Phi^{i,\perp}_d}\bra*{\Phi^{i,\perp}_d}$ it is revealed that the $i$-th direction and the outcome $\alpha$ where measured
\emph{at the same time}, then the unnormalized posterior state is given by 
\begin{equation}\label{eq:gen3}
    \Tilde{\omega}_{k+1} = \mathcal{M}^{(i)}_{\alpha}\omega_k \coloneqq W_{i,\alpha}\omega_kW_{i,\alpha}^\dagger = p(i) \Omega_{i,\alpha}\omega_k \Omega_{i,\alpha}^\dagger.
\end{equation}
The joint probability of having the $i$-th direction \emph{and} the detector outcome $\alpha$ given the prior state $\omega_k$ is then 
\begin{equation}\label{eq:gen4}
  p(i,\alpha\vert \omega_k) \coloneqq \Tr\Tilde{\omega}_{k+1} = p(i)\Tr(\Omega_{i,\alpha}^\dagger \Omega_{i,\alpha}\omega_k).
\end{equation}
We want to point out once more that the direction and click result are \emph{simultaneously} read. 

The above joint probability equation shows that the term multiplying $p(i)$ is the conditional probability of obtaining the result $\alpha$ given the $i$-th direction and prior state $\omega_k$ appeared. We denote this probability by
\begin{equation}\label{eq:gen5}
    p(\alpha\vert i; \omega_k) \coloneqq \Tr(\Omega_{i,\alpha}^\dagger \Omega_{i,\alpha}\omega_k).
\end{equation}
At first, this choice of probability might seem strange, and one might think that it describes the joint probability Eq. \eqref{eq:gen4}. To disprove this, we give the following arguments: First, the conditional probability Eq. \eqref{eq:gen5} fulfills the well-known normalization condition (in terms of probability)
\begin{align*}
   \sum_{\alpha = 0,1}p(\alpha\vert i; \omega_k) &= \sum_{\alpha = 0,1}\Tr(\Omega_{i,\alpha}^\dagger \Omega_{i,\alpha}\omega_k) 
    = \sum_{\alpha = 0,1}\Tr[ M_\alpha{^{(i)}}(\delta t)^\dagger M_\alpha{^{(i)}}(\delta t)\omega_k] 
    = \Tr[\omega_k] = 1.
\end{align*}
Therefore, summing over $\alpha$ in Eq. \eqref{eq:gen4} gives
\begin{equation}\label{eq:gen7}
    \sum_{\alpha}p(i,\alpha\vert \omega_k) = p(i),
\end{equation}
which is precisely the marginal probability of the $i$-th steering direction. Moreover, the right-hand side  of  Eq. \eqref{eq:gen4} is just the formula for the conditional probability
\begin{equation}\label{eq:gen7.1}
    p(i,\alpha\vert \omega_k) = p(i)p(\alpha\vert i;\omega_k).
\end{equation}
Note that the probability $p(i\vert \omega_k) \equiv  p(i)$ is independent of the prior state.

A trivial consequence of Eq. \eqref{eq:gen7} is that summing over the directions gives unity, so
\begin{equation}\label{eq:gen8}
    \sum_{\alpha,i}p(\alpha, i\vert \omega_k) = 1.
\end{equation}
Therefore, the joint probabilities defined in Eq. \eqref{eq:gen4} are well defined. In addition, the set $\{A_{i,\alpha}\}$ satisfies the Kraus condition
\begin{equation}\label{eq:gen10}
    \sum_{m,\alpha}W_{i,\alpha}^\dagger W_{i,\alpha} = I_s.
\end{equation}
Our last argument in favor of the definition of the operators $\{A_{i,\alpha}\}$ is that they naturally incorporate the classical probability $p(m)$. 

\subsection{The three averaging hierarchies}
Having defined the set of Kraus operators $\{W_{i,\alpha} \}$, the updated state upon jointly obtaining the click $\alpha$ and the $i$-th direction is 
\begin{equation}\label{eq:gen11}
    \omega_{k+1} = \frac{\mathcal{M}_\alpha^{(i)}\omega_k}{p(i,\alpha\vert \omega_k)}.
\end{equation}
This is our starting point to find the SMEs describing different averaging hierarchies, starting with Eq.~\eqref{eq:sde1}.

\subsubsection{Full stochasticity}
With the aid of Eq. \eqref{eq:gen11}, we can write the discrete SME
\begin{equation}\label{eq:gen12}
    \omega_{k+1} = \sum_{i,\alpha}\frac{\mathcal{M}^{(i)}_{\alpha}\omega_k}{p(i,\alpha\vert \omega_k)} \mathds{1}_{i,\alpha}^{k+1},
\end{equation}
where $\{\mathds{1}_{i,\alpha}^{k+1}\}$ is a set of indicator  functions from which a \emph{single} one appears between $t_k$ and $t_{k+1}$ with expectation
\begin{equation}\label{eq:gen13}
    \mathbb{E}[\mathds{1}_{i,\alpha}^{k+1}] = p(i,\alpha\vert \omega_k).
\end{equation}
In other words, the indicator functions act as the stochastic variables describing the direction and click outcomes. Hence, 
Eq. \eqref{eq:gen12} describes the stochastic, discrete evolution of a steered state once a given direction and a detector outcome are determined at the same time.

Pointing toward the WM limit [cf. Eq.~\eqref{eq:02-1}], we perform the now usual rescaling of the detector-system coupling strength $J = \sqrt{\gamma/\delta t}$, and thus we have up to the first order in $\delta t$
\begin{align}\label{eq:gen13.1}
    \frac{\mathcal{M}^{(i)}_{0}\omega_k}{p(i,\alpha=0\vert \omega_k)} &= \omega_k - \frac{\gamma \delta t}{2}\{A_i^\dagger A_i - \expval*{A_i^\dagger A_i}_k,\omega_k \} + \mathcal{O}(\delta t^2)\\
    \frac{\mathcal{M}^{(i)}_{1}\omega_k}{{p(i,\alpha=1\vert \omega_k)}} &= \frac{A_i\omega_k A_i^\dagger}{\expval*{A_i^\dagger A_i}_k} +  \mathcal{O}(\delta t^2),
\end{align}
where $\expval*{A_i^\dagger A_i}_k \coloneqq \Tr(A_i^\dagger A_i\omega_k)$.
Replacing the above expansions in Eq. \eqref{eq:gen12} gives
\begin{equation}\label{eq:gen13.2}
\omega_{k+1} = \sum_i \left( \omega_k - \frac{\gamma \delta t}{2}\{A_i^\dagger A_i - \expval*{A_i^\dagger A_i}_t,\omega_k \}\right)\mathds{1}_{i,\alpha = 0}^{k+1} + \sum_i \frac{A_i\omega_k A_i^\dagger}{\expval*{A_i^\dagger A_i}_t}\mathds{1}_{i,\alpha = 1}^{k+1} +  \mathcal{O}(\delta t^2),
\end{equation}
where the expectation of the indicators become
\begin{align}
    \mathbb{E}[\mathds{1}_{i,\alpha = 0}^{k+1}] &= (1 - \gamma \delta t \expval*{A^\dagger_iA_i}_k)p(i) + \mathcal{O}(\delta t^2), \label{eq:gen13.3} \\
    \mathbb{E}[\mathds{1}_{i,\alpha = 1}^{k+1}] &= \gamma \delta t \expval*{A^\dagger_i A_i}_k p(i) +  \mathcal{O}(\delta t^2). \label{eq:gen13.3-1}
\end{align}

After adding an appropriate zero operator in Eq. \eqref{eq:gen13.2} and setting $\delta t = \dd t \rightarrow 0$, we have
\begin{equation}\label{eq:gen13.4}
    \dd \omega(t) = \sum_i\left(\gamma\expval*{A_i^\dagger A_i}_t\omega(t) - \frac{\gamma }{2}\{A^\dagger_i A_i,\omega(t) \}  \right)\chi_i(t) \dd t + \sum_i\left( \frac{A_i\omega(t)A_i^\dagger}{\expval*{A_i^\dagger A_i}_t} - \omega(t) \right)\chi_i(t)\dd N_m(t),
\end{equation}
where
\begin{equation}\label{eq:ecu1}
    \mathds{1}_{i,\alpha = 0}^{k+1} \longrightarrow \chi_i(t) \quad \text{as} \quad \delta t = \dd t \rightarrow 0
\end{equation}
and
\begin{equation}\label{eq:gen13.5}
    \mathbb{E}[\chi_i(t)] = p(i).
\end{equation}
Similarly, 
\begin{equation}\label{eq:ecu2}
\mathds{1}_{i,\alpha = 1}^{k+1} \longrightarrow \chi_i(t)\dd N_i(t) \quad \text{as} \quad \delta t = \dd t \rightarrow 0
\end{equation}
with
\begin{equation}\label{eq:poisson_full_1}
    \mathbb{E}[\chi_i(t)\dd N_i(t)] = \gamma\expval*{A_i^\dagger A_i}_tp(i) \dd t = \gamma \Tr[A_i^\dagger A_i\omega(t)]p(i)\dd t.
\end{equation}
Arranging the SME in Eq. \eqref{eq:gen13.4} gives Eq. \eqref{eq:sde1}.

Clearly, the stochastic variable $\chi_i(t)$ expresses that the $i$-th steering direction is obtained between $\omega(t)$ and $\omega(t+ \delta t)$, and $\dd N_i(t)$ is the Poissonian increment registering a jump (or the lack of it) with strength $\gamma \expval*{A_i^\dagger A_i}_t\dd t$ given that the latter direction appeared.

Taking the full average over all random variables in Eq.  \eqref{eq:gen13.4} gives
\begin{equation}\label{eq:ecu3}
    \dd \mathbb{E}[\omega(t)] = \sum_i  \gamma p(i) \mathcal{D}(A_i)\mathbb{E}[\omega(t)]\dd t,
\end{equation}
which corresponds to the LE 
\begin{equation}\label{eq:finalle}
    \partial_t \rho(t) = \sum_{i}\gamma p(i) \mathcal{D}(A_i)\rho(t)
\end{equation}
describing the fully averaged dynamics of the steered density matrix. Here, $\rho(t) \coloneqq \mathbb{E}[\omega(t)]$ is the trajectory-averaged density matrix.

\subsubsection{Average over the directions}
We will now demonstrate Eq.~\eqref{eq:ssde5}. For this, let $\pi_k = \mathbb{E}_i[\omega_k]$ be a prior state where $\mathbb{E}_i$ denotes the classical average over the steering directions.
Suppose the click result $\alpha$ was obtained, and we are \emph{not} interested in which steering direction was obtained. Therefore, we must pre-multiply a posterior state \eqref{eq:gen11} by the conditional probability of obtaining the $i$-th direction \emph{given} the click result $\alpha$ and the prior state $\pi_k$ were obtained, $p(i\vert \alpha; \pi_k)$. This gives the updated state
\begin{equation}\label{eq:gen14}
    \pi_{k + 1} = \sum_{i}p(i\vert \alpha; \pi_k)\frac{\mathcal{M}^{(i)}_{\alpha}\pi_k}{p(i,\alpha\vert \pi_k)}.
\end{equation}
By implementing Bayes rule
\begin{equation}\label{eq:gen15}
    p(i\vert \alpha;\pi_k)p(\alpha\vert \pi_k)= p(\alpha\vert i;\pi_k)p(i\vert \pi_k)
\end{equation}
the updated state becomes
\begin{equation}\label{eq:gen16}
    \pi_{k + 1} = \frac{1}{p(\alpha\vert \pi_k)} \sum_{i}\mathcal{M}^{(i)}_{\alpha}\pi_k.
\end{equation}

Similarly to the fully stochastic case [cf. Eq. \eqref{eq:gen12}]
the discrete SME describing the evolution of the posterior state is
\begin{equation}\label{eq:gen17}
    \pi_{k + 1} =\sum_{\alpha} \frac{1}{p(\alpha\vert \pi_k)}\left[\sum_{i}p(i){\Phi_{i,\alpha}(\pi_k)}\right]\mathds{1}_\alpha^{k+1},
\end{equation}
where
\begin{equation}\label{eq:gen18}
   \mathbb{E}_\alpha[\mathds{1}_\alpha^{k+1}] = p(\alpha\vert \pi_k)
\end{equation}
is the expectation value of the new indicator function $\mathds{1}_\alpha^{k+1}$. This function is indeed the conditional expectation value of the indicator function $\mathds{1}_{i,\alpha}^{k+1}$ given that the click result $\alpha$ was obtained. We denote this by [cf. Eqs.~\eqref{eq:gen13.3}-\eqref{eq:gen13.3-1}]
\begin{equation}\label{eq:gen19}
    \mathbb{E}_{i}[\mathds{1}_{i,\alpha}^{k+1}] \coloneqq p(i\vert \alpha;\pi_k)\mathds{1}_\alpha^{k+1}.
\end{equation}
More specifically, 
\begin{align}\label{eq:eqn20}
    \mathbb{E}_{i}[\mathds{1}_{i,\alpha=0}^{k+1}] &= \frac{p(\alpha=0 \vert i;\pi_k)p(i)}{\sum_n p(\alpha = 0\vert n;\pi_k)p(n)}\mathds{1}_{\alpha = 0}^{k+1} = \left(1 - \gamma \delta t\expval{A_i^\dagger A_i - \sum_np(n)A_n^\dagger A_n }_k \right)p(i)\mathds{1}_{\alpha = 0}^{k+1} + \mathcal{O}(\delta t^2), \\
    \mathbb{E}_{i}[\mathds{1}_{i,\alpha=1}^{k+1}] &= \frac{p(\alpha= 1 \vert i;\pi_k)p(i)}{\sum_n p(\alpha = 1\vert n;\pi_k)p(n)}\mathds{1}_{\alpha = 1}^{k+1} =  \frac{\expval*{A_i^\dagger A_i}_kp(i)}{\expval{\sum_n{p(n)A_n^\dagger A_n}}_k}\mathds{1}_{\alpha = 1}^{k+1}+ \mathcal{O}(\delta t^2).
\end{align}
Alternatively, we can use Eq. \eqref{eq:gen17} to derive continuous-time SME. To do so, we need the following expressions for the marginal click probabilities
\begin{align}\label{eq:eqn22.1}
    p(\alpha = 0 \vert \pi_k) &= \sum_np(\alpha = 0\vert n; \pi_k)p(n) = 1 - \gamma \delta t\expval{\sum_np(n) A_n^\dagger A_n}_k + \mathcal{O}(\delta t^2),\\
    p(\alpha = 1 \vert \pi_k) &= \sum_np(\alpha = 1\vert n; \pi_k)p(n) = \gamma \delta t\expval{\sum_np(n) A_n^\dagger A_n}_k + \mathcal{O}(\delta t^2).
\end{align}

Expanding the discrete SME gives us 
\begin{multline}\label{eq:eqn22}
    \pi_{k+1} = \sum_i\left(1 + \gamma \delta t\expval{\sum_np(n)A_n^\dagger A_n}_k\right)\left(\pi_k - \frac{\gamma \delta t}{2}\{A_i^\dagger A_i, \pi_k \}\right)p(i) \mathds{1}_{\alpha=0}^{k+1} 
    + \frac{\sum_i p(i) A_i\pi_kA_i^\dagger }{\expval{\sum_np(n)A_n^\dagger A_n}_k}\mathds{1}_{\alpha=1}^{k+1} + \mathcal{O}(\delta t^2).
\end{multline}
Now, since the probability of obtaining a click is proportional to $\delta t$ in leading order, we can set $\mathds{1}_{\alpha=0}^{k+1} \approx 1$ for all $k$. Moreover, in the WM limit, we have [cf. Eq.~\eqref{eq:poisson_full_1}] $ \mathds{1}_{\alpha=1}^{k+1}\rightarrow \dd N(t)$ with 
\begin{equation}\label{eq:eqn23}
    \mathbb{E}_{\alpha}[\dd N(t)] =  \gamma \expval{\sum_n p(n)A_n^\dagger A_n}_t \dd t.
\end{equation}
The SME is then
\begin{equation}\label{eq:eqn24}
\dd \pi(t) = \left(\expval{\sum_n \gamma p(n) A_n^\dagger A_n}_t\pi(t) - \sum_i \frac{\gamma}{2}\{p(i)A_i^\dagger A_i,\pi(t) \} \right)\dd t + \left( \frac{\sum_i p(i)A_i \pi(t)A_i^\dagger}{\expval{\sum_n p(n) A_n^\dagger A_n}_t} -\pi(t)\right)\dd N(t).
\end{equation}
This coincides with Eq. \eqref{eq:ssde5} after some rearrangement.
After taking the average over clicks, we have [cf. Eq. \eqref{eq:ecu3}]
\begin{equation}\label{eq:ecu5}
    \mathbb{E}_\alpha[\dd \pi(t)] = \dd \mathbb{E}_\alpha[\pi(t)] = \sum_i \gamma p(i) \mathcal{D}(A_i)\mathbb{E}_\alpha[\pi(t)]\dd t.
\end{equation}
The relation between the two types of density matrices is $\rho(t) = \mathbb{E}_\alpha[\pi(t)] =\left( \mathbb{E}_\alpha \circ \mathbb{E}_i\right)[\omega(t)]$.

\subsubsection{Stationary state of the deterministic map}\label{ssec:stationary_state_of_the_deterministic_map}
We demonstrate Eq.~\eqref{eq:sde6}, which is the stable fixed point of the deterministic part of Eq.~\eqref{eq:eqn24}.

When no jump occurs---either at all, i.e., $\dd N(t) \equiv 0$, or between $t$ and $t+\dd t$---and the steering directions are averaged, the evolution of the state is given by 
\begin{equation}\label{eq:ecu6}
    \partial_t \pi^{\text{det}}(t) = \expval{\sum_n \gamma p(n) A^\dagger_nA_n}_t\pi^{\text{det}}(t) - \sum_i \frac{\gamma}{2}\{p(i)A_i^\dagger A_i, \pi^{\text{det}}(t)\},
\end{equation}
where $\pi^{\text{det}}$ denotes a density matrix that evolves deterministically.
We aim to find its stationary state. To accomplish this, we make use of the ordered matrix ONB $F = (F_0,F_1,F_2,F_3) = (F_0, \bm{F}) \coloneqq (I_s, \bm{\sigma})/\sqrt{2}$ with respect to the inner product 
\begin{equation}\label{eq:ecu7}
    (F_\mu,F_\nu) \coloneqq \Tr(F_\mu^\dagger F_\nu) = \delta_{\mu \nu}.
\end{equation}
Greek indices run from 0 to 3, and Latin from 1 to 3. With respect to the above ONB, the density matrix will be denoted as $\pi = \sum_\mu{x_\mu}F_\mu$.

Given the above notation, we will use the identity \cite{gamel2016entangled}
\begin{equation}\label{eq:ecu9}
    F_\mu F_\nu = \frac{1}{\sqrt{2}}\sum_\gamma(\theta_{\mu\nu \gamma } + i\varepsilon_{\mu\nu\gamma})F_\gamma,
\end{equation}
where 
\begin{align}\label{eq:ecu10}
    \theta_{\mu\nu\gamma} \coloneqq \begin{cases}
    1 & \text{one index is 0, the other two equal}; \\
    0 &\text{otherwise},
    \end{cases}
\end{align}
is a fully symmetric tensor satisfying 
\begin{equation}\label{eq:ecu10.1}
    \theta_{\mu \nu a} = \delta_{\mu 0}\delta_{\nu a} + \delta_{\nu 0}\delta_{\mu a}, \quad \theta_{\mu \nu 0} = \delta_{\mu \nu} 
\end{equation}
and 
\begin{align}\label{eq:ecu11}
    \varepsilon_{\mu\nu\gamma} \coloneqq \begin{cases}
    1 &\mu\nu\gamma \in \{123,231,312 \}; \\
    -1 &\mu\nu\gamma \in \{132,213,321 \};\\
    0 &\text{repeated indices, or any index is 0},
    \end{cases}
\end{align}
is an extended Levi-Civita symbol.

Using the Bloch representation and setting Eq. \eqref{eq:ecu6} to zero, we get
\begin{align}\label{eq:ecu12}
  \partial_t\pi^{\text{det}}(t) =   \gamma\Tr[\sum_{a \in \mathcal{I}} p(a) A_a^\dagger A_a \sum_\mu x_\mu F_\mu]\sum_\lambda x_\lambda F_\lambda - \sum_{a \in \mathcal{I}}  \gamma p(a) \{A_a^\dagger A_a, \sum_\nu r_\nu F_\nu \} = 0.
\end{align}
We will send the negative part to the right-hand side (RHS), and we will treat it independently from the left-hand side (LHS) for the sake of order: 
\begin{subequations}
\begin{align}
    \text{LHS} &= \gamma \Tr[\sum_{a\in \mathcal{I}} p(a) A_a^\dagger A_a \sum_\mu x_\mu F_\mu]\sum_\lambda x_\lambda F_\lambda \label{eq:ecu12.1}\\
    &= \gamma \sum_\lambda \sigma_\lambda x_\lambda \sum_{a \in \mathcal{I},\mu}p(a)x_\mu \Tr[A_a^\dagger A_a \sigma_\mu]. \label{eq:ecu13}
\end{align}
\end{subequations}
Setting above the four-vector
\begin{equation}\label{eq:ecu13.1}
    B_\mu \coloneqq \sum_{a \in \mathcal{I}} p(a)\Tr[A_a^\dagger A_a F_\mu],
\end{equation}
gives
\begin{equation}\label{eq:ecu14}
    \text{LHS} = \gamma \sum_{\lambda\mu} x_\lambda x_\mu B_\mu,
\end{equation}
where LHS denotes the left-hand side.
Now, informally writing the RHS as $\text{RHS} = \sum_\mu \Tr[\text{RHS} \, F_\mu]F_\mu$, we have
\begin{subequations}
\begin{align}
    \text{RHS} &= \frac{\gamma}{2} \sum_\lambda F_\lambda \Tr[\sum_{a\in \mathcal{I} }p(a) \{A_a^\dagger A_a, \sum_\nu x_\nu F_\nu \}F_\lambda] \label{eq:ecu15}\\
    &= \frac{\gamma}{2} \sum_{a\in \mathcal{I}}\sum_{\lambda \nu}F_\lambda p(a) x_\nu \Tr[ A_a^\dagger A_a F_\nu F_\lambda + A^\dagger_a A_a F_\lambda F_\nu] \label{eq:ecu16} \\
    &= \frac{\gamma}{2\sqrt{2}}\sum_{a\in \mathcal{I}}\sum_{\lambda \nu}F_\lambda p(a) x_\nu\left( \theta_{\nu \lambda \delta} + i\epsilon_{\nu \lambda \delta} + \theta_{\lambda \nu \delta} + i\epsilon_{\lambda \nu \delta} \right) \Tr[A_a^\dagger A_a F_\delta]
\end{align}
\end{subequations}
Setting the LHS equal to the RHS and using the fact that the set $\{F_\mu \}_\mu$ is an orthonormal basis together with the identity Eq. \eqref{eq:ecu10.1}, we have for all $\mu$
\begin{equation}\label{eq:ecu19}
    x_\lambda\sum_{\mu}x_\mu B_\mu =  \sum_{\nu \delta }x_\nu \theta_{\nu \lambda \delta }B_\delta.
\end{equation}
Setting above $\lambda = 0$ and using the second identity of Eq. \eqref{eq:ecu10.1} gives the tautology $\sum_{\mu}x_\mu B_\mu = \sum_{\nu}x_\nu B_\nu$. 

Before continuing with Eq. \eqref{eq:ecu19}, if we use the first identity of Eq. \eqref{eq:ecu10.1} together with the fact that $A\cong \sigma^+$ leads to $\Tr[A_a^\dagger A_a] = \Tr[\sigma^-\sigma^+ R_aR_a^\dagger] = 1$, we get 
\begin{equation}\label{eq:ecu19.1}
    B_0 = \sum_{a \in \mathcal{I}}p(a) \Tr[A_a^\dagger A_a F_0] = \frac{1}{\sqrt{2}}\sum_{a \in \mathcal{I}}p(a)\Tr[A_a^\dagger A_a] = \frac{1}{\sqrt{2}},
\end{equation}
result that coincides with $x_0 = \Tr[\pi F_0].$ Above, recall that $R_a = R(\theta_a,\varphi_a) = \exp(-\frac{i}{2}\theta_a \sigma^z)\exp(-\frac{i}{2}\varphi_a \sigma^y)$ is the rotation operator corresponding to the $a$-th steering direction and $A_a = R_aAR_a^\dagger$.

Let us set $\lambda = a$ in Eq. \eqref{eq:ecu19}, to which we get
\begin{subequations}
\label{eqecu19.2}
\begin{align}
    x_a\sum_{\mu}x_\mu B_\mu &= \sum_{\nu \delta }\theta_{\nu \delta 0}B_\delta \\
    &= \sum_{\nu \delta}(\delta_{\nu 0}\delta_{a \delta} + \delta_{\delta 0}\delta_{a \nu})x_\nu B_\delta\\
    x_a \left(x_0 + \sum_{b}x_bB_b \right) &= x_0B_a + x_aB_0 \Rightarrow 
    x_a\sum_bx_b B_b = B_a.
\end{align}
\end{subequations}
This equation implies that the 3-vector $\bm{x} \coloneqq (x_1,x_2,x_3)$, which is proportional to the Bloch vector $\bm{r} = \Tr[\pi \bm{\sigma}] = \sqrt{2}\Tr[\pi \bm{F}]$ is proportional to the vector
\begin{equation}\label{eq:ecu19.3}
    \bm{B} = \Tr[\sum_{a\in \mathcal{I}}A_a^\dagger A_a \bm{F}] = \frac{1}{\sqrt{2}}\Tr[\sum_{a\in \mathcal{I}}A_a^\dagger A_a \bm{\sigma}].
\end{equation}
Hence, we rewrite Eq. \eqref{eqecu19.2} as 
\begin{equation}\label{eq:ecu19.4}
    (\bm{r}\cdot \bm{B})\bm{r} = \bm{B},
\end{equation}
where $\bm{r}\cdot \bm{B}$ denotes the euclidean inner product between $\bm{r}$ and $\bm{B}$. Hence, the Bloch vector is either parallel or anti-parallel to $\bm{A}$. We proceed to solve this issue.

To solve Eq. \eqref{eq:ecu19.4}, we first conclude that the stationary state of Eq. \eqref{eq:ecu6} is pure because this equation induces the non-linear Schrödinger equation
\begin{equation}\label{eq:ecu23.1}
    \dv{t} \ket{\psi(t)} =  -\frac{1}{2}\sum_{a \in \mathcal{I}} p(a) \gamma\left(A_a^\dagger A_a - \norm{A_a\psi(t)}^2 \right)\ket{\psi(t)}
\end{equation}
whenever the statistical operator $\pi(t)$ evolves deterministically starting from a pure state. In this case, $\pi(t) = \ket{\psi(t)}\bra{\psi(t)}$. Nonetheless, if the initial state is not pure, the relation between the pure states and the statistical operator is given by $\pi(t) = \mathbb{E}[\ket{\psi(t)}\bra{\psi(t)}]$ and Eq. \eqref{eq:ecu23.1} still holds. This observation about the state's purity allows us to set  $\norm{\bm{r}} = 1$ into Eq. \eqref{eq:ecu19.4} such that
\begin{equation}\label{eq:ecu24}
    r_a = \frac{B_a}{\norm*{\bm{B}}} \sec\alpha
\end{equation}
with $\sec \alpha \in \{1,-1\}$. We can determine the value of $\sec\alpha$ in two ways: we either calculate the divergence of the vector field defined in Eq. \eqref{eq:ecu6} or we only go back to the ideal protocol and set $A_a = A \cong \sigma^+$ and $p(a) = 1/n$ for all $a \in \mathcal{I}$, i.e., we set the target state to be $\bm{r} = (0,0,1)$. Hence, Eq. \eqref{eq:ecu24} becomes
\begin{equation}\label{eq:ecu25}
    r_a = \frac{ \Tr[\sum_{b \in \mathcal{I}}p(b)A_b^\dagger A_b\sigma_a]}{\norm{\Tr[\sum_{b \in \mathcal{I}}p(b)A_b^\dagger A_b\bm{\sigma}]}}\sec \alpha  =  \frac{\Tr[\ket{\downarrow}\bra{\downarrow}\sigma_a]}{\norm{\Tr[\ket{\downarrow}\bra{\downarrow}\bm{\sigma}]}} \sec \alpha =  \delta_{3,a}\sec \alpha .
\end{equation}
Thus, $\sec \alpha = -1$ and the Bloch vector associated to the \emph{stable} stationary state of Eq. \eqref{eq:ecu6} is given by
\begin{equation}
    \bm{r} = -\frac{\Tr[\sum_{b \in \mathcal{I}}p(b)A_b^\dagger A_b\bm{\sigma}]}{\norm{\Tr[\sum_{b \in \mathcal{I}}p(b)A_b^\dagger A_b\bm{\sigma}]}}.
\end{equation}

\subsubsection{Average over the clicks}
Let us now demonstrate Eq.~\eqref{eq:sde11}. By 
following similar steps as in the two previous discrete SMEs, suppose the $i$-th direction appeared and that we are not interested in the detector readout. The updated state is then [cf. Eq.~\eqref{eq:gen16}]
\begin{equation}\label{eq:gen25}
    \sigma_{k + 1} = \sum_{\alpha}p(\alpha\vert i; \sigma_k)\frac{\mathcal{M}_\alpha^{(i)}\sigma_k}{p(i,\alpha\vert \sigma_k)},
\end{equation}
where the quantum average $\mathbb{E}_\alpha$ over detector readouts is taken over the density matrix $\omega_k$.

The associated discrete SME describing all the possible steering outcomes is [cf. Eq.~\eqref{eq:gen12} and \eqref{eq:gen17}]
\begin{equation}\label{eq:gen26}
    \sigma_{k + 1} =\sum_{i} \frac{1}{p(i\vert \sigma_k)}\left( \sum_{\alpha}{\Phi_{i,\alpha}(\sigma_k)}\right)\mathds{1}_i^{k+1} = \sum_i \gamma \mathcal{D}(A_i)\sigma_k\delta t \mathds{1}_{i}^{k+1} + \mathcal{O}(\delta t^2).
\end{equation}
with 
\begin{equation}\label{eq:gen27}
    \mathbb{E}_i[\mathds{1}_i^{k+1}] = p(i\vert\sigma_k).
\end{equation}
The connection between the indicators is given by [Eq.~\eqref{eq:gen13.3}-\eqref{eq:gen13.3-1} and \eqref{eq:gen19}]
\begin{equation}\label{eq:gen28}
    \mathbb{E}_\alpha[\mathds{1}_{i,\alpha}^{k+1}] =p(\alpha\vert i;\sigma_k) \mathds{1}_{i}^{k+1}, 
\end{equation}
and so the SME is
\begin{equation}
\dd \sigma(t) = \sum_i \gamma \mathcal{D}(A_i)\sigma(t)\chi_i(t)\dd t
\end{equation}
with
\begin{equation}
    \mathbb{E}_i[\chi_i(t)] = p(i).
\end{equation}
Once more, we obtain the following LE by performing the average with respect to the directions,
\begin{equation}
    \dd \mathbb{E}_i[\sigma(t)] = \dd\rho(t) = \sum_i p(i)\gamma \mathcal{D}(A_i)\rho(t)\dd t,
\end{equation}
where $\mathbb{E}_i[\sigma(t)] = \left( \mathbb{E}_i\circ \mathbb{E}_\alpha\right)[\omega(t)] = \rho(t)$.

\section{Equivalence between two formalisms}\label{sec:several_white_noises}
In this section, we show how the SME~\eqref{eq:ssde5} can be obtained from a different model of repeated, random interactions.

Consider the detector-system interaction Hamiltonian in $\mathcal{H}_d\otimes \mathcal{H}_s$
\begin{equation}\label{eq:u1}
    H_\text{ds}(t\vert \{\xi_i \}) = \sum_{i=1}^N \sqrt{\gamma p(i)} \xi_i(t)h_0^{(i)},
\end{equation}
where
\begin{align}\label{eq:u2}
    h_0^{(i)} = \ket*{\Phi_d^\perp}\bra{\Phi_d}\otimes A_i + \text{h.c.},
\end{align}
is an operator corresponding to the steering toward the \emph{rotated} target state $\ket*{\Psi_\oplus^{(i)}} = R(\theta_i, \varphi_i)\ket{\Psi_\oplus}$, i.e., $A_i\ket*{\Psi_\oplus^{(i)}} = 0$, and
\begin{equation}\label{eq:u3}
    \mathbb{E}[\xi_i(t)] = 0, \quad \mathbb{E}[\xi_i(t)\xi_j(s)] = \delta_{ij}\delta(t-s), \quad \forall i,j = 1,2,3,
\end{equation}
are delta-correlated white noises.
When using the unitary operator generated by Eq.~\eqref{eq:u1}, we will use the It\^o formalism with the table
\begin{equation}\label{eq:u4}
\dd X_i \dd X_j(t) = \delta_{ij}\dd t, \quad (\dd t)^2 = 0, \quad \dd X_i(t) \dd t = 0, \quad \forall i,j = 1,2,3,
\end{equation}
where $\dd X_i = \xi_i(t)\dd t$ is a Wiener increment.

Let us then consider the following unitary operator using the above rules
\begin{equation}\label{eq:u5}
    U(\dd t) = \exp(-i\sum_{i=1}^n \sqrt{\gamma p(i)}h_0^{(i)}\dd X_i) = I -i\sum_{i=1}^n \sqrt{\gamma p(i)}h_0^{(i)}\dd X_i - \frac{1}{2}\sum_{i=1}^n\gamma p(i) \left(h_0^{(i)}\right)^2\dd t.
\end{equation}
As the initial density operator is $\rho_d\otimes \rho(t)$, the associated measurement operators are
\begin{align}
    M_0(\dd t) &= I - \sum_{i=1}^n \frac{\gamma p(i)}{2}A_i^\dagger A_i \dd t \quad \text{and} \quad 
    M_1(\dd t) = - i\sum_{i=1}^n\sqrt{\gamma p(i)}A_i \dd X_i.  \label{eq:u7}
\end{align}
Thus, the two possible operations over $\rho(t)$ upon a detector readout are given by
\begin{align}\label{eq:u8}
    \rho_{0}(t+\dd t ) &= \frac{M_0(\dd t) \rho(t) M_0^\dagger (\dd t )}{\expval{M_0^\dagger(\dd t) M_0(\dd t)}_t}  = \rho(t) - \sum_{i=1}^n \frac{\gamma p(i)}{2}\{A_i^\dagger A_i - \expval*{A_i^\dagger A_i}_t, \rho(t) \}\dd t, \\
    \rho_{1}(t+\dd t ) &= \frac{M_1(\dd t) \rho(t) M_1^\dagger (\dd t )}{\expval{M_1^\dagger(\dd t) M_1(\dd t)}_t} = \frac{\sum_{i=1}^n p(i) A_i\rho(t) A_i^\dagger}{\expval*{\sum_{j=1}^np(j)A_j^\dagger A_j}_t}.
\end{align}
The probabilities of obtaining these two states are
\begin{align}\label{eq:u9}
    P(\alpha = 0) &= 1 - \gamma\expval{\sum_{i=1}^n  p(i) A_i^\dagger A_i}_t\dd t, \quad \text{and} \quad 
    P(\alpha = 1) = \gamma\expval{\sum_{i=1}^n p(i) A_i^\dagger A_i}_t\dd t.
    \end{align}
Note that the outcome $\rho_1(t+\dd t)$ is Poissionian.  We are thus set to write the SME governing the dynamics of $\rho(t):$
\begin{equation}
    \dd \rho(t) = \sum_{i=1}^n \gamma \mathcal{D}(A_i)\rho(t)\dd t + \left( \frac{\sum_{i=1}^n p(i) A_i\rho(t) A_i^\dagger}{\expval*{\sum_{j=1}^np(j)A_j^\dagger A_j}_t} - \rho(t)\right)\left( \dd N(t) - \gamma \expval{\sum_{i=1}^n p(i)A_i^\dagger A_i}_t\dd t \right).
\end{equation}
This equation has the same form as Eq. \eqref{eq:ssde5}, yet it was obtained from a different dynamic error.

\section{Two continuous steering directions}\label{sec:two_continuous_steering_directions}
This section aims to show that the protocol is also robust against two continuous steering directions: the continuous distribution and the von Mises distribution  \cite{gatto2007generalized} and compare the results to those obtained in Sec.~\ref{ssec:stationary_state_of_the_fully_averaged}.

The set of steering states (or directions) is now denoted by
\begin{equation}\label{eq:cont_dist}
\mathcal{R} \coloneqq \{ (\omega(\theta); p(\theta,\lambda) \},
\end{equation}
where $p(\theta,\lambda)$ is a probability distribution over the circle characterized by the parameters $\lambda$ [cf. Eq.~\eqref{eq:dir}]. The corresponding LE must be integrated with respect to the probability distribution [cf. Eq.~\eqref{eq:field0}], i.e., 
\begin{equation}\label{eq:cont_dist1}
    \partial_t\rho_s(t) = \int \dd \theta p(\theta,\lambda) \mathcal{D}\textbf{(}A(\theta)\textbf{)}\rho_s(t).
\end{equation}

With the above definitions at hand, let us study the continuous distribution between defined in $[-\tilde{\theta},\tilde{\theta}]$
\begin{equation}\label{eq:cont_dist2}
    p(\theta,\tilde{\theta})\coloneqq \frac{1}{2\tilde{\theta}}\left[ \Theta(\theta + \tilde{\theta})- \Theta(\theta -\tilde{\theta}) \right],
\end{equation}
where $\Theta$ is the Heaviside step function Eq.~\eqref{eq:op1.20},
and the von Mises distribution \cite{gatto2007generalized}
\begin{equation}\label{eq:cont_dist3}
    p(\theta,\sigma) \coloneqq \frac{\exp(\sigma^{-2}\cos\theta  )}{2\pi I_0(\sigma^{-2})},
\end{equation}
where $\sigma^2$ is the variance, and $I_\nu$ is the modified Bessel function of order $\nu$.

The relevant entries of the stationary states corresponding to each distribution are
\begin{align}
[\tilde\rho_{\oplus}(\tilde{\theta})]_{11} &= \frac{1}{2} + \frac{4\sin\tilde{\theta}}{6\tilde{\theta} + \sin2\tilde{\theta}},\quad  [\tilde\rho_{\oplus}(\tilde{\theta})]_{12} = 0;\\
\left[\tilde\rho_{\oplus}(\sigma)\right]_{11} &= \frac{1}{2} + \frac{I_1(\sigma^{-2})}{I_0(\sigma^{-2}) - \theta^2I_1(\sigma^{-2})}, \quad \left[\tilde\rho_{\oplus}(\sigma)\right]_{12} = 0.  
\end{align}
The fidelities concerning the above two states with the ideal target state $\rho_\oplus = \ket{\uparrow}\bra{\uparrow}$ are
\begin{align}
    F_{\infty}(\tilde{\theta}) &= \frac{1}{2} + \frac{4\sin\tilde{\theta}}{6\tilde{\theta} + \sin2\tilde{\theta}} \approx 1- \frac{\tilde{\theta}^4}{80}, \label{eq:ui1}\\
    F_{\infty}(\sigma) &= \frac{1}{2} + \left( \frac{2I_0(\sigma^{-2})}{I_1(\sigma^{-2})} - \sigma^2 \right)^{-1} \approx 1-  \frac{3}{16}\sigma^4.\label{eq:ui2} 
\end{align}
In the small error approximation, i.e., for $\sigma, \tilde{\theta} \ll 1$, we see once more that the leading terms are of order four as in Eq.~\eqref{eq:td2}, $F_\infty(p=1/2,\theta) \approx 1 - \theta^4/16$.  Notably, the coefficients multiplying the small parameters are much smaller for the two continuous distributions than for the discrete, symmetric distribution. Additionally, the continuously distributed steering directions present the fidelity closest to unity, which indicates a higher degree of robustness.

\section{Commutation of the expectation value with the time-ordering operator and partial trace.}\label{sec:commutation_expectation}
As the time ordering operator and the expectation are linear operators acting on different mathematical objects, they automatically commute---This is what we used in Sec.~\ref{ssec:steering_hamiltonian}. However, proving this is involved, as we now show.
\begin{theorem}\label{th:teorema1}
For a multiplicative white noise, the time-ordering operator $\overrightarrow{\mathcal{T}}$ commutes with the expectation value with respect to the white noise $\mathbb{E}$.
\end{theorem}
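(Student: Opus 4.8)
The plan is to work with the Dyson (time-ordered Neumann) series of the time-ordered exponential and to show that the expectation $\mathbb{E}$ can be passed through $\overrightarrow{\mathcal{T}}$ term by term, after which the Gaussian average is evaluated explicitly. Writing the noisy generator in the interaction picture as $G(s\vert\xi) = -i c\,\xi(s)\,\ad(\hat{\tilde{h}}(s))$ with $c \coloneqq \sqrt{\eta}\,\Upsilon$, I would first expand
\[
\overrightarrow{\mathcal{T}}\exp\Bigl(\int_0^t G(s\vert\xi)\,ds\Bigr) = \sum_{n=0}^\infty \int_0^t\!ds_1\int_0^{s_1}\!ds_2\cdots\int_0^{s_{n-1}}\!ds_n\; G(s_1\vert\xi)\cdots G(s_n\vert\xi),
\]
so that the time-ordering is encoded purely as the restriction to the ordered simplex $s_1>\cdots>s_n$ with the operator factors written in decreasing-time order. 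On each term the only stochastic objects are the scalar amplitudes $\xi(s_k)$, while the operator factors $\ad(\hat{\tilde{h}}(s_k))$ are deterministic; since $\mathbb{E}$ acts only on the former and $\overrightarrow{\mathcal{T}}$ fixes the ordering of the latter, the two operations act on disjoint degrees of freedom and commute on every finite term. This is the precise content of the informal remark that they ``act on different mathematical objects.''

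The delicate step---and the part I expect to be the main obstacle---is to justify interchanging $\mathbb{E}$ with the infinite sum and the nested time integrals, since the white-noise correlators are distributions supported on the boundary of the ordered simplex. I would proceed via Wick's (Isserlis') theorem: $\mathbb{E}[\xi(s_1)\cdots\xi(s_n)]$ vanishes for odd $n$ and, for $n=2m$, equals the sum over pairings of products $\prod\delta(s_i-s_j)$. Under strict ordering a contraction $\delta(s_i-s_j)$ forces $s_i=s_j$, which is compatible with $s_1>\cdots>s_n$ only when the two contracted times are adjacent; any non-adjacent pairing squeezes an intermediate time across a strict inequality and yields a lower-dimensional, measure-zero contribution. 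Hence only the fully nested consecutive pairing $\{(1,2),(3,4),\dots\}$ survives, and integrating each boundary delta over the ordered region produces the familiar factor $\tfrac12$. Assembling the surviving terms collapses the order-$2m$ contribution to $(\tfrac12)^m$ times the $m$-fold ordered integral of $\ad^2(\hat{\tilde{h}})$, reproducing
\[
\mathbb{E}\Bigl[\overrightarrow{\mathcal{T}}\exp\Bigl(\int_0^t G(s\vert\xi)\,ds\Bigr)\Bigr] = \overrightarrow{\mathcal{T}}\exp\Bigl(-\frac{\Upsilon^2\eta}{2}\int_0^t \ad^2(\hat{\tilde{h}}(s))\,ds\Bigr),
\]
i.e. the averaged object is itself a time-ordered exponential, now of the quadratic (dissipative) generator, which is exactly Eq.~\eqref{eq:c10}. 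The careful bookkeeping of the half-delta factors and the proof that non-consecutive contractions drop out is where the rigor must be spent; to keep the interchange of limits legitimate I would assume $\ad(\hat{\tilde{h}}(s))$ is uniformly bounded on $[0,t]$ (automatic for the finite-dimensional detector-plus-qubit space considered here), ensuring absolute convergence of the series and the applicability of dominated convergence when commuting $\mathbb{E}$ with the sum.

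As an independent cross-check and a cleaner alternative route, I would reinterpret $W(t\vert\xi)$ as the solution of the linear Stratonovich equation $dW = -i c\,\ad(\hat{\tilde{h}}(t))\,W\circ dX_t$ with $dX_t=\xi(t)\,dt$. Converting to It\^o adds the correction $-\tfrac12 c^2\,\ad^2(\hat{\tilde{h}}(t))\,W\,dt$, and since the It\^o integral has zero mean while the coefficient is deterministic, taking $\mathbb{E}$ gives the closed linear ODE $\partial_t\bar{W} = -\tfrac12 c^2\,\ad^2(\hat{\tilde{h}}(t))\,\bar{W}$ with $\bar{W}(0)=I_{ds}$, whose unique solution is precisely the time-ordered exponential displayed above. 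This confirms the commutation statement and exhibits the $\tfrac12$ transparently as the It\^o correction, so I would present it alongside the Dyson-series argument as mutually reinforcing derivations. The analogous reasoning for the commutation of $\mathbb{E}$ with the partial trace over detector outcomes is immediate, since $\Tr_d$ is a deterministic linear map acting on the operator factors and again touches none of the noise variables.
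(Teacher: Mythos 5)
Your main derivation is essentially the paper's own proof of Theorem~\ref{th:teorema1}: expand the time-ordered exponential as a Dyson series, apply Wick's theorem to $\mathbb{E}[\xi(s_1)\cdots\xi(s_n)]$, observe that only the consecutive pairings survive on the ordered simplex, collect the boundary factor $1/2$ from each contraction, and resum to $\overrightarrow{\mathcal{T}}\exp\bigl(-\tfrac{1}{2}\int_0^t \ad^2(\hat{\tilde{h}}(s))\,\dd s\bigr)$. If anything, your justification that non-adjacent pairings drop out (a delta forcing two non-neighbouring ordered times to coincide confines an intermediate variable to a measure-zero set) is cleaner than the paper's appeal to products of Heaviside functions with the $\Theta(0)=1/2$ convention, and your explicit attention to absolute convergence and dominated convergence is a point the paper glosses over. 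The genuinely different ingredient is your second route: recasting $W(t\vert\xi)$ as the solution of a linear Stratonovich equation, converting to It\^o, and using the martingale property of the It\^o integral together with the deterministic coefficient to close the ODE for $\mathbb{E}[W]$. The paper does not do this; instead it independently evaluates $\overrightarrow{\mathcal{T}}\,\mathbb{E}[\exp(-i\int_0^t f(s)\xi(s)\dd s)]$ via the Gaussian moment identity and matches the two sides, which is the step that literally exhibits the claimed commutation. Your SDE argument buys a transparent origin for the factor $1/2$ as the It\^o correction and sidesteps the combinatorics entirely, but to fully establish the theorem as stated you should still add the one-line Gaussian evaluation of $\overrightarrow{\mathcal{T}}\,\mathbb{E}[\cdot]$ so that both orderings of the two operations are shown to coincide, rather than only identifying $\mathbb{E}[\overrightarrow{\mathcal{T}}(\cdot)]$ with the quadratic time-ordered exponential.
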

\begin{proof}
In what follows, every integral must be understood in the Stratonovich form \cite{kuo2006stochastic}. Let $f(s_i) \coloneqq \ad\textbf{(}\hat{H}(s_i)\textbf{)}$ for $i \in \mathcal{I}_n\coloneqq \{1,\ldots, n \}$ with $n \in 2\mathbb{Z}$. $\hat{H}(s)$ is some time-dependent Hamiltonian. Let $\mathcal{P}(\mathcal{I}_n)$ be the set of all partitions on $\mathcal{I}_n$. The expectation value of the product of an \textit{even} number of white noise variables is
\begin{equation}\label{eq:op1.1}
\mathbb{E}\left[\prod_{i=1}^n\xi(s_i)\right] = \sum_{\pi \in \mathcal{P}(\mathcal{I}_n)}\prod_{\{i,j \} \in \pi}\delta(s_i- s_j) \quad n \in 2\mathbb{Z}^+.
    \end{equation}
The product of an odd number is zero. Moreover, we will make use of the Heaviside function adapted for the Stratonovich integral
\begin{equation}\label{eq:op1.20}
    \Theta(t) \coloneqq \begin{cases}
    1 & t > 0, \\
    1/2 & t = 0, \\
    0 & t < 0.
    \end{cases}
\end{equation}
Let us treat the adjoint version of the unitary operator
$\hat{U}(t\vert \xi) \coloneqq \overrightarrow{\mathcal{T}}\exp[-i\int_0^t \hat{H}(s)\xi(s) \dd s]$
given by
$\mathcal{U}(t\vert \xi) \coloneqq \overrightarrow{\mathcal{T}}\exp[-i\int_0^t f(s)\xi(s) \dd s]$, and take its expectation:
\begin{subequations}
\begin{align}
  \hat{\mathcal{U}}(t\vert \xi) &\coloneqq   \mathbb{E}\left[  \overrightarrow{\mathcal{T}}\exp(-i\int_0^t f(s)\xi(s)\dd s)   \right]  \label{eq:op1.2}\\
  &= \mathbb{E}\left[ \sum_{n=0}^\infty (-i)^n \int_0^t \dd s_1 f(s_1)\xi(s_1)\dotsi \int_0^{s_{n-1}}\dd s_n f(s_n)\xi(s_n) \right] \label{eq:op1.3} \\
  &= \sum_{n=0}^\infty (-i)^n \int_{\mathbb{R}^n}\prod_{i=1}^n\dd s_i \Theta(s_i)f(s_i)\Theta(t-s_1)\prod_{l=1}^ {n-1}\Theta(s_l - s_{l+1}) \mathbb{E}\left[\prod_{i=1}^n\xi(s_i)\right] \label{eq:op1.4}\\
  &= \sum_{k = 0}^\infty(-1)^k \sum_{\pi \in \mathcal{P}(\mathcal{I}_n)} \int_{\mathbb{R}^{2k}}\prod_{i=1}^{2k}\dd s_i \Theta(s_i)f(s_i)\Theta(t-s_1)\prod_{l=1}^{2k-1}\Theta(s_l - s_{l+1})\prod_{\{i,j \} \in \pi }\delta(s_i - s_j) \label{eq:op1.5} \\
  &= \sum_{ k = 0}^\infty (-1)^k\int_{\mathbb{R}^k}\prod_{i = 1}^k\dd s_i \Theta(s_i)f^2(s_i) \Theta(t - s_1)\Theta^k(0)  \label{eq:op1.6}\\
  &= \sum_{k=0}^\infty \frac{(-1)^k}{2^k}\int_0^t \dd s_1 f^2(s_1) \dotsi \int_{0}^{s_{k-1}}\dd s_kf^2(s_k) \label{eq:op1.7}\\
  &= \overrightarrow{\mathcal{T}}\exp(-\frac{1}{2}\int_0^t \dd s f^2(s))\label{eq:op1.70} \\
  &= \overrightarrow{\mathcal{T}}\exp[-\frac{1}{2}\int_0^t \ad^2\textbf{(}\hat{H}_1(s)\textbf{)}\dd s].
  \end{align}
  \end{subequations}
From Eq.~\eqref{eq:op1.3} to \eqref{eq:op1.4}, we used the Heaviside function to rewrite the integrals and thus be able to integrate the Dirac delta; from Eq.~\eqref{eq:op1.4} to \eqref{eq:op1.5} we used the fact that the expectation value over the product of the white noise variables is only non-zero for an even number of them [see Eq.~\eqref{eq:op1.1}]; from Eq. \eqref{eq:op1.5} to \eqref{eq:op1.7}, we used the fact that only partitions having pairs of the form $ \{1, 2 \}, \ldots \{i, i+1 \}, \ldots, \{k-1, k \}$ make the integrand different than zero because Heaviside functions of the form $\Theta(s_i- s_j)\Theta(s_j-s_i)$  are zero for $i\neq j$.

Let us now take the following turn by evaluating the following superoperator
\begin{subequations}
\begin{align}
   \hat{\tilde{\mathcal{U}}}(t\vert \xi) &\coloneqq   \overrightarrow{\mathcal{T}}\mathbb{E}\left[\exp(-i\int_0^t f(s)\xi(s)\dd s)\right] \label{eq:op1.8}\\
    &= \overrightarrow{\mathcal{T}} \exp( -\frac{1}{2}\mathbb{E}\left[ \int_0^tf(s)\xi(s)\dd s \, \int_0^t f(u)\xi(u)\dd u \right]) \label{eq:op1.9} \\
    &= \overrightarrow{\mathcal{T}} \exp( -\frac{1}{2}\int_0^tf^2(s)\dd s ). \label{eq:op1.10}
\end{align}
\end{subequations}
From Eq. \eqref{eq:op1.8} to Eq. \eqref{eq:op1.9}, we applied the well-known identity for Gaussian stochastic integrals \cite{kuo2006stochastic}. From the above equation, we conclude that $\hat{\mathcal{U}} = \hat{\tilde{\mathcal{U}}}$, (cf. Eq. \eqref{eq:op1.70}), which in turn, implies that $\overrightarrow{\mathcal{T}}\,  \mathbb{E} = \mathbb{E} \,\overrightarrow{\mathcal{T}}$.
\end{proof}

\section{The partial trace commutes with the expectation value}\label{sec:lastproof}
Let $\rho_{\text{ds}}(0) = \rho_d\otimes \rho_s(0)$. Let us consider $\hat{\zeta}_{\text{ds}}(t\vert\xi) \coloneqq \hat{U}(t \vert \xi)\rho_{ds}(0)\hat{U}^\dagger(t\vert \xi),$ where 
\begin{equation}
     \hat{U}(t\vert \xi) \coloneqq \overrightarrow{\mathcal{T}}\exp[-i\int_0^t\hat{h}_1(s) \dd s]
\end{equation}
is the time-evolution operator in the interaction picture with respect to the perturbed detector--system Hamiltonian $H(t) = Jh_0 + \xi(t)h_1$, and  $\xi(t)$ is the delta-correlated white noise. Let $\hat{\rho}_s(t\vert\xi)\coloneqq \Tr_d\hat{\zeta}_{\text{ds}}(t\vert\xi)$, $\bar{\hat{\rho}}_{\text{ds}}(t) \coloneqq \mathbb{E}[\hat{\zeta}_{\text{ds}}(t\vert\xi)]$, and $\bar{\hat{\rho}}_s(t) \coloneqq \mathbb{E}[\hat{\rho}_s(t\vert\xi)]$. We will show that $\bar{\hat{\rho}}_s(t) = \Tr_d\bar{\hat{\rho}}(t)$ (i.e,   $\Tr_d\circ \mathbb{E} = \mathbb{E}\circ \Tr_d$). In other words, we will show that  we can either take the average over realizations of the stochastic noise up to time $t$ and then perform a blind measurement or that we can perform a blind measurement at time $t$ over the \textit{same} stochastic trajectory and \textit{then} take the average over the realizations of the stochastic variable. 
\begin{theorem}\label{th:teoremazo}
For a multiplicative white noise, the partial trace commutes with the expectation value with respect to the white noise $\mathbb{E}$.
\end{theorem}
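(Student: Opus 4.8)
The plan is to exploit the fact that the partial trace $\Tr_d$ is a fixed linear map on detector-system operators that does not depend on the noise realization $\xi$, whereas the expectation $\mathbb{E}$ is linear and acts only on the classical noise degrees of freedom. Since the two operations act on disjoint structures---$\Tr_d$ on the detector tensor factor and $\mathbb{E}$ on the $c$-number noise---their commutation should follow from linearity together with the finite dimensionality of $\mathcal{H}_d$. Concretely, I would first establish the identity at the level of a fixed detector basis and then promote it to the full partial trace.

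Let $\{\ket{k}\}$ be an orthonormal basis of the (finite-dimensional) detector space $\mathcal{H}_d$, so that $\Tr_d X = \sum_k \bra{k}X\ket{k}$ for any detector-system operator $X$. For each fixed $k$ the map $X \mapsto \bra{k}X\ket{k}$ is a deterministic (noise-independent) linear functional, so linearity of the expectation gives
\begin{equation*}
\mathbb{E}\big[\bra{k}\hat{\zeta}_{\text{ds}}(t\vert\xi)\ket{k}\big] = \bra{k}\mathbb{E}\big[\hat{\zeta}_{\text{ds}}(t\vert\xi)\big]\ket{k} = \bra{k}\bar{\hat{\rho}}_{\text{ds}}(t)\ket{k}.
\end{equation*}
Because the detector space is finite dimensional, the sum over $k$ is a finite linear combination and therefore commutes with $\mathbb{E}$, whence
\begin{equation*}
\bar{\hat{\rho}}_s(t) = \mathbb{E}\Big[\sum_k \bra{k}\hat{\zeta}_{\text{ds}}\ket{k}\Big] = \sum_k \mathbb{E}\big[\bra{k}\hat{\zeta}_{\text{ds}}\ket{k}\big] = \sum_k \bra{k}\bar{\hat{\rho}}_{\text{ds}}(t)\ket{k} = \Tr_d\bar{\hat{\rho}}_{\text{ds}}(t),
\end{equation*}
which is exactly $\Tr_d\circ\mathbb{E} = \mathbb{E}\circ\Tr_d$.

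To make transparent the assertion that $\Tr_d$ and $\mathbb{E}$ act on genuinely different objects---and to parallel the proof of Theorem~\ref{th:teorema1}---I would also expand $\hat{\zeta}_{\text{ds}}(t\vert\xi)$ in a Dyson series. Writing the conjugation as the time-ordered exponential of the adjoint action, $\hat{\zeta}_{\text{ds}}(t\vert\xi) = \overrightarrow{\mathcal{T}}\exp[-i\int_0^t \ad(\hat{h}_1(s))\,\dd s]\,\rho_{\text{ds}}(0)$ with $\hat{h}_1(s)=\xi(s)\,\hat{h}(s)$ and $\hat{h}(s)$ deterministic, each order carries the noise only through the scalar factors $\xi(s_1)\cdots\xi(s_n)$. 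These $c$-numbers pull through $\Tr_d$ unchanged, so in either order of operation the $n$-th term equals $(-i)^n\int \mathbb{E}[\xi(s_1)\cdots\xi(s_n)]\,\Tr_d[\ad(\hat{h}(s_1))\cdots\ad(\hat{h}(s_n))\rho_{\text{ds}}(0)]$, where the moment formula \eqref{eq:op1.1} supplies the surviving (even-order) contributions. Equality term by term then reproduces the result.

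The main technical point---as in Theorem~\ref{th:teorema1}---is the justification of the interchanges: pulling the expectation inside the partial-trace sum and inside the Dyson series. The partial-trace interchange is elementary precisely because $\mathcal{H}_d$ is finite dimensional, so $\Tr_d$ is a finite linear combination of matrix elements; the only genuine analytic content is the exchange of $\mathbb{E}$ with the infinite Dyson sum, which requires a dominated-convergence / Fubini argument controlling the even moments of the white noise, exactly the bookkeeping already carried out in the proof of Theorem~\ref{th:teorema1}. I expect this convergence estimate, rather than the algebra, to be the only real obstacle.
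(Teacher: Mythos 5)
Your proof is correct, and its first half takes a genuinely more elementary route than the paper. The paper proves this theorem by redoing the full Dyson-series computation of Theorem~\ref{th:teorema1}: it introduces the shorthand $f^{(n_1)}(s_1)\cdots f^{(n_i)}(s_i) \coloneqq \Tr_d[\ad^{(n_1)}(\hat{h}(s_1))\cdots\ad^{(n_i)}(\hat{h}(s_i))\rho_{\text{ds}}(0)]$, applies the Wick pairing formula term by term, resums, and only then recognizes the result as $\Tr_d\bar{\hat{\rho}}_{\text{ds}}(t)$ --- so your second (Dyson-series) argument is essentially the paper's proof. Your first argument, however, bypasses all of that: since $\Tr_d X=\sum_k\bra{k}X\ket{k}$ is a finite, noise-independent linear combination of matrix elements and the operator-valued expectation is defined entrywise, the commutation $\Tr_d\circ\mathbb{E}=\mathbb{E}\circ\Tr_d$ holds for \emph{any} integrable random operator, with no reference to Gaussianity, time ordering, or convergence of the series. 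This is cleaner and more general; what it does not buy you is the explicit closed form $\Tr_d[\overrightarrow{\mathcal{T}}\exp(-\tfrac{1}{2}\int_0^t\ad^2(\hat{h}(s))\dd s)\rho_{\text{ds}}(0)]$ for the averaged reduced state, which the paper's computation produces as a byproduct and which is the object actually used in Sec.~\ref{ssec:steering_hamiltonian}. Your remark that the only genuine analytic content is the interchange of $\mathbb{E}$ with the infinite Dyson sum is well taken --- the paper performs that interchange without a dominated-convergence justification in both theorems --- but note that under your first argument even this issue disappears, since no series expansion is needed at all.
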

     \begin{proof}
     Let us denote
     \begin{equation}\label{eq:rco}
         f^{(n_1)}(s_1)f^{(n_{2})}(s_2)\dotsi f^{(n_{i})}(s_i) \coloneqq \Tr_d\left\lbrace \ad^{(n_{1})}\textbf{(}\hat{h}(s_1)\textbf{)}\ad^{(n_{2})}\textbf{(}\hat{h}(s_{2})\textbf{)}\dotsi \ad^{(n_i)}\textbf{(}\hat{h}(s_i)\textbf{)}\,\rho_{\text{ds}}(0) \right\rbrace,
     \end{equation}
     where $n_i \in \mathbb{N}$ for all $i\in \mathcal{I}_n \coloneqq \{1,\ldots, n \}$. Consider
     \begin{subequations}
     \begin{align}
         \bar{\hat{\rho}}_s(t) &\coloneqq  \mathbb{E}\left[ \Tr_d \hat{\zeta}_{ds}(t\vert\xi) \right]  \\
         &= \mathbb{E}\left[ \Tr_d\sum_{n=0}^\infty (-i)^n \int_0^t \dd s_1 \ad\textbf{(}\hat{h}(s_1)\textbf{)}\xi(s_1)\dotsi \int_0^{s_{n-1}}\dd s_n \ad\textbf{(}\hat{h}(s_n)\textbf{)}\xi(s_n) \rho_{\text{ds}}(0)\right]\\
         &= \mathbb{E}\left[ \sum_{n=0}^\infty (-i)^n \int_0^t \dd s_1 f(s_1)\xi(s_1)\dotsi \int_0^{s_{n-1}}\dd s_n f(s_n)\xi(s_n) \right].
     \end{align}
     \end{subequations}
     Following the same steps as in the proof of Theorem \ref{th:teorema1} and recalling Eq. (\ref{eq:rco}), we arrive at 
     \begin{subequations}
     \begin{align}
          \bar{\hat{\rho}}_s(t) &=  \sum_{k=0}^\infty \frac{(-1)^k}{2^k}\int_0^t \dd s_1 \dotsi \int_{0}^{s_{k-1}}\dd s_kf^2(s_1) \dotsi f^2(s_k)\rho_s(0)\\
          &= \sum_{k=0}^\infty \frac{(-1)^k}{2^k}\int_0^t \dd s_1  \dotsi \int_{0}^{s_{k-1}}\dd s_k \Tr_d\left[\ad^2\textbf{(}\hat{h}(s_1)\textbf{)}\dotsi \ad^2\textbf{(}\hat{h}(s_k)\textbf{)}\rho_{\text{ds}}(0)\right] \\
          &= \Tr_d\left[\overrightarrow{\mathcal{T}}\exp(-\frac{1}{2}\int_0^t \ad^2\textbf{(}\hat{h}(s)\textbf{)}\dd s)\rho(0)\right]  \\
          &= \Tr_d \bar{\hat{\rho}}_{\text{ds}}(t).
     \end{align}
     \end{subequations}
     The last equation implies that $\mathbb{E} \circ \Tr_d = \Tr_d \circ \mathbb{E}$.
     \end{proof}
     
\section{Lindblad equation: errors in the steering Hamiltonian}\label{sec:LE_errors_hamitlonian}
In this appendix, we obtain the LE~\eqref{eq:c13}, which describes the full averaged dynamics of a steered system obtained from the direct averaging unraveling. 

Upon averaging over the white noise, the LE describing the evolution of the detector-system density matrix is 
\begin{equation}\label{eq:apc11}
\partial_t\rho_{\text{ds}}(t) = \left( -iJ\ad(h_0) - \frac{\tilde{\gamma}}{2}\ad^2( \tilde{h}_{\text{ds}})  \right)\rho_{\text{ds}}(t) = \mathcal{L}\rho_{\text{ds}}(t).
\end{equation}
Hence, the formal solution of this equation is $\rho_{\text{ds}}(t) = \exp(\mathcal{L}t)\rho_{\text{ds}}(0)$, where $\rho_{\text{ds}}(0) = \rho_d\otimes \rho_s(t)$, and the state of the steered system is obtained by just performing the partial trace, i.e., $\rho_s(t) = \Tr_d\left[\exp(\mathcal{L}t)\rho_d\otimes \rho_s(0) \right]$. After another interaction with a new detector, the updated state is given by 
\begin{subequations}
\label{eqyy10}
\begin{align}\label{eq:yy10}
{\rho}_{\text{ds}}(t+ \delta t) &= \exp(\mathcal{L}\delta t)[\rho_d\otimes {\rho}_s(t)] \\
  &= \rho_d\otimes {\rho}_s(t)  + \delta t\left[ -iJ\ad(h_0) - \frac{\tilde{\gamma}}{2}\ad^2(\tilde{h}_{\text{ds}}) +
     \frac{\delta t}{2}\left( -iJ\ad(h_0) - \frac{\tilde{\gamma}}{2}\ad^2(\tilde{h}_{\text{ds}}) \right)^2   \right]\rho_d\otimes {\rho}_s(t) \notag \\  &\,\,\,\,\,\,+ \mathcal{O}(\delta t^3).
\end{align}
\end{subequations}
Similarly as before (see Appendix. \ref{sec:A1}), we  choose the following decomposition of the Hamiltonian operators and the density matrix of the detectors:
\begin{equation}\label{eq:yy11}
    h_0 = \begin{pmatrix}
    0 & A^\dagger \\
    A & 0
    \end{pmatrix}, \quad
    \tilde{h}_{\text{ds}} = \begin{pmatrix}
    G& B^\dagger \\
    B & C
    \end{pmatrix},  \quad \text{and} \quad
        \rho_d = \begin{pmatrix}
    1 & 0 \\
    0 & 0
    \end{pmatrix},
\end{equation}\label{eq:yy111}
with $G = G^\dagger$ and $C = C^\dagger$.
With this decomposition, all the relevant terms in Eq.~\eqref{eqyy10} are
     \begin{align}
        \ad( h_0)\rho_d\otimes {\rho}_s &= \begin{pmatrix} 0 & -{\rho}_sA^\dagger \\ A\rho_s & 0  \end{pmatrix}, \label{eq:adjunto1} \\
        \ad(\Tilde{h}_{\text{ds}})\rho_d\otimes \rho_s &= \begin{pmatrix} \ad(G) \rho_s & -\rho_sB^\dagger \\ B\rho_s & 0  \end{pmatrix},
           \\
           \ad^2(h_0)\rho_d\otimes \rho_s &= \begin{pmatrix} 
         \{G^\dagger G, \rho_s \} & 0 \\ 
         0 & -2G\rho_sG^\dagger
         \end{pmatrix},\\
         \ad^2(\Tilde{h}_{\text{ds}}) \rho_d\otimes \rho_s &= \begin{pmatrix} \ad^2(G)(\rho_s) + \{B^\dagger B, \rho_s \} & -\ad(G)\rho_sB^\dagger - G\rho_sB^\dagger +\rho_sB^\dagger C \\ B\,\ad(G)\rho_s - B\rho_sG + CB\rho_s & -2B\rho_sB^\dagger \end{pmatrix}.
    \end{align}
      We replace the above terms into Eq.~\eqref{eq:yy10} and take the partial trace over the detectors:
    \begin{align}\label{eq:yy12}
        \rho_s( t+ \delta t) &= \rho_s(t) - \frac{\delta t \tilde{\gamma}}{2}\left[ \ad^2(G)\rho_s(t) + \{B^\dagger B, \rho_s(t) \} - 2B \rho_s(t) B^\dagger\right] \notag \\    &\,\,\,\,\,\,+ \frac{J^2\delta t^2}{2}\left[ -2A\rho_s(t) A^\dagger + \{A^\dagger A, \rho_s(t)\} \right] \notag  \notag \\ &\,\,\,\,\,\,+  \frac{\delta t^2 \tilde{\gamma}}{2}\Tr_d\left[\frac{\tilde{\gamma}}{2}\ad^4(\tilde{h}_{\text{ds}})\rho_d\otimes \rho_s(t) - i\frac{J}{2}\{\ad(h_0), \ad^2(\tilde{h}_{\text{ds}}) \}\rho_d\otimes \rho_s(t)\right]  +\mathcal{O}(\delta t^3).
    \end{align}
The term containing the trace in the above equation will vanish in the WM limit, which is our next step. 
 Subtracting  $\rho_s(t)$ above, dividing by $\delta t$, and taking the limit $\delta t \rightarrow 0$ while keeping  $\gamma \coloneqq J^2\delta t$ fixed, gives Eq.~\eqref{eq:c13}.

\section{Stochastic master equation of the jump-diffusive-type I}\label{sec:jump-diffussive-sme}
In this appendix, we derive the SME in Eq.~\eqref{eq:esto1}. 

Let $\xi(t)$ be a white noise with normalization conditions 
\begin{equation}\label{eq:estoc1}
    \mathbb{E}[\xi(t)] = 0, \quad \mathbb{E}[\xi(t)\xi(s)] = \delta(t-s).
\end{equation}
Let $\delta X \coloneqq \int_0^{\delta t}\xi(s)\dd s = \xi(\delta t)\delta t$ be a Wiener increment on $\mathbb{R}$ with zero mean and variance $\delta t$. Following Ref. \cite{onorati2017mixing},
\begin{equation}\label{eq:estoc2}
    U(t) = \lim_{\delta t \to 0}\prod_{l = t/\delta t}^1\exp[Y(l\delta t) - Y\textbf{(}(l-1)\delta t\textbf{)}]U(0)
\end{equation}
is a stochastic process describing Brownian motion on the Lie group $U(4)$ and $Y(t)$ corresponds to Brownian motion on the Lie algebra $\mathfrak{u}(4)$. Above, $U(0) = I_{\text{ds}}$.  

The increments of the process $Y(t)$ can be related to the so-called \emph{Hamiltonian increments}
\begin{equation}\label{eq:estoc3}
    H(\delta t) \coloneqq i\Theta(\delta t),
\end{equation}
where 
\begin{equation}\label{eq:estoc4}
    \Theta(\delta t) \coloneqq \frac{1}{\delta t}\left[Y(l\delta t) - Y\textbf{(}(l-1)\delta t\textbf{)} \right]
\end{equation}
are the increments in $\mathfrak{u}(4)$. In our case, since the perturbed detector--system Hamiltonian is given by $H(t) = \sqrt{\frac{\gamma}{\delta t}}h_0 \,\,\,+ \xi(t)\sqrt{\tilde{\gamma}}\tilde{h}_{ds}$, we have 
\begin{equation}\label{eq:estoc5}
    \Theta(\delta t) = -\sqrt{\frac{\gamma}{\delta t}}h_0 - i\xi(\delta t) \sqrt{\tilde{\gamma}}\tilde{h}_{\text{ds}},
\end{equation}
which in turn gives that
\begin{equation}\label{eq:estoc6}
    U(l\delta t) - U\textbf{(}(l-1)\delta t\textbf{)} = -i\sqrt{\gamma \delta t}h_0 - i\sqrt{\tilde{\gamma}}\delta X \tilde{h}_{\text{ds}}.
\end{equation}
Replacing the above increment in Eq.~\eqref{eq:estoc2} allows us to write 
\begin{equation}\label{eq:estoc7}
    U(\delta t) = \exp[\Theta(\delta t)\delta t] = \exp(-i\sqrt{\gamma \delta t}h_0 - i\sqrt{\Tilde{\gamma}}\delta X \tilde{h}_{\text{ds}}),
\end{equation}
from which we will obtain the measurement operators $M_\alpha(\delta t) = \bra{\alpha}U(\delta t) \ket{0}$ for $\alpha \in \{0,1\}$.

Before performing a series expansion of $U(\delta t)$, we note that even though the product $\sqrt{\delta t}\delta X$ is of order $\delta t$, it has zero mean and variance $(\delta t)^2$, so we can neglect this product (in the It\^o sense) when compared with terms such as $\delta t$ and $\delta X$. With this observation, we have up to order $\delta t$
\begin{equation}\label{eq:estoc8}
    U(\delta t) = I_{ds} - i\sqrt{\gamma \delta t}h_0 - i\sqrt{\tilde{\gamma}}\delta X \tilde{h}_{\text{ds}} - \frac{\gamma \delta t h_0^2}{2} - \frac{\tilde{\gamma} \delta t \tilde{h}_{\text{ds}}^2}{2}.
\end{equation}
Using Eq.~\eqref{eq:yy11}, we get the following measurement operators
\begin{align}\label{eq:estoc9}
    M_0(\delta t) &= I_s - i\delta X \sqrt{\tilde{\gamma}} G - \frac{\gamma \delta  t}{2}A^\dagger A - \frac{\tilde{\gamma}^2\delta t}{2}(G^2 + B^\dagger B), \\
    M_1(\delta  t) &= -i\sqrt{\gamma \delta t} A - i\sqrt{\tilde{\gamma}}\delta X B.
\end{align}

For a given prior state, $\omega_s(t)$, the two possible states corresponding to a no-click and a click result, respectively, are
\begin{subequations}
\begin{align}
    \omega_{s,0}(t +\delta t) &=\frac{ M_0(\delta t) \omega_s(t) M_0^\dagger (\delta t)}{\Tr[M_0^\dagger (\delta t)M_0 (\delta t)]}\\
    &= \left\lbrace1 + \Tr\left[\gamma A^\dagger  A \omega_s(t)+ B^\dagger B \omega_s(t)\right]\delta t \right\rbrace\omega_s(t) - i\dd X_t\sqrt{\tilde{\gamma}}[G,\omega_s(t)] + \tilde{\gamma}\mathcal{D}(G)\omega_s(t) \delta t \notag \\ &\,\,\,\,\,\,- \frac{1}{2}\left\lbrace \gamma A^\dagger A + \tilde{\gamma} B^\dagger B, \omega_s(t) \right\rbrace\delta t, 
\end{align}
\end{subequations}
and
\begin{subequations}
    \begin{align}
         \omega_{s,1}(t +\delta t) &=\frac{ M_1(\delta t) \omega_s(t) M_1^\dagger (\delta t)}{\Tr[M_1^\dagger (\delta t)M_1 (\delta t)]}= \frac{\gamma A\omega_s(t)A^\dagger + \tilde{\gamma} B\omega_s(t)B^\dagger }{\Tr\left[\gamma A^\dagger  A \omega_s(t)+\tilde{\gamma} B^\dagger B \omega_s(t)\right]}.
    \end{align}
\end{subequations}

The time-dependent probability of a click result to occur is again too small compared with the one of a no-click result, i.e., $\Tr[M_1^\dagger (\delta t)M_1(\delta t)\omega_s(t)] \propto \delta t$. Thus, we can regard this event to be registered by an inhomogeneous Poissonian process $N(t)$, and we can capture it with the continuous one (the no-click result) in a single SME:
\begin{subequations}
\begin{align}
\dd \omega_s(t) &= \left[\omega_{s,0}(t +\delta t) - \omega_s(t) \right][1-\dd N(t)] +  \left[  \omega_{s,1}(t +\delta t) - \omega_s(t) \right]\dd N(t) \notag \\
&=-i\sqrt{\tilde{\gamma}}[G, \omega_s(t) ]\dd X_t + \tilde{\gamma}\mathcal{D}(G)\omega_s(t)\dd t + \gamma\mathcal{D}(A)\omega_s(t)\dd t +  \tilde{\gamma}\mathcal{D}(B)\omega_s(t)\dd t \notag \\ &\,\,\,\,\,+  \left[\frac{\gamma A\omega_s(t)A^\dagger + \tilde{\gamma} B\omega_s(t)B^\dagger }{\Tr\left[(\gamma A^\dagger  A+ \tilde{\gamma} BB^\dagger) \omega_s(t)\right]} - \omega_s(t) \right] \left( \dd N(t) -  \Tr\left[(\gamma A^\dagger A + \tilde{\gamma} B^\dagger B) \omega_s(t)\right]\dd t\right),
\end{align}
\end{subequations}
where we have set $\delta t = \dd t$ and $\dd N(t) = N(t + \dd t) - N(t)$, and similarly for $\dd \omega_s(t)$. The above equation is precisely Eq.~\eqref{eq:esto1}.

After taking the mean value over the clicks, i.e.,  $\mathbb{E}[\dd N(t)] = \Tr[(\gamma A^\dagger A + \tilde{\gamma} B^\dagger B)\omega_s(t) ]\dd t$, and then taking the average over trajectories and setting $\rho_s(t) \coloneqq \mathbb{E}[\omega_s(t)]$, we get Eq.~\eqref{eq:c13}.

\section{Stochastic master equation of the jump-diffusive-type II}\label{sec:jump_diffusive_type}

In this section, we demonstrate Eq.~\eqref{eq:pb6}.

We start with the general discrete stochastic master equation
\begin{equation}\label{eq:apo1}
    \omega_{k+1} = \sum_{i \in \mathcal{I}}\sum_{\alpha \in \{0,1\} }\frac{\mathcal{M}_\alpha^{(i)}\omega_k}{p(i,\alpha\vert \omega_k)}\mathds{1}^{k+1}_{i,\alpha},
\end{equation}
where $I =\{1,2\}$. Index $i = 1$ indicates a measurement in the correct basis $\mathcal{B}_1 = \{\ket*{\psi_0^{(1)}} = \ket{0}= (1,0), \ket*{\psi_1^{(1)}} =  (0,1)\}$, appearing with probability $p(1)$. Index $i=2$ corresponds to the basis $\mathcal{B}_2$ with kets
\begin{equation}
    \ket*{\psi_0^{(2)}} \coloneqq \frac{1}{\sqrt{2}}\left(\ket{0} + \ket{1} \right),\quad 
    \ket*{\psi_1^{(2)}} \coloneqq \frac{1}{\sqrt{2}}\left(\ket{0} - \ket{1} \right).
\end{equation}
The operations on a prior state $\omega_k$ are represented by
\begin{align}\label{eq:apo2}
\mathcal{M}_\alpha^{(i)}\omega_k &\coloneqq p(i)M_\alpha^{(i)}(\delta t)\omega_k M_\alpha^{(i)}(\delta t)^\dagger
\end{align}
with the corresponding measurement operators
\begin{equation}\label{eq:apo3}
M_\alpha^{(i)}(\delta t) \coloneqq \bra*{\psi_\alpha^{(i)}}\exp(-i \sqrt{\gamma \delta t}h_0)\ket{0}.    
\end{equation}

Let us define the new Lindblad operator $L \coloneqq i\sqrt{\gamma }A$. If the basis $\mathcal{B}_1$ appears, we will have a discrete SME of the jump-type again. Thus, we can write its contribution in Eq. \eqref{eq:apo1} as
\begin{equation}\label{eq:apo4}
    \omega_{k+1} = \left( \omega_k - \frac{\delta t}{2}\{L^\dagger L,\omega_k \} \right)\left( 1 + \delta t\expval*{L^\dagger L}_k \right)\mathds{1}_{1,0}^{k+1} + \frac{L\omega_kL^\dagger}{\expval*{L^\dagger L}_k}\mathds{1}_{1,1}^{k+1} + \sum_{\alpha}\frac{\mathcal{M}_\alpha^{(2)}\omega_k}{p(2,\alpha\vert \omega_k)}\mathds{1}^{k+1}_{2,\alpha}.
\end{equation}

Turning to the erroneous basis, the corresponding measurement operators are
\begin{equation}
    M_{\alpha = 0}^{(2)}(\delta t) = \frac{1}{\sqrt{2}}\left(I_s - \frac{\delta t}{2}L^\dagger L - \sqrt{\delta t}L \right), \quad \text{and} \quad
    M_{\alpha = 1}^{(2)}(\delta t) = \frac{1}{\sqrt{2}}\left(I_s - \frac{\delta t}{2}L^\dagger L + \sqrt{\delta t}L \right),
\end{equation}
which induce the following operations over the prior state $\omega_k$:
\begin{align}
    \mathcal{M}_{0}^{(2)}\omega_k &= \frac{p(2)}{2}\left[ \omega_k - \frac{\delta t}{2}\{L^\dagger L, \omega_k \} + \delta t L \omega_k L^\dagger - \sqrt{\delta t}\left( \omega_k L^\dagger + L\omega_k \right) \right], \\
     \mathcal{M}_{1}^{(2)}\omega_k &=\frac{p(2)}{2}\left[ \omega_k - \frac{\delta t}{2}\{L^\dagger L, \omega_k \} + \delta t L \omega_k L^\dagger + \sqrt{\delta t}\left( \omega_k L^\dagger + L\omega_k \right) \right].
\end{align}
Respectively, the probability of each operation is
\begin{align}
    \mathbb{E}[\mathds{1}_{2,0}^{k+1}] &= p(2,0\vert \omega_k) = \frac{p(2)}{2}\left( 1 - \sqrt{\delta t}\expval*{L^\dagger + L}_k \right), \\
    \mathbb{E}[\mathds{1}_{2,1}^{k+1}]&= p(2,1\vert \omega_k) = \frac{p(2)}{2}\left( 1 + \sqrt{\delta t}\expval*{L^\dagger + L}_k \right).
\end{align}
Replacing the above probabilities and the above operations into the diffusive part of Eq.~\eqref{eq:apo4} gives, after some calculation, 
\begin{multline}    \sum_{\alpha}\frac{\mathcal{M}_\alpha^{(2)}\omega_k}{p(2,\alpha\vert \omega_k)}\mathds{1}^{k+1}_{2,\alpha} = \omega_k(\mathds{1}_{2,0}^{k+1} + \mathds{1}_{2,1}^{k+1}) + \left[ \mathcal{D}(L)\omega_k - \expval*{L + L^\dagger}_k\left( L\omega_k + \omega_k L^\dagger  - \expval*{L + L^\dagger}_k\omega_k\right) \right]\left(\mathds{1}_{2,0}^{k+1} + \mathds{1}_{2,1}^{k+1} \right)\delta t \\
    + \left( L\omega_k + \omega_kL^\dagger - \expval*{L + L^\dagger}_k \omega_k  \right)\left(\mathds{1}_{2,1}^{k+1} - \mathds{1}_{2,1}^{k+1} \right)\sqrt{\delta t}.
\end{multline}
Now, by noting that
\begin{equation}
    \mathbb{E}[\mathds{1}_{2,0}^{k+1} + \mathds{1}_{2,1}^{k+1}] = p(2,0\vert \omega_k) + p(2,1\vert \omega_k) = \frac{p(2)}{2}(1 - \delta t\expval*{L + L^\dagger}) + \frac{p(2)}{2}(1 + \sqrt{\delta t}\expval*{L + L^\dagger}_k) = p(2),
\end{equation}
we can set, in the limit $\delta t = \dd t \rightarrow 0$,
\begin{equation}
    \mathds{1}_{2,0}^{k+1} + \mathds{1}_{2,1}^{k+1} \longrightarrow \chi_2(t).
\end{equation}
Similarly, as
\begin{equation}
    \mathbb{E}[\mathds{1}_{2,1}^{k+1} - \mathds{1}_{2,0}^{k+1}]\sqrt{\delta t} = p(2)\expval*{L+L^\dagger}_k, 
\end{equation}
we can set, in the limit $\delta t\rightarrow 0$,
\begin{equation}
    (\mathds{1}_{2,1}^{k+1} - \mathds{1}_{2,0}^{k+1})\sqrt{\delta t} \longrightarrow \chi_2(t)\dd Z(t),
\end{equation}
where $\dd Z(t)$ is a Wiener increment with  variance $\expval*{L + L^\dagger}_t\dd t$.

Replacing the above results corresponding to the diffusive parts in the total discrete SME, and after also setting $\mathds{1}_{2,1}^{k+1} + \mathds{1}_{2,0}^{k+1} \rightarrow \chi_1(t)$, and re-centering the wiener differential to $\dd W(t)  = \dd Z(t) - \expval*{L + L^\dagger}_t\dd t$, we get Eq.~\eqref{eq:pb6}.

\section{Multiple errors at the same time}\label{sec:multiple_errors_at_the_same_time}
In this appendix, we will show that considering several errors at a time just requires the addition of the corresponding dissipative and unitary channels of each error to the fully averaged dynamics. To see this, let us consider a generic version of the detector-system Hamiltonian
\begin{equation}
    H^{(i,j)}_{\text{ds}} = Jh^{(i,j)}_{\text{ds}} =  J\ket*{\Phi_d^{j,\perp}}\bra*{\Phi_d^j}\otimes A_i + \text{h.c.}
\end{equation}
appearing with probability $p(i,j)$ and where $\ket*{\Phi_d^j} = \cos(\theta_j/2)\ket{\Phi_d} + \sin(\theta_i/2)e^{i\varphi_j}\ket*{\Phi_d^{j,\perp}}$. (Note that this type of Hamiltonian encloses all our errors except the time-dependent ones. At the end of this section, we will argue about the inclusion of time-dependent errors.) Furthermore, let us assume that the detectors are prepared in the state 
\begin{equation}
    \tilde\rho_d = \begin{pmatrix} 
    a & b \\
    b^* & 1-a
    \end{pmatrix}.
\end{equation}
In a block-matrix form, the above dimensionless Hamiltonian reads
\begin{equation}
    h^{(i,j)}_{\text{ds}} = \begin{pmatrix}
        -\sin\frac{\theta_j}{2}\cos\frac{\theta_j}{2}(A_i+A_i^\dagger) & e^{-i\varphi_j}\left( \cos^2 \frac{\theta_i}{2}A_i^\dagger - \sin^2 \frac{\theta_j}{2}A_i \right) \\
        e^{i\varphi_j}\left( \cos^2 \frac{\theta_i}{2}A_i - \sin^2 \frac{\theta_j}{2}A_i^\dagger \right) & \sin\frac{\theta_j}{2}\cos\frac{\theta_j}{2}(A_i+A_i^\dagger)
    \end{pmatrix} \equiv \begin{pmatrix}
        \Lambda_{ij} & \Gamma_{ij} \\
        \Gamma_{ij}^\dagger & -\Lambda_{ij}
    \end{pmatrix}.
\end{equation}

The averaged dynamics of the steered system is given by
\begin{equation}
    \rho_s(t+\delta t)  = \rho_s(t) + \sum_{ij}p(i,j)\Tr_d\left[\left( -iJ\delta \ad h^{(i,j)}_{\text{ds}} - \frac{J^2\delta t^2}{2}\ad^2h^{(i,j)}_{\text{ds}} \right)\tilde\rho_d \otimes \rho_s(t) \right] + \mathcal{O}(J^3\delta t^3).
\end{equation}
Upon the replacement of $h^{(i,j)}_{\text{ds}}$ in this equation, we can find the formal derivative
\begin{multline}\label{eq:longeq}
    \partial_t \rho_s(t) = \lim_{\delta t \to 0} \, -iJ a\sum_{i,j}p(i,j)[\Lambda_{ij},\rho_s(t)] +iJ (1-a)\sum_{i,j}p(i,j)[\Lambda_{ij},\rho_s(t)] -iJ\sum_{i,j}p(i,j)[b^*\Gamma_{ij}+\hc,\rho_s(t)]  \\ + J^2\delta t\sum_{ij}p(i,j)\Biggl[ \mathcal{D}(\Lambda_{ij})+ a\mathcal{D}(\Gamma^\dagger_{ij})+(1-a)\mathcal{D}(\Gamma_{ij}) +b^*\mathcal{D}(\Gamma_{ij},\Lambda_{ij}) + b\mathcal{D}(\Lambda_{ij},\Gamma_{ij})\\ -b^*\mathcal{D}(\Lambda_{ij},\Gamma^\dagger_{ij}) - b\mathcal{D}(\Gamma^\dagger_{ij},\Lambda_{ij})   \Biggr]\rho_s(t),
\end{multline}
where $\mathcal{D}(A,B)\rho \coloneqq A\rho B^\dagger - \frac{1}{2}\{B^\dagger A,\rho \}$ is a mixed dissipative channel.

Following the sufficient condition of the WM limit as in Appendix~\ref{sec:A1}, we must require that $\mathcal{O}(\Lambda_{ij}) = \mathcal{O}(b) = \mathcal{O}(\sqrt{\delta t})$ and  $\mathcal{O}(\Gamma_{ij}) = \mathcal{O}(1).$ Hence, we will perform the following rescalings: 
\begin{equation}\label{eq:rescalings}
    \Lambda_{ij} = \lambda_{ij} \sqrt{\frac{\delta t}{\gamma}}, \quad b = \kappa \sqrt{\frac{\delta t}{\gamma}}e^{i\phi},
\end{equation}
where $\kappa, \phi \in \mathbb{R}$. After replacing Eq.~\eqref{eq:rescalings} in Eq.~\eqref{eq:longeq} and taking the limit, the resulting LE is
\begin{equation}\label{eq:generic}
    \partial_t\rho_s(t) = -i(2a-1)\sum_{ij}p(i,j)[\lambda_{ij},\rho_s(t)] -i\sum_{ij}p(i,j)[e^{i\phi}\Gamma_{ij}+\hc, \rho_s(t)] + \gamma \sum_{ij}p(i,j)\left[a \mathcal{D}(\Gamma_{ij}) + (1-a)\mathcal{D}(\Gamma^\dagger_{ij}) \right]\rho_s(t).
\end{equation}

Let us check that we can recover the previous LEs when individual errors are considered. First, let us assume that there is only one index $i$, which gives $A_i = A$, and that $\theta_j = \varphi_j = 0$ for all $j$. Hence, $\Lambda_{ij}=0$ and $\Gamma_{ij} = A$. Let also $a= 1$ in $\tilde \rho_d$. Therefore, with these conditions, we recover Eq.~\eqref{eq:a6.1}. Second, let us assume that $a= 1$ and $b=0$ in $\tilde \rho_d$, $\varphi_j = 0$ for all $j$, and $A_i = A$ for all $i$. Then, Eq.~\eqref{eq:generic} coincides with Eq.~\eqref{eq:finalle}. 

Overall, we conclude from Eq.~\eqref{eq:generic} that the coherences of $\tilde \rho_d$ (if they scale appropriately) always induce unitary dynamics, whereas its populations always induce dissipative dynamics. Let us note that an additional unitary channel is induced as long as $\tilde \rho_d$ is not maximally mixed and $\theta_i \notin \{0,\pi\}$.  

The above analysis combined both static and quenched errors, and we saw that each error contributes linearly to dissipative and unitary channels; that is, each of the channels can be obtained by considering each error at a time. Now, if we add time-dependent errors containing white noise, we concluded from Appendix~\ref{sec:jump-diffussive-sme} that they only add two extra dissipative channels. Thus, the same will occur if we consider them together with other errors. 

\end{widetext}
\bibliography{references} 

\end{document}